%% file: main.tex
\documentclass[final]{jfp-epi}
\jfpVolume{36}
\jfpArticle{9}
\jfpDOI{10.46298/jfp.17808}
\jfpYear{2026}
\received[Submitted]{July 2025}
\received[accepted]{June 2026}

\usepackage[utf8]{inputenc}
\usepackage[T1]{fontenc}
\usepackage{graphicx}
\usepackage{listings, textcomp, xcolor, soul}
\definecolor{HlGreen}{rgb}{0.988, 0.906, 0.275}
\sethlcolor{HlGreen}

\usepackage{xspace}

\usepackage{bbold, bbding}
\newcommand{\mathbbm}[1]{\mathbb{#1}}
\usepackage{multicol, tikz}
\usetikzlibrary{arrows,automata,positioning}

\newtheorem{remark}{Remark}

\input{macros.tex}
\newtheorem{theorem}{Theorem}[section]
\newtheorem{lemma}[theorem]{Lemma}
\newtheorem{example}[theorem]{Example}
\newtheorem{corollary}[theorem]{Corollary}

\begin{document}

\title{A simple and efficient implementation of strong~call~by~need by an abstract machine}

\author{Ma{\l}gorzata Biernacka}
 \orcid{0000-0001-8094-0980}
 \affiliation{%
   \institution{University of Wrocław, Faculty of Mathematics
     and Computer Science}
   \city{Wrocław}
   \country{Poland}
   \authoremail{mabi@cs.uni.wroc.pl}
}

 \author{Witold Charatonik}
 \orcid{0000-0001-7062-0385}
 \affiliation{%
   \institution{University of Wrocław, Faculty of Mathematics and Computer Science}
   \city{Wrocław,}
   \country{Poland}
  \authoremail{wch@cs.uni.wroc.pl}
}

 \author{Tomasz Drab}
 \orcid{0000-0002-6629-5839}
 \affiliation{%
   \institution{University of Wrocław, Faculty of Mathematics and Computer Science \& Google}
   \city{Wrocław}
   \country{Poland}
  \authoremail{tdr@cs.uni.wroc.pl \& tomaszdrab@google.com}
 }

\begin{abstract} 
  Strong call-by-need combines full normalization with the sharing
  discipline of lazy evaluation, yet no prior implementation achieved
  both simplicity and efficiency.  We introduce RKNL, an abstract
  machine that realizes strong call-by-need with bilinear
  overhead. The machine has been derived automatically from a
  higher-order evaluator that uses the technique of memothunks to
  implement laziness. By employing an off-the-shelf transformation
  tool implementing the ``functional correspondence'' between
  higher-order interpreters and abstract machines, we obtained a
  simple and concise description of the machine.  We prove that the
  resulting machine conservatively extends the lazy version of the Krivine
  machine for the weak call-by-need strategy, and that it simulates
  the normal-order strategy in a bilinear number of steps, i.e.,
  linear in both the number of $\beta$-reductions and the size of the
  input term.
\end{abstract}






\maketitle

\section{Introduction}
Lambda calculus is a well-established model for functional programming
languages. Practical implementations rely on a deterministic
reduction strategy (also known as evaluation strategy), which restricts
the general $\beta$-reduction rule to specific positions 
within a term. Popular incarnations of the lambda calculus include the
full-fledged programming languages OCaml and Haskell, each of which
adopts a different evaluation strategy: weak call-by-value (CbV) and
weak call-by-need (CbNd), respectively.

A strategy is called \emph{weak} if it does not descend into and
reduce under lambda abstractions; hence, any lambda abstraction is
treated as a value. The most widely used strategy in practical
programming is weak CbV, where each argument is evaluated exactly
once, before being passed to the function, even if its result is not
used in the function body. In contrast, weak call-by-name (CbN)
performs $\beta$-reduction without evaluating the argument
beforehand. Consequently, an argument may be evaluated multiple times
if it is \emph{needed} more than once, i.e., if the variable it
replaces occurs more than once in the function body in positions not
eliminated by $\beta$-reduction. To mitigate this inefficiency,
practical implementations (such as Haskell) employ \emph{call-by-need}
evaluation, which lazily evaluates arguments and memoizes their
results for reuse.  In this strategy each argument is evaluated at
most once, only if it is needed.
  

In some applications, weak strategies are insufficient. The
development of proof assistants based on dependent type theory (such
as Rocq or Agda) has prompted deeper investigation into \emph{strong}
reduction in the lambda calculus and its implementations. Strong
reduction computes full normal forms of terms (required for type
checking in proof assistants), necessitating reductions inside lambda
abstractions and handling of open terms. A natural way to obtain a
strong strategy from a weak one is to first weakly reduce the term to
a value, then recursively reduce within function bodies and argument
positions of \emph{rigid terms}—terms of the form $x\ t_1 \ldots t_n$,
where the variable $x$ will never be substituted and thus cannot
create new $\beta$-redexes. Strong strategies defined this way are
conservative extensions of their weak counterparts.

In this work, we show the first efficient abstract machine for Strong
CbNd: the RKNL machine. It is derived in a mechanical way using
functional programming techniques, and we give its formal account. Our
goal is twofold: (a) to provide a simple and concise definition of an
abstract machine for Strong CbNd, and (b) to facilitate reasoning
about the key properties of the machine and its underlying strategy,
including its correctness and complexity. To achieve these goals, we
adopt the derivational approach of~\citet{Biernacka-al:PPDP21},
previously applied to Strong CbV. We begin with a standard
higher-order normalization-by-evaluation function known to correspond
to normal-order reduction~\cite{BiernackaBCD20}. This normalizer is
then optimized by the standard technique of memoization to avoid
recomputation of intermediate values. Finally, we apply the
\emph{functional correspondence} transformation, as implemented in
Racket by~\citet{BuszkaB21:LOPSTR21}, to systematically derive a
concise abstract machine. We prove that the resulting RKNL machine
correctly simulates normal-order reduction, and as a measure of its
efficiency we give a simple proof that it is {\em reasonable} in the
sense of Accattoli et al.
\cite{Accattoli-DalLago:LMCS16,DBLP:conf/lics/AccattoliCC21,DBLP:journals/scp/AccattoliG19}. Our
development illustrates a principled and versatile approach to
deriving abstract machines from well-understood high-level
semantics. To reason about efficiency, we apply a variant of amortized
cost analysis using a potential function~\cite{Okasaki:99}.

The main contributions of this work are as follows:
\begin{enumerate}
\item A novel abstract machine for Strong CbNd, with a simple and clear design (only 11
  transition rules), derived automatically with a
  semantic transformer of \citet{BuszkaB21:LOPSTR21} from an NbE
  normalizer.
\item A proof that the machine is reasonable for time, i.e., that the number of steps performed by the machine is bounded by a
  polynomial---in this case bilinear---function in the number of $\beta$-steps
  and in the size of the initial term.
\item A proof that the machine conservatively extends a weak
  call-by-need machine.
\item An improvement in simulation overhead for Strong CbNd over the existing machine for this strategy~\cite{BiernackaC19},
  from exponential (unreasonable) to polynomial (reasonable). These machines are currently the only artefacts formalizing the strong call-by-need strategy, and amenable to complexity analysis.
\item An improvement of the overall simulation overhead of normal
  order by Useful MAM~\cite{Accattoli16} from
  quadratic~\cite{DBLP:conf/lics/AccattoliCC21} to quasibilinear in
  the RAM model.
\end{enumerate}

\paragraph*{Motivation.}
The study of call-by-need calculi is generally motivated by the quest
for efficient implementations of $\beta$-reduction in the
$\lambda$-calculus. It is worth noting that our contribution here is
primarily \emph{theoretical}. We start from a well-understood,
high-level specification (an NbE normalizer for normal order) and show
how to derive—almost mechanically—a compact abstract machine whose
correctness and complexity can be reasoned about with pen-and-paper
proofs, or it can be implemented or formalized with reasonable
effort. Furthermore, this approach can provide the basis to analyse
other strategies along similar lines, allowing us to derive semantic
artefacts that are correct by construction. The emphasis is on
principles: a transparent derivation pipeline, a small rule set, and a
clean complexity analysis; we prove the bilinear bound for machine
steps, which guarantees that the normal form of a lambda term can be
computed in time subquadratic in the number of normal-order reduction
steps. A unit of time in our analysis is also abstract---it is a
machine step, rather than, say, a millisecond as in realistic
benchmarking. In this sense, our work sits closer to the semantic and
proof-oriented lines of
\citet{Balabonski-al:ICFP17,Balabonski-al:FSCD21,BiernackaC19,Accattoli16,DBLP:conf/lics/AccattoliCC21}
than to large, hand-engineered evaluators. Furthermore, because a
compact, analysable CbNd machine yields a practical \emph{cost model}
for the $\lambda$-calculus, our work links naturally to the broader
agenda of Implicit Computational Complexity and to recent efforts that
treat the $\lambda$-calculus as a manageable computational model for
complexity-theoretic formalizations~\cite{GaeherKunze:2021:Cook-levin,Forster-al:ITP2021}.

\paragraph*{Related work.}
Given their practical motivation, strong strategies have been the
target of efficient implementation efforts in the form of abstract
machines. The first such machines were proposed by Crégut: the KN
machine—a strongly reducing extension of the Krivine machine that
performs normal-order reduction—and its lazy variant
KNL~\cite{Cregut:HOSC07}, which introduces sharing of head normal
forms but does not achieve full normalization, and thus is not strong
in our sense. The KN machine was later deconstructed into a
normalization-by-evaluation (NbE) function for call-by-name
semantics~\cite{Munk:MS,BiernackaBCD20}. More recently, Accattoli et
al. have explored strong strategies within the Linear Substitution
Calculus, developing abstract machines with complexity
bounds~\cite{Accattoli-Barras:PPDP17,DBLP:journals/scp/AccattoliG19,Accattoli-Coen:LICS15}. Parallel
to this, a derivational approach initiated by Danvy et
al.~\cite{Ager-al:PPDP03,Ager-al:IPL04,Pirog-Biernacki:Haskell10} has
been extended to strong strategies, enabling systematic connections
between high-level semantics and abstract machines
\cite{BChZ-fscd17,BiernackaC19,BiernackaBCD20}.

The canonical implementation of weak call-by-need is the STG machine
underlying Haskell~\cite{PeytonJones:JFP92}, which uses a store to
memoize computed values. This machine has been systematically derived
from a natural semantics for the STG
language~\cite{Pirog-Biernacki:Haskell10}, aligning with the
store-based formal semantics of lazy languages pioneered by
Launchbury~\cite{Launchbury:POPL93} and further refined by
Sestoft~\cite{Sestoft:JFP97}. An alternative, purely syntactic
approach without an explicit store was proposed
by~\citet{Ariola-al:POPL95}. In this tradition lie recent efforts to
formalize the Strong CbNd
strategy~\cite{Balabonski-al:ICFP17,Balabonski-al:FSCD21,Barenbaum-al:PPDP18,BiernackaC19}. In
particular, \citet{Balabonski-al:ICFP17} proposed a reduction semantics
for Strong CbNd using a calculus with explicit substitutions. More
recently, they refined this into a strategy that yields shorter
normalizing sequences than previously known
calculi~\cite{Balabonski-al:FSCD21}. This semantics was
operationalized into a refocusable reduction semantics, from which an
abstract machine was derived by~\citet{BiernackaC19}. While this
machine precisely implements the corresponding strategy, it operates
over a language with explicit substitutions or let-constructors for
sharing and is not directly optimized for efficient execution—it
contains 24 transition rules and lacks constant-time
transitions. Meanwhile, the Rocq proof assistant includes a lazy and
efficient machine for strong reduction, though it is a large, complex
artefact whose strategy remains to be formally characterized.

As a benchmark for efficiency, Accattoli’s Useful
MAM~\cite{Accattoli16} is both simple (10 transition rules) and
\emph{reasonable}: it simulates normal-order (strong call-by-name)
reduction with a number of steps quadratic in the number of
$\beta$-steps and linear in the input size. A recent improvement is
the SCAM machine~\cite{DBLP:conf/lics/AccattoliCC21}, which simulates
strong call-by-value with a bilinear bound and comprises only 9 rules,
although it requires precompilation. However, no similarly reasonable
abstract machine has been proposed for strong call-by-need. The
difficulty lies not only in the strength of the strategy but also in
the complex sharing behaviour required by call-by-need
evaluation~\cite{DBLP:conf/csl/AccattoliL22}.

To further position this work, let us consider the Higher-order
Virtual Machine (HVM)~\cite{hvm}.  A visible difference is that HVM's
computation model is the Interaction Nets \cite{Lafont97}, while our
machine works directly on the $\lambda$-calculus syntax. Furthermore,
HVM aims at performance measured with benchmarks and compared to the
Glasgow Haskell Compiler (GHC).  Our work does include empirical
measurements, but only as supporting evidence for a principled
complexity story. In short, HVM is an engineering artefact optimized
for speed, while RKNL is a minimal core with asymptotic guarantees.

A separate line of (theoretical) research investigates \emph{full
laziness}, or \emph{fully lazy
sharing}~\cite{Balabonski:POPL12,DBLP:conf/fscd/AccattoliMPC25}.
This approach aims for even
more sharing than call-by-need, but at the cost of significantly more
complex calculi, and has so far treated \emph{weak} evaluation
only. In fully lazy sharing, values are further split into a part that
has to be duplicated, and a part that can be shared. By contrast, we
study \emph{strong} evaluation and relate our results to normal order
in the pure $\lambda$-calculus. Extending our approach to encompass
fully lazy sharing is an interesting direction for the next step in
our research program.

The comparison of RKNL with existing machines and calculi for Strong
CbNd is subtle, and could be further investigated
(cf. Figure 13 in \cite{Danvy-Zerny:PPDP13}). In summary, we can state
the following main differences with respect to RKNL: 
\begin{itemize}
\item Crégut's KN from
\cite{Cregut:HOSC07} performs normal-order reduction, but is not
call-by-need and suffers from the exponential overhead
\item Crégut's KNL from \cite{Cregut:HOSC07} is strong and call-by-need, but performs
only head reduction instead of full reduction.
\item
\citet{Balabonski-al:FSCD21} gives a calculus and a strategy that
performs optimization similar to KNL, but is not an abstract machine.
\item
the machine from \cite{BiernackaC19} implements
earlier calculus from \cite{Balabonski-al:ICFP17},
and also has exponential overhead.
\item the machine underlying the Rocq prover is a complex artefact handling full Calculus of Inductive Constructions; its foundational properties have been studied in a simpler setting by \citet{Accattoli-Barras:PPDP17}.
\end{itemize}

\paragraph*{Outline.}
In Section~\ref{sec:prelim}, we recall notation used in the rest of
the paper to reason about reduction strategies. In
Section~\ref{sec:derivation}, we describe how our abstract machine is
derived from a~higher-order NbE normalizer. In
Section~\ref{sec:machine}, we present the machine and discuss its
transitions. We also show an elaborate example of its execution as
well as examples of empirical execution lengths for several term
families. In Section~\ref{sec:proofs}, we discuss correctness of the
derived machine. We prove its soundness and completeness, we show that
it simulates normal order in bilinear number of steps and that it
conservatively extends a~weak CbNd machine. In
Section~\ref{sec:conclusion}, we conclude.

\section{Preliminaries}
\label{sec:prelim}
Terms $t$ in the lambda calculus are defined
with the following grammar:
\begin{alignat*}{3}
t &::= x \alt \tapp{t_1}{t_2} \alt \tlam x t
\end{alignat*}
where $x$ ranges over an infinite, countable set of identifiers. As
usual, we assume that application is left-associative and has higher
precedence than abstraction, so that $\tlam x {\tapp{\tapp{x}{y}}{z}}$
is $\tlam x {(\tapp{(\tapp{x}{y})}{z})}$. We define free and bound
variables in a term in the usual way.  We say that two terms $t_1$ and
$t_2$ are $\alpha$-equivalent, and we then write $t_1 =_{\alpha}t_2$,
if $t_1$ and $t_2$ are equal up to renaming of bound variables.

Contexts can be seen as terms with exactly one ``hole''
(denoted $\hole$) which can occur in any position within a term:
\begin{alignat*}{3}
C &::= \hole \alt \tapp C t \alt \tapp t C \alt \tlam x C
\end{alignat*}
Given a context $C$ and a term $t$ to {\em plug} in its hole, we can
reconstruct the intended term (denoted $\plug C t$) by defining the plugging
function as follows:
\begin{center}
$\plug {\hole} s = s$ \hfill
$\plug {(\tapp C t)} s = \tapp {\plug C s} t$ \hfill
$\plug {(\tapp t C)} s = \tapp t {\plug C s}$ \hfill
$\plug {(\tlam x C)} s = \tlam x {\plug C s}$
\end{center}

\subsection{Reduction semantics in the lambda calculus}
\label{sec:reduction}
Computation in the lambda calculus consists in performing $\beta$-contraction
$$\tapp {(\tlam x {t_1})} {t_2} \; \rightharpoonup_\beta \; \subst x {t_2}
{t_1}$$ in some positions within the term (here $\subst x {t_2}
{t_1}$ denotes the usual, capture-avoiding substitution of $t_2$ for
$x$ in $t_1$).  In order to capture formally the reduction relation on
terms, we can use an explicit representation of these positions using
contexts. Thus,
$$\plug C {\tapp {(\tlam x {t_1})} {t_2}} \; \rightarrow_{\beta} \; \plug C
{\subst x {t_2} {t_1}}$$ defines one step of the full,
nondeterministic reduction relation in the lambda calculus in a succinct way. This semantic format is called \emph{reduction semantics} \cite{Felleisen-Hieb:TCS92}.

\begin{example}
A term $\tapp{\tapp II}{(\tapp II)}$, where $I := \tlam x x$, can be reduced in one step by full $\beta$-reduction in two ways: $\tapp{I}{(\tapp II)} \leftarrow_\beta \tapp{\tapp II}{(\tapp II)} \to_\beta \tapp{\tapp II}{I}$
because both subterms $\tapp I I$ are contractible: $\tapp I I  \rightharpoonup_\beta I$, and both contexts $\tapp{\hole}{(\tapp II)}$ and $\tapp{\tapp II}{\hole}$ are correctly derived from nonterminal $C$ w.r.t. to the given grammar.
At the same time, the whole term is not contractible: $\tapp{\tapp II}{(\tapp II)} \not{\phantom{.}}\hspace{-3.2mm}\rightharpoonup_{\beta} \tapp{I}{(\tapp II)}$ because it is not an application of an abstraction.
\end{example}

In the following, the reflexive-transitive closure of any one-step
reduction relation $\rightarrow{}{}$ is denoted by $\twoheadrightarrow
{}{}$
(possibly with some decorations on arrows) and the
reflexive-symmetric-transitive closure is denoted by $=$, and is
called \textit{conversion}.

\begin{example}
The term $\tapp{\tapp II}{(\tapp II)}$ reduces (in three steps) to identity:
$\tapp{\tapp II}{(\tapp II)} \twoheadrightarrow_{\beta} I$ because
$\tapp{\tapp II}{(\tapp II)} \to_\beta \tapp{\tapp II}{I} \to_\beta \tapp{I}{I} \to_\beta I$.
The first reduct $\tapp{\tapp II}{I}$ does not reduce to the alternative reduct $\tapp{I}{(\tapp II)}$,
i.e., $\tapp{\tapp II}{I} \not{\phantom{.}}\hspace{-3.2mm}\twoheadrightarrow_{\beta} \tapp{I}{(\tapp II)}$,
but they are $\beta$-convertible:  $\tapp{\tapp II}{I} =_{\beta} \tapp{I}{(\tapp II)}$
because there exists a path of $\beta$-reductions and $\beta$-expansions between them:
$\tapp{\tapp II}{I} \leftarrow_\beta \tapp{\tapp II}{(\tapp II)} \to_\beta \tapp{I}{(\tapp II)}$.
\end{example}

Juxtaposition of two relations denotes
their composition, \eg, $s \, \twoheadrightarrow_{\beta} =_\alpha\, t$
means that $\exists t'.\; s\,\twoheadrightarrow_{\beta} \,t'\,
=_\alpha \,t$.

\begin{example}\label{ex:alpha}
Thanks to the usage of capture-avoiding substitution, all $\alpha$-renamings
needed to perform a $\beta$-contraction are implicitly performed under the
definition of \text{$\beta$-contraction}. For example,
$\tapp {(\tlam x {\tlam y {\tapp x y}})} y \rightharpoonup_\beta {\tlam z {\tapp y z}}$.
However, technically,
$\tlam x {\tapp {\tapp I I} x} \not{\phantom{.}}\hspace{-3.2mm}\twoheadrightarrow_{\beta} \tlam z {\tapp I z}$
because the renaming of $x$ to $z$ is outside of the capture-avoiding substitution.
Information about potential extra $\alpha$-renamings can be added on any side of the $\beta$-reduction:
\text{$\tlam x {\tapp {\tapp I I} x} \twoheadrightarrow_{\beta} =_\alpha\ \tlam z {\tapp I z}$}.
This way we can easily talk about terms up to renaming of bound variables.
\end{example}

If we want to impose a specific, deterministic reduction strategy, we
can write a more precise grammar of reduction contexts which narrows
down the positions where a computation step is allowed.  For example,
the call-by-name strategy, which is also known as a weak-head
reduction that applies $\beta$-contraction only in leftmost-outermost
positions, can be specified by the following definition of contexts:

$$ E ::= \hole \alt \tapp E t$$ and the
corresponding reduction relation is as follows:
$$\plug E {\tapp {(\tlam x {t_1})} {t_2}} \;
\stackrel{\cbn}{\to} \; \plug E {\subst x {t_2}
  {t_1}}$$
  
\begin{example}
The term $\tapp{\tapp II}{(\tapp II)}$ reduces in one step in call by name to $\tapp{I}{(\tapp II)}$, i.e., $\tapp{\tapp II}{(\tapp II)} \stackrel{\cbn}{\to} \tapp{I}{(\tapp II)}$ because the context $\tapp{\hole}{(\tapp II)}$ is a call-by-name context, i.e., it is derived from the nonterminal $E$,
while the context $\tapp{\tapp II}{\hole}$ is not because
it cannot be derived from $E$ and thus $\tapp{\tapp II}{(\tapp II)}$ does not reduce in call by name to $\tapp{\tapp II}{I}$, i.e., $\tapp{\tapp II}{(\tapp II)} \not{\phantom{.}}\hspace{-3.2mm}\stackrel{\cbn}{\to} \tapp{\tapp II}{I}$.
\end{example}
  
The generalization of call-by-name strategy to strong reduction does not stop on an
abstraction, but instead iterates the same strategy inside its body
and in arguments to neutral terms. It can be defined with contexts $N$
as follows:

\begin{alignat*}{3}
\NO \ni N &::= \overline{N} \alt \tlam x N\\
\overline{N} &::= \hole \alt \tapp {\overline{N}} t \alt \tapp a N
\end{alignat*}
where $a$ stands for neutral terms and $\nf$ are normal forms:
\begin{alignat*}{7}
\neu &::= & \tapp \neu \nf &\alt x\\
\nf &::= & \; \tlam x \nf &\alt \neu
\end{alignat*}

The definition of normal terms with the grammar (the nonterminal $n$) coincides with the definition of normal terms as terms that have no contractible subexpressions, i.e., normal forms of $\beta$-reduction, i.e., terms $t$ such that $t \not{\phantom{.}}\hspace{-3.2mm}\to_{\beta}$.

\begin{example}
The identity $I$ is a normal term (i.e.,
$I \not{\phantom{.}}\hspace{-3.2mm}\to_{\beta}$), while $\tapp I I$ is not
(i.e., $\tapp I I \to_{\beta} I$).
Similarly, $\tlam y I$ is a normal term (i.e.,
$\tlam y I \not{\phantom{.}}\hspace{-3.2mm}\to_{\beta}$), while
$\tlam y {(\tapp I I)}$ is not a normal term:
(i.e., $\tlam y {(\tapp I I)} \to_{\beta} \tlam y I$).
Moreover, $\tapp y I$ is a neutral term, while $\tapp y {(\tapp I I)}$ is not.
Every neutral term is a normal term that is not an abstraction.
\end{example}

This strong strategy is known as the normal-order strategy, and it
computes full normal forms $\nf$ of lambda terms.
We define one step of normal-order reduction as follows:
$$ \plug {N} {\tapp {(\tlam x {t_1})} {t_2}} \;
\stackrel{\no}{\to} \; \plug {N} {\subst x {t_2}
  {t_1}}$$

The grammar of contexts $N$ as defined above builds them from the
outside in. It is often convenient to build contexts inside-out and
think about them as stacks of single frames, where the top of the
stack is the frame surrounding the hole:

\begin{alignat*}{3}
\NO \ni {N} &::= \underline N \alt {N}[\tapp \hole t]\\
\underline N &::= \hole \alt \underline N[\tlam x \hole] \alt {N}[\tapp \neu \hole]
\end{alignat*}

This inside-out grammar makes it explicit how contexts are constructed
and deconstructed in abstract machines (and we directly use it in
Lemma \ref{lem:stack} later). \citet{Garcia-Nogueira:SCP14} presented a
translation from regular grammars of contexts to nondeterministic
finite automata, which was later used to transform inside-out and
outside-in grammars into each other
\citep{BiernackaBCD20}. Figure~\ref{fig:automata} shows the two
automata corresponding to the two grammars above. The difference
between them is that the directions of transitions are reversed and
the initial and final states are swapped. Therefore the two automata
recognize the same set of stacks, they just read them in the reversed
order: the inside-out automaton reads a stack starting from the frame
representing the hole of the context and moves towards the topmost
symbol, while the outside-in automaton starts from the topmost frame
and moves towards the hole. In consequence, both grammars (with $N$ as
the starting non-terminal) generate the same set of contexts.

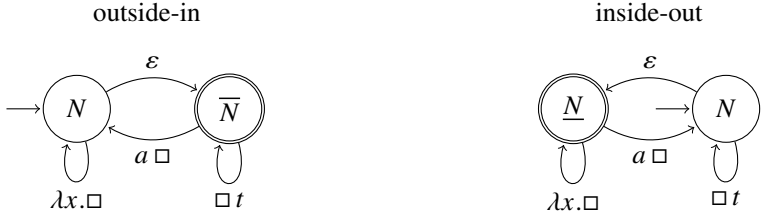
\begin{figure}[htb]
\begin{multicols}{2}
\begin{center}
outside-in\\
\vspace{2ex}
\begin{tikzpicture}[shorten >=1pt,node distance=2cm,on grid,auto, initial text={}]
\node[state,initial]   (N)              {$N$};
\node[state,accepting] (R) [right of=N] {$\overline{N}$};

\path[->]
(N) edge [loop below] node {$\tlam x \hole$}    ( )
    edge [bend left]  node {$\varepsilon$}      (R)
(R) edge [loop below] node {$\tapp \hole t$}    ( )
    edge [bend left]  node {$\tapp \neu \hole$} (N);
\end{tikzpicture}\phantom{$\longrightarrow$}
\end{center}

\columnbreak

\begin{center}
inside-out\\
\vspace{2ex}
\begin{tikzpicture}[shorten >=1pt,node distance=2cm,on grid,auto, initial text={}]
\node[state,accepting]   (F)              {$\underline{N}$};
\node[state,initial]     (N) [right of=F] {$N$};

\path[->]
(F) edge [loop below]         node {$\tlam x \hole$}    ( )
    edge [bend right, below]  node {$\tapp \neu \hole$} (N)
(N) edge [loop below]         node {$\tapp \hole t$}    ( )
    edge [bend right, above]  node {$\varepsilon$}      (F);
\end{tikzpicture}
\end{center}
\end{multicols}
\caption{Automata for the grammars of contexts}
\label{fig:automata}
\end{figure}

\begin{example}
The term $\tapp I I$ can be reduced in call by name: $\tapp I I \stackrel{\cbn}{\to} I$ because the empty context $\hole$ is a valid call-by-name context.
It can be also reduced in normal order: $\tapp I I \stackrel{\no}{\to} I$ because every call-by-name context is a normal-order context.
However, neither $\tlam y {\tapp I I}$ nor $\tapp y {(\tapp I I)}$ can be reduced in call by name:
$\tlam y {\tapp I I} \not{\phantom{...}}\hspace{-5mm}\stackrel{\cbn}{\to}$ and $\tapp y {(\tapp I I)} \not{\phantom{...}}\hspace{-5mm}\stackrel{\cbn}{\to}$
because $\tlam y \hole$ and $\tapp y \hole$ are not call-by-name contexts.
Nonetheless, these contexts are normal-order contexts (derivable from the nonterminal $N$) so both terms are normalized by normal order: $\tlam y {\tapp I I} \stackrel{\no}{\to} \tlam y I \not{\phantom{.}}\hspace{-3.2mm}\to_{\beta}$ and $\tapp y {(\tapp I I)} \stackrel{\no}{\to} \tapp y I \not{\phantom{.}}\hspace{-3.2mm}\to_{\beta}$.
\end{example}

The grammar of normal-order contexts restricts them in a way that the
reduction may take place only in the leftmost-outermost position.  For
example, in a term $\tlam y {\tapp {(\tlam z {\tapp I I})} I}$, normal
order contracts the subterm $\tapp {(\tlam z {\tapp I I})} I$ but
cannot contract the subterm $\tapp I I$, i.e.,
$ \tlam y {\tapp {(\tlam z I)} I}\;
\not{\phantom{.}}\hspace{-3.7mm}\stackrel{\no}{\leftarrow} \tlam y
{\tapp {(\tlam z {\tapp I I})} I} \stackrel{\no}{\to} \tlam y {\tapp I
  I}$ because
$\tlam y {\tapp {(\tlam z \hole)} I} = \hole[\tlam y \hole][\tapp
\hole I][\tlam z \hole]$ is not a normal-order context.  Contexts
derivable from the nonterminal $\underline{N}$ can be seen as
non-applicative (cf. \cite{Accattoli-DalLago:LMCS16}) normal-order
contexts, while derivable from $\overline{N}$ as normal-order contexts
not starting with an abstraction.  Many other strategies defined via
reduction semantics and their properties can be found in
\cite{Biernacka-al:ITP22}.

\subsection{The call-by-need strategy}
\label{sec:cbnd_strategy}

The ``lazy'' variant of the normal-order strategy (or of the weak
call-by-name strategy) needs to use some form of sharing.
To our knowledge, it has not been expressed with a simple restriction of
general contexts with the $\beta$-contraction without any extension of the
syntax of lambda terms.
It seems that expression of call by need by the grammar of contexts
cannot be done easily in the pure $\lambda$-calculus because of the tension
between the necessity to substitute the needy head variable and
the necessity to postpone substitution of the remaining ones.

Therefore, in order to provide a reduction semantics for it, some
syntax extension is often used.  We can either use a language with
explicit
substitutions~\cite{Balabonski-al:ICFP17,Balabonski-al:FSCD21} or
let-constructs~\cite{Ariola-al:POPL95} to explicitly handle bindings
of terms to variables, or simulate a store structure within the
reduction rules~\cite{Biernacka-Danvy:TCS07}.  Another known
possibility is to use a nonstandard contraction rule with a complex
grammar of context \cite{DBLP:conf/esop/ChangF12}.  In this paper, we
use an operational semantics for the weak call-by-need strategy in the
form of an abstract machine with explicit store
from~\cite{Danvy-Zerny:PPDP13}, presented in
Section~\ref{sec:conservative}.

\section{Higher-order normalizers}
\label{sec:derivation}
\subsection{Normal-order normalizer}
\label{sec:normalizer}

In Listing~\ref{fig:no-normalizer} we show a Racket program that
performs normal-order (strong call-by-name) normalization of
closed lambda terms into full normal form, and
is an instance of a normalization-by-evaluation algorithm. This
program is a slightly modified version of an OCaml NbE normalizer
reconstructed from Cr\'egut's abstract machine KN
in~\citep{BiernackaBCD20}, and similar to the
normalizer of \citet{Filinski-Rohde:RAIRO05}.

\begin{lstlisting}[float,caption={A higher-order normal-order normalizer},label={fig:no-normalizer}, numbers=left, numbersep=5pt, xleftmargin=10pt]
(struct *var (x  ) #:transparent)
(struct *app (l r) #:transparent)
(struct *lam (x t) #:transparent)
(struct  Abs (x t))

(define (reify v)
  (match v
    [(Abs x f)  (let ([x1 (gensym x)])
                  (*lam x1 (reify (f (λ () (*var x1))))))]
    [t          t]))

(define (apply-value v w)
  (match v
    [(Abs _ f)  (f w)]
    [t          (*app t (reify (w)))]))

(define (eval e t)
  (match t
    [(*var x)   ((hash-ref e x))]
    [(*app t u) (apply-value (eval e t) (λ () (eval e u)))] 
    [(*lam x t) (Abs x (λ (v) (eval (hash-set e x v) t)))]))

(define (normalize t)
  (reify (eval (hasheq) t)))
\end{lstlisting}
The idea of normalization by evaluation is to evaluate expressions of
an object language (here, lambda terms) into a semantic domain of
\emph{values} in such a way that equivalent terms are mapped to the
same value, and from the values it is possible to extract a syntactic
representation of the normal form (of the equivalence class).  Here,
our metalanguage is Racket and the domain of \emph{values} consists of
neutral terms and meta-level functions corresponding to lambda
abstractions.  Intuitively, meta-level functions should take values
and return values. However, Racket is an eager language, and in
normal-order strategy we do not want to evaluate an argument before
the evaluation of the caller function, so the evaluation of arguments
must be delayed. Therefore, meta-level functions operate on {\em
  thunks} (\ie, delayed values) and return values, as
in~\citep{Filinski-Rohde:RAIRO05}.

The algorithm is based on three functions: {\tt eval} that translates
a~lambda term to a~semantic value and calls its evaluation;
\texttt{apply-value} that says how to evaluate an application of a
meta-level function {\tt v} to a thunk {\tt w}; and \texttt{reify}
that translates semantic values to lambda terms in normal form.

Names {\tt *var}, {\tt *app}, {\tt *lam} are just constructors of
lambda terms. Environments are dictionaries that associate names of
free variables with thunks.  In Racket, dictionaries are accessed by
{\tt hash-ref} and updated by {\tt hash-set} functions.

To increase the readability of the final result, semantic functions
are coupled with the original variable name (\texttt{x} in line 21) by
the constructor {\tt Abs} and later used in the generation of fresh
variables in function \texttt{reify} in line 8; since name generation
is called inside reification, all bound variables in the final result
are different.  The $\alpha$-renaming in full reduction is obligatory
because otherwise the term
$\tapp {(\tlam z {\tapp z z})} {(\tlam x {\tlam y {\tapp x y}})}$
would incorrectly reduce to $ {(\tlam y {\tlam y {\tapp y y}})}$.

To normalize a closed term it is enough to evaluate it in the empty
environment ({\tt hasheq}) and to reify its value---this is
done in line 24.

\subsection{An optimized, call-by-need normalizer}
\label{sec:cbndNormalizer}
\begin{lstlisting}[float,caption={A call-by-need normalizer},label={fig:cbnd-normalizer}, numbers=left, numbersep=5pt, xleftmargin=10pt]
(struct todo (t))
(struct done (v) #:transparent)

(define (force mt)
  (match (unbox mt)
    [(done v) v]
    [(todo t) (let ([v (t)]) (set-box! mt (done v)) v)]))

(define (memothunk thunk)
  (box (todo thunk)))

(define (reify v)
  (match v
    [(Abs c _)  (force c)]
    [t          t]))

(define (apply-value v w)
  (match v
    [(Abs _ f)  (f w)]
    [t          (*app t (reify (w)))]))

(define (eval e t)
  (match t
    [(*var x)   (force (hash-ref e x (box (done (*var x)))))]
    [(*app t u) (apply-value (eval e t) (λ () (eval e u)))] 
    [(*lam x t) (let ([f (λ (v) (eval (hash-set e x (memothunk v)) t))])
                 (Abs (memothunk
                        (λ () (let ([x1 (gensym x)])
                                (*lam x1 (reify (f (λ () (*var x1))))))))
                       f))]))

(define (normalize t)
  (reify (eval (hasheq) t)))
\end{lstlisting}

We optimize the normalizer for normal-order by introducing the
standard technique of memothunks to cache computed values and reuse
them later, if needed. We also extend it to handle open
terms.
The resulting normalizer is shown in
Listing~\ref{fig:cbnd-normalizer}.

We introduce two more constructors to explicitly indicate the state of
a~memothunk: {\tt todo} stores plain thunks to be computed and {\tt
  done} stores their already computed results.  A memothunk consists of
Racket's built-in mutable memory cell called {\tt box} storing one of
these two.  The function {\tt memothunk} converts a plain thunk into a
memothunk by putting it into {\tt todo} and {\tt box}.  To access a
memothunk we use the procedure {\tt force} that just returns the
result of a computation, if it is already computed. Otherwise, the procedure
{\tt force} runs the stored plain thunk, remembers its result in the
{\tt box}, and then returns the result.

There are several changes in the normalizer, let us start with less
important ones. Recall that normalization consists of two phases:
evaluation in the semantic domain followed by reification in the
syntactic domain (\ie, in the object language).  Therefore things that
happen on the borderline between the two phases may be called either
from \texttt{eval} or from \texttt{reify} function. In particular,
generation of fresh names is moved from line 8 in
Listing~\ref{fig:no-normalizer} to line~28 in
Listing~\ref{fig:cbnd-normalizer}.
In consequence, while in the
 normal-order normalizer a fresh name is generated for each reification
of each abstraction,
in the lazy normalizer reification of each abstraction generates at
most one fresh name.  However, this is fine because normal forms
retrieved from the same memothunk do not overlap (cf.\ the two binding
occurrences of the variable $z_0$ in step 27 in
Table~\ref{fig:execution_ex_ref}).
Similarly, the whole code of reification from lines 8--9 in Listing~\ref{fig:no-normalizer}
is moved as a thunk to lines~28--29 in Listing~\ref{fig:cbnd-normalizer}.

A~more important change from the efficiency point of view is that the semantic function constructed in
line~26 converts a plain thunk that it receives into a memothunk when
it adds it into the environment.  By creating a memothunk for the
argument, we thus avoid its recomputation in the semantic domain.
Moreover, in line~27 we create a memothunk for the second phase, which
avoids recomputation of the reification in the syntactic domain. Now
the \texttt{Abs} constructor carries two parameters: the memothunk
used for computation of normal forms in the syntactic domain in
line~14 and the semantic function used for evaluation in the semantic
domain in line~19.
The two types of memothunks are then {\tt forced} in line~24 (for
evaluation in the semantic domain), and in line~14 (for normalization
in the syntactic domain).

There is a design choice in putting a \texttt{memothunk} in line 26 rather than in line 25, where a plain thunk is passed as an argument.
Both would be correct, but the one presented is a bit more sparing.
If the left term \texttt{t} in line 25 evaluates to an \texttt{Abs}, the value of right term \texttt{u} is put into \texttt{memothunk} immediately anyway.
However, if the left term evaluates to a neutral term,
we know that the right term's value will be reified exactly once, so we can
avoid memoization overhead.

To be able to handle open input terms, we also add a default
value for {\tt hash-ref} in line~24: when the variable \texttt{x} does
not occur in the environment \texttt{e}, a box with the value
\texttt{x} is returned. 

\subsection{Derivation of an abstract machine}\label{sec:derivation-machine}
\begin{lstlisting}[float,caption={Input specification to {\tt semt}},label={fig:semt-input}]
(def-data Term
  {Var String}
  {App Term Term}
  {Lam String Term})
(def-struct {Abs Any Any})
(def-struct {Memo Any})

(def gensym #:atomic (x) x) ;; dummy
(def memothunk (t) {Memo t})
(def force (m) (match m ({Memo t} (t)))) ;; dummy

(def init #:atomic (x) (memothunk (fun () {Var x})))
(def extend #:atomic (env k v)
  (fun #:atomic #:no-defun (x)
       (match (eq? x k)
         (#t v)
         (#f (env x)))))

(def reify (v)
  (match v
    ({Abs c f} (force c))
    (t         t)))

(def apply-value (v w)
  (match v
    ({Abs _ f} (f w))
    (t         {App t (reify (w))})))

(def eval (env [Term term])
  (match term
    ({Var x}      (force (env x)))
    ({App fn arg} (apply-value (eval env fn)
                               (fun #:name Closure () (eval env arg))))
    ({Lam x body} (let f (fun #:name AbsClos #:apply apply-closure (v)
                              (eval (extend env x (memothunk v)) body)))
                  {Abs (memothunk (fun () (let x1 (gensym x))
                                    {Lam x1 (reify (f (fun #:name VVar 
                                                           #:apply render 
                                                           () {Var x1})))}))
                       f})))

(def main ([Term term])
  (reify (eval init term)))

\end{lstlisting}
We used a semantic transformation tool {\tt semt} developed by
\citet{BuszkaB21:LOPSTR21} to automatically transform the initial
evaluator for Strong CbNd into an abstract machine.  In
Listing~\ref{fig:semt-input} we present the input code for {\tt semt}
written in IDL (Interpreter Definition Language), which differs from
the Racket evaluator slightly in syntax.  Constructors of lambda terms
are {\tt Var}, {\tt App}, and {\tt Lam}.  The input code contains a
handful of annotations that guide the derivation process, and a
``dummy'' implementation of the functions {\tt gensym} and {\tt
  force}. The reason for this latter tweak is that the transformer
does not handle stateful computation, \ie, it does not support mutable
memory cells.  In order to get the correct implementation of laziness,
after we transform the evaluator into the machine, we then replace
back the implementation of the {\tt force} function with that of
Listing~\ref{fig:cbnd-normalizer}, and we use the default fresh-name
generating function.

The transformation tool implements a refinement of the known
derivation methodology of Danvy et al.'s {\em functional
  correspondence}~\cite{Ager-al:PPDP03}, and it performs the
following steps:
\begin{enumerate}
\item translation to administrative normal form (ANF), where all
  intermediate results of computation are bound by let-constructors, 
\item selective transformation into continuation-passing
  style (a.k.a. CPS transformation),
\item selective defunctionalization of higher-order function spaces into
  first-order data structures and the corresponding dispatch
  functions.
\end{enumerate}
The paper~\cite{Ager-al:PPDP03} presents the methodology explained on
several examples including the Krivine and CEK machines.

The selective CPS transformation does not alter functions annotated
with {\tt \#:atomic} which are treated as atomic, ``trivial''
operations. In our case these are the functions operating on
environments, and the {\tt gensym} function that generates fresh
names. If we dropped the annotation {\tt \#:atomic} for the dummy
implementation of {\tt gensym}, the transformation could simplify it
as an identity. Generally, we used these as the boundaries of the purely
functional core of the normalizer. After the transformation, we could replace
the dummy implementation of {\tt gensym} with the real one.

In an extreme case, if the {\tt normalize} function was treated as atomic,
we would get a machine with a single transition that computes the normal form
of a term in one step, but it would not be a useful artefact for the complexity
analysis.

In the final stage of the transformation, defunctionalization is
performed on all higher-order functions except those that have been
annotated with {\tt \#:non-defun}, which are left untouched. In our
case, again, we do not touch the function {\tt extend} that operates
on environments.
Similarly, if we dropped the annotations for functions representing
environments, we would obtain a machine which traverses the environments step
by step.

The annotations {\tt \#:name} and {\tt \#:apply}
(each with a single argument) are used to supply non-generic names for
the datatype and the apply-function constructed by
defunctionalization.

\section{The abstract machine}
\label{sec:machine}

In this section we present the machine obtained as the result of the
transformation. We show its definition translated into mathematical
notation so that it can be easily read and understood without the need
to resort to Racket syntax.

The machine is presented in Table~\ref{fig:machine}. It works with
pure lambda terms, local environments, a stack and a global
store. Environments assign locations on the store to variables. The
syntactic category of values contains terms (intuitively, lambda terms
in normal form) and intermediate values (intuitively, abstractions)
being closures consisting of a lambda abstraction paired with a
local environment and annotated with a location on the global store. The
store assigns storable values, or memothunks, to locations.
Memothunks
are annotated with either~$\mbox{\textcolor{red!80!black}\XSolidBrush}$ (which stands for \emph{to do}
and indicates a code that needs to be executed) or~$\mbox{\textcolor{green!50!black}\Checkmark}$
(which stands for \emph{done} and indicates that a value is already
computed and memoized). A stack is simply a sequence of frames;
intuitively, it represents the evaluation context of the currently
evaluated term. There are two kinds of configurations corresponding to
two modes of operation: in configurations
$\langle c, s, \sigma \rangle_\etrian$ the machine evaluates the
closure $c$ to a weak normal form, and in $ \cconf \sigma v s$ it
continues with already computed value $v$.

\begin{table}[htb!]
\caption{The RKNL abstract machine, a reasonable and lazy variant of KN}
\label{fig:machine}
\begin{alignat*}{3}
\mathit{Identifiers} \ni &\;& x &\\
\mathit{Terms}       \ni && t   &::= x \alt \tapp{t_1}{t_2} \alt \tlam x t\\
\mathit{Locations} \ni && \ell\\
\mathit{Envs}   \ni && e   & \;\subtype \fun {\mathit{Identifiers}} {\mathit{Locations}} \\
\mathit{Closures} \ni && c   &::= \clos t e\\
\mathit{Values} \ni && v   &::= t \alt \abs {\tlam x t} e \ell\\
\mathit{Storable\;Values} \ni && \_ & ::= \tobedone \bot \alt \tobedone c \alt \done v\\
\mathit{Stores} \ni && \;\sigma & \;\subtype \fun {\mathit{Locations}} {\mathit{Storable\;Values}}\\
\mathit{Frames} \ni && f   &::= \frapp c
    \alt \flapp t
    \alt \flam x
    \alt \fcache \ell \\
\mathit{Stacks} \ni && s   &::= \nil \alt \cons{f}{s} \\
\mathit{Confs} \ni && k   &::=
  \langle c, s, \sigma \rangle_\etrian
  \alt \cconf \sigma v s\\
\mathit{Transitions:}\\[-3em]
\end{alignat*}%

\begin{align}
   t &\mapsto \econf \nil t \nil \nil \nonumber\\
\econf \sigma {\tapp{t_1}{t_2}} e {\hspace{16.5mm}s} &\to \econf         \sigma     {t_1} e {\cons{\frclos{t_2} e} {s}} \tag{1}\label{tr:1}\\
\econf \sigma       {\tlam x t} e {\hspace{16.5mm}s} &\to \cconf {\alloc \sigma \ell {\tobedone \bot}} {\abs {\tlam x t} e \ell} s  \tag{2}\label{tr:2}\\
  \econf \sigma              x    e {\hspace{16.5mm}s} &\to \econf \sigma t {e_2} {\cons { \fcache {\ell}} s} \; \nonumber\\
 & \hspace{2.45cm} \mbox{where } \ell={\lookup e x},\;{\lookup \sigma {\ell}} = \tobedone {\clos t {e_2}} \tag{3}\label{tr:3}\\
\econf \sigma              x    e {\hspace{16.5mm}s} &\to \cconf \sigma v s \hspace{5.5mm} \mbox{where } {\lookup \sigma {\lookup e x}} = \done v \vee (v = x \notin e) \tag{4}\label{tr:4}\\
\cconf \sigma v {\hspace{6.4mm}\cons {\fcache \ell} s} &\to \cconf {\update \sigma \ell {\done v}} v s \tag{5}\label{tr:5}\\
\cconf \sigma {\abs {\tlam x t} e {\ell}} {\cons {\frclos{t_2} {e_2}} s} &\to \econf {\alloc \sigma {\ell_2} {\tobedone {\clos{t_2} {e_2}}}} t {\update e x {\ell_2}} s \tag{6}\label{tr:6}\\
\cconf \sigma {\abs {\tlam x t} e {\ell}} {\hspace{16.5mm}s} &\to \econf {\alloc \sigma {\ell_2} {\done {\fresh x}}} t {\update e x {\ell_2}} {\cons {\flam {\fresh x}} {\cons {\fcache \ell} s}} \nonumber\\
& \hspace{3.848cm}\mbox{where } \lookup \sigma \ell = \tobedone \bot \tag{7}\label{tr:7} \\
\cconf \sigma {\abs {\tlam x t} e \ell} {\hspace{16.5mm}s} &\to \cconf \sigma v s \hspace{19.5mm}\mbox{where } \lookup \sigma \ell = \done v \tag{8}\label{tr:8}\\ 
\cconf \sigma t                         {\cons {\frclos{t_2} {e_2}} s} &\to \econf \sigma {t_2} {e_2} {\cons {\flapp t} s}\tag{9}\label{tr:9}\\
\cconf \sigma {t_2} {\hspace{7.6mm}\cons {\flapp {t_1}} s} &\to \cconf \sigma {\tapp {t_1} {t_2}} s\tag{10} \label{tr:10}\\ 
\cconf \sigma t {\hspace{6.5mm}\cons {\flam x} s} &\to \cconf \sigma {\tlam x t} s \tag{11}\label{tr:11}\\
\cconf \sigma t {\hspace{14.7mm}\nil} &\mapsto t \nonumber
\end{align}
\end{table}

The machine starts by loading the input term to the initial
configuration with empty environment, empty stack and empty store.
Then it proceeds through successive transitions.  The first transition
that matches the current configuration should be applied, so
Table~\ref{fig:machine} describes a deterministic abstract machine.
The first six transitions are standard, they directly correspond to
transitions of the lazy variant of the Krivine machine from
~\cite{Danvy-Zerny:PPDP13} (we discuss this correspondence in
Section~\ref{sec:conservative}). To evaluate application
$\tapp{t_1}{t_2}$, transition ~(\ref{tr:1}) calls the evaluation of
$t_1$ and pushes a closure pairing $t_2$ with the current environment
to the stack. In the case of a lambda abstraction $\tlam x t$,
transition~(\ref{tr:2}) primarily changes the mode of operation. A
non-standard part, added for memoization, is that ~(\ref{tr:2})
allocates a fresh location $\ell$ on the store, and fills it with a
placeholder for the (strong) normal form of $\tlam x t$.
Transitions~(\ref{tr:3}) and~(\ref{tr:4}) apply when the value of the
formal parameter $x$ is actually needed.  In this case, the actual
parameter is expected to be stored in a thunk at location $e(x)$
indicated by the environment $e$. If the thunk contains code to be
executed, the location is pushed to the stack and the code is forced
by transition~(\ref{tr:3}); otherwise, the thunk contains a memoized
value which is simply read by transition~(\ref{tr:4}).  If the
environment $e$ does not have an entry for $x$, then $x$ turns out to
be an open variable and it is taken as a value (this is the default
value mentioned at the end of Section~\ref{sec:cbndNormalizer}).  The
disjunctive condition of transition~(\ref{tr:4}) is an artefact of the
design choice made already in the normalizer:
In~Listing~\ref{fig:cbnd-normalizer} (line 24), in the variable case of
the \texttt{eval} function, the variable is the default value for the
environment lookup, for the case when it is not found in the
environment.  Transition~(\ref{tr:5}) implements memoization of actual
parameters: when evaluation of an actual parameter is finished, the
top of the stack contains the location assigned to the formal
parameter, and this location is updated with the computed
value. Transition~(\ref{tr:6}) implements $\beta$-contraction and
delays the evaluation of the actual parameter: new location $\ell_2$
is created, a thunk with the actual parameter is stored at $\ell_2$,
and the evaluation of the body of the lambda abstraction (with an
appropriately updated environment) is called.

Transitions~(\ref{tr:7}) and~(\ref{tr:8}) implement normalization of
lambda abstractions. They apply when the currently processed closure
is a lambda abstraction (annotated with a location $\ell$) and there
is no argument on the stack (so that transition~(\ref{tr:6}) does not
apply). If the thunk at $\ell$ contains a memoized value,
transition~(\ref{tr:8}) simply reads this value. Otherwise, it contains
the placeholder $\tobedone \bot$, and transition~(\ref{tr:7}) starts
the normalization of the abstraction: the location $\ell$ is pushed to
the stack; the variable $x$ is $\alpha$-renamed to a fresh variable
$\fresh x$ and stored in a fresh location $\ell_2$; the environment is
acknowledged about $\alpha$-renaming; a context $\flam {\fresh x}$ is
pushed to the stack; the evaluation of the body of the abstraction is
called. When this evaluation returns, transition~(\ref{tr:11})
reconstructs the normalized lambda abstraction and then
transition~(\ref{tr:5}) memoizes it at location $\ell$.

Transition~(\ref{tr:9}) implements normalization of neutral
terms. Here $t$ is in normal form, but not an abstraction
(see Lemma~\ref{lem:shape}), so it is a
neutral term; there is an argument on the top of the stack, and this argument
is now evaluated. Transitions~(\ref{tr:10}) and~(\ref{tr:11})
reconstruct the final normal forms: transition~(\ref{tr:10}) deals
with neutral terms and~(\ref{tr:11}) with lambda abstractions. Finally,
the result is unloaded from a final configuration.

\begin{remark}\label{rem:persistence}
  The whole machine is intended to be naturally implementable as
  a~\emph{persistent} data structure (cf. \cite{Okasaki:99}). That
  means that old configurations of the machine are not destroyed by
  any transition, so they are accessible if the user has references to
  them.  This approach is imposed if data constructors are immutable, 
  which is default in most functional languages including Racket, 
  OCaml and Haskell.

  From the perspective of the higher-order normalizer, it applies even to the
  store, employed by memothunks. Memothunks are what Okasaki calls suspensions
  and denotes using \text{\$-notation} in his book. When a memothunk is forced,
  the retrieved value is the same, no matter if it was already computed or not.
  In this sense, references to memothunks are persistent.

  This flavour of persistence is translated to the machine implemented with
  a mutable store, so it uses mutable references very rigorously.
  Each mutable reference is overwritten at most
  once, from \emph {to do} to \emph{done}. When referencing an old
  configuration, it may happen that some values of memothunks are already
  fast-forwarded, so the machine will perform smaller number of transitions
  than originally.
  Alternatively, at the cost of logarithmic overhead,
  the store can be implemented as a persistent dictionary.

  Persistent data structures allow free sharing of substructures:
  since a substructure is never destroyed in the processing, it is
  enough to share a reference. We assume a shared representation of
  terms as in~\cite{CondoluciAC19}, as well as of stacks and
  environments, without introduction of explicit substitutions or
  let-constructs. In particular, in transition~(\ref{tr:1}) it is the
  reference to an environment $e$ (and not the environment itself)
  that is copied, so the transition is performed in constant time.
  Similarly, in transition~(\ref{tr:3}) only a~reference to a~term $t$
  is copied.

  For example, exponentially big normal forms of the family
  $e_n = \tlam{x}{\tapp{\tapp{c_n}{\omega}}{x}}$, where
  $ \omega := \tlam{x}{\tapp{x}{x}}$ and $c_n$ denotes the
  $n^\text{th}$ Church numeral, consume only a~linear in $n$ amount of
  memory and are computed in linear time.
\end{remark}

\subsection{Elaborate example execution}

We present the behaviour of the RKNL machine in the elaborate example
execution in Table~\ref{fig:execution_ex_ref}.  It normalizes the term
${\tapp {({\tlam {x} {{\tapp {{\tapp {c} {x}}} {x}}}})} {\big({\tapp
      {({\tlam {y} {{\tlam {z} {{\tapp {I} {z}}}}}})}
      {\Omega}}\big)}}$ where $I := \tlam x x$ is the identity,
$\Omega := \omega\omega$ is a well-known divergent term (with
$\omega := \tlam x {\tapp x x}$), and $c$ is a free variable.
This is one of the simplest
examples that uses all transitions of the machine and demonstrates its
main features: the machine is able to evaluate open terms (the
variable $c$ is free) and terms under $\lambda$-abstractions (the
subterm $\tlam {z} {{\tapp {I} {z}}}$ is reduced to
${\tlam {z} {z}}$); not-needed arguments are not evaluated (the
variable $y$ is not used in
${\tlam {y} {{\tlam {z} {{\tapp {I} {z}}}}}}$, so $\Omega$ is not
evaluated); needed arguments are evaluated only once (even if $x$ is
used twice in ${\tlam {x} {{\tapp {{\tapp {c} {x}}} {x}}}}$, the
actual argument
${\tapp {({\tlam {y} {{\tlam {z} {{\tapp {I} {z}}}}}})} {\Omega}}$ is
evaluated only once).
\begin{table}[p]
\caption{Elaborate example execution in refocusing notation}
\label{fig:execution_ex_ref}

{
\begin{align*}
0:&& {\cut {{\clos {{\tapp {({\tlam {x} {{\tapp {{\tapp {c} {x}}} {x}}}})} {({\tapp {A} {\Omega}})}}} {\![]}}}_\etrian} &| [] &&\stackrel{(\ref{tr:1})}{\to}\\[-0.6ex]
1:&& {\tapp {{\cut {{\clos {{\tlam {x} {{\tapp {{\tapp {c} {x}}} {x}}}}} {\![]}}}_\etrian}} {{\clos {{\tapp {A} {\Omega}}} {\![]}}}} &| [] &&\stackrel{(\ref{tr:2})}{\to}\\[-0.6ex]
2:&& {\tapp {{\cut {{{{\mathbbm{a}}}\!:=\!{{\clos {{\tlam {x} {{\tapp {{\tapp {c} {x}}} {x}}}}} {\![]}}}}}_\ctrian}} {{\clos {{\tapp {A} {\Omega}}} {\![]}}}} &| [{\mathbbm{a}}\!\mapsto\!{\tobedone {\bot}}] &&\stackrel{(\ref{tr:6})}{\to}\\[-0.6ex]
3:&& {\cut {{\clos {{\tapp {{\tapp {c} {x}}} {x}}} {\![x\!\mapsto\!{\mathbbm{x}}]}}}_\etrian} &| [{\mathbbm{x}}\!\mapsto\!{\tobedone {{\clos {{\tapp {A} {\Omega}}} {\![]}}}}] &&\stackrel{(\ref{tr:1})}{\to}\\[-0.6ex]
4:&& {\tapp {{\cut {{\clos {{\tapp {c} {x}}} {\![x\!\mapsto\!{\mathbbm{x}}]}}}_\etrian}} {x^{\mathbbm{x}}}} &| [{\mathbbm{x}}\!\mapsto\!{\tobedone {{\clos {{\tapp {A} {\Omega}}} {\![]}}}}] &&\stackrel{(\ref{tr:1})}{\to}\\[-0.6ex]
5:&& {\tapp {{\tapp {{\cut {{\clos {c} {\![x\!\mapsto\!{\mathbbm{x}}]}}}_\etrian}} {x^{\mathbbm{x}}}}} {x^{\mathbbm{x}}}} &| [{\mathbbm{x}}\!\mapsto\!{\tobedone {{\clos {{\tapp {A} {\Omega}}} {\![]}}}}] &&\stackrel{(\ref{tr:4})}{\to}\\[-0.6ex]
6:&& {\tapp {{\tapp {{\cut {c}_\ctrian}} {x^{\mathbbm{x}}}}} {x^{\mathbbm{x}}}} &| [{\mathbbm{x}}\!\mapsto\!{\tobedone {{\clos {{\tapp {A} {\Omega}}} {\![]}}}}] &&\stackrel{(\ref{tr:9})}{\to}\\[-0.6ex]
7:&& {\tapp {{\tapp {c} {{\cut {x^{\mathbbm{x}}}_\etrian}}}} {x^{\mathbbm{x}}}} &| [{\mathbbm{x}}\!\mapsto\!{\tobedone {{\clos {{\tapp {A} {\Omega}}} {\![]}}}}] &&\stackrel{(\ref{tr:3})}{\to}\\[-0.6ex]
8:&& {\tapp {{\tapp {c} {\left({{{\mathbbm{x}}}\!:=\!{{\cut {{\clos {{\tapp {A} {\Omega}}} {\![]}}}_\etrian}}}\right)}}} {x^{\mathbbm{x}}}} &| [{\mathbbm{x}}\!\mapsto\!{\tobedone {{\clos {{\tapp {A} {\Omega}}} {\![]}}}}] &&\stackrel{(\ref{tr:1})}{\to}\\[-0.6ex]
9:&& {\tapp {{\tapp {c} {\left({{{\mathbbm{x}}}\!:=\!\left({{\tapp {{\cut {{\clos {A} {\![]}}}_\etrian}} {{\clos {\Omega} {\![]}}}}}\right)}\right)}}} {x^{\mathbbm{x}}}} &| [{\mathbbm{x}}\!\mapsto\!{\tobedone {{\clos {{\tapp {A} {\Omega}}} {\![]}}}}] &&\stackrel{(\ref{tr:2})}{\to}\\[-0.6ex]
10:&& {\tapp {{\tapp {c} {\left({{{\mathbbm{x}}}\!:=\!\left({{\tapp {{\cut {{{{\mathbbm{b}}}\!:=\!{{\clos {A} {\![]}}}}}_\ctrian}} {{\clos {\Omega} {\![]}}}}}\right)}\right)}}} {x^{\mathbbm{x}}}} &| [{\mathbbm{x}}\!\mapsto\!{\tobedone {{\clos {{\tapp {A} {\Omega}}} {\![]}}}}, {\mathbbm{b}}\!\mapsto\!{\tobedone {\bot}}] &&\stackrel{(\ref{tr:6})}{\to}\\[-0.6ex]
11:&& {\tapp {{\tapp {c} {\left({{{\mathbbm{x}}}\!:=\!{{\cut {{\clos {{\tlam {z} {{\tapp {I} {z}}}}} {\![y\!\mapsto\!{\mathbbm{y}}]}}}_\etrian}}}\right)}}} {x^{\mathbbm{x}}}} &| [{\mathbbm{x}}\!\mapsto\!{\tobedone {{\clos {{\tapp {A} {\Omega}}} {\![]}}}}, {\mathbbm{y}}\!\mapsto\!{\tobedone {{\clos {\Omega} {\![]}}}}] &&\stackrel{(\ref{tr:2})}{\to}\\[-0.6ex]
12:&& {\tapp {{\tapp {c} {\left({{{\mathbbm{x}}}\!:=\!{{\cut {{{{\mathbbm{d}}}\!:=\!{{\clos {{\tlam {z} {{\tapp {I} {z}}}}} {\![y\!\mapsto\!{\mathbbm{y}}]}}}}}_\ctrian}}}\right)}}} {x^{\mathbbm{x}}}} &| [{\mathbbm{x}}\!\mapsto\!{\tobedone {{\clos {{\tapp {A} {\Omega}}} {\![]}}}}, {\mathbbm{d}}\!\mapsto\!{\tobedone {\bot}}, {\mathbbm{y}}\!\mapsto\!{\tobedone {{\clos {\Omega} {\![]}}}}] &&\stackrel{(\ref{tr:5})}{\to}\\[-0.6ex]
13:&& {\tapp {{\tapp {c} {{\cut {{{{\mathbbm{d}}}\!:=\!{{\clos {{\tlam {z} {{\tapp {I} {z}}}}} {\![y\!\mapsto\!{\mathbbm{y}}]}}}}}_\ctrian}}}} {x^{\mathbbm{x}}}} &| \,\sigma_1 \!\ast\! [{\mathbbm{d}}\!\mapsto\!{\tobedone {\bot}}] &&\stackrel{(\ref{tr:7})}{\to}\\[-0.6ex]
14:&& {\tapp {{\tapp {c} {\left({{{\mathbbm{d}}}\!:=\!{{\tlam {{z_0}} {{\cut {{\clos {{\tapp {I} {z}}} {\!e_\mathit{yz}}}}_\etrian}}}}}\right)}}} {x^{\mathbbm{x}}}} &| \,\sigma_1 \!\ast\! [{\mathbbm{d}}\!\mapsto\!{\tobedone {\bot}}, {\mathbbm{z}}\!\mapsto\!{\done {{z_0}}}] &&\stackrel{(\ref{tr:1})}{\to}\\[-0.6ex]
15:&& {\tapp {{\tapp {c} {\left({{{\mathbbm{d}}}\!:=\!\left({{\tlam {{z_0}} {{\tapp {{\cut {{\clos {I} {\!e_\mathit{yz}}}}_\etrian}} {{\clos {z} {\!e_\mathit{yz}}}}}}}}\right)}\right)}}} {x^{\mathbbm{x}}}} &| \,\sigma_1 \!\ast\! [{\mathbbm{d}}\!\mapsto\!{\tobedone {\bot}}, {\mathbbm{z}}\!\mapsto\!{\done {{z_0}}}] &&\stackrel{(\ref{tr:2})}{\to}\\[-0.6ex]
16:&& {\tapp {{\tapp {c} {\left({{{\mathbbm{d}}}\!:=\!\left({{\tlam {{z_0}} {{\tapp {{\cut {{{{\mathbbm{e}}}\!:=\!{{\clos {I} {\!e_\mathit{yz}}}}}}_\ctrian}} {{\clos {z} {\!e_\mathit{yz}}}}}}}}\right)}\right)}}} {x^{\mathbbm{x}}}} &| \,\sigma_1 \!\ast\! [{\mathbbm{d}}\!\mapsto\!{\tobedone {\bot}}, {\mathbbm{e}}\!\mapsto\!{\tobedone {\bot}}, {\mathbbm{z}}\!\mapsto\!{\done {{z_0}}}] &&\stackrel{(\ref{tr:6})}{\to}\\[-0.6ex]
17:&& {\tapp {{\tapp {c} {\left({{{\mathbbm{d}}}\!:=\!{{\tlam {{z_0}} {{\cut {{\clos {x} {\!e_\mathit{yz}\!\ast\![x\!\mapsto\!{\mathbbm{w}}]}}}_\etrian}}}}}\right)}}} {x^{\mathbbm{x}}}} &| \,\sigma_1 \!\ast\! [{\mathbbm{d}}\!\mapsto\!{\tobedone {\bot}}, {\mathbbm{w}}\!\mapsto\!{\tobedone {{\clos {z} {\!e_\mathit{yz}}}}}, {\mathbbm{z}}\!\mapsto\!{\done {{z_0}}}] &&\stackrel{(\ref{tr:3})}{\to}\\[-0.6ex]
18:&& {\tapp {{\tapp {c} {\left({{{\mathbbm{d}}}\!:=\!{{\tlam {{z_0}} {{{{\mathbbm{w}}}\!:=\!{{\cut {{\clos {z} {\!e_\mathit{yz}}}}_\etrian}}}}}}}\right)}}} {x^{\mathbbm{x}}}} &| \,\sigma_1 \!\ast\! [{\mathbbm{d}}\!\mapsto\!{\tobedone {\bot}}, {\mathbbm{w}}\!\mapsto\!{\tobedone {{\clos {z} {\!e_\mathit{yz}}}}}, {\mathbbm{z}}\!\mapsto\!{\done {{z_0}}}] &&\stackrel{(\ref{tr:4})}{\to}\\[-0.6ex]
19:&& {\tapp {{\tapp {c} {\left({{{\mathbbm{d}}}\!:=\!{{\tlam {{z_0}} {{{{\mathbbm{w}}}\!:=\!{{\cut {{z_0}}_\ctrian}}}}}}}\right)}}} {x^{\mathbbm{x}}}} &| \,\sigma_1 \!\ast\! [{\mathbbm{d}}\!\mapsto\!{\tobedone {\bot}}, {\mathbbm{w}}\!\mapsto\!{\tobedone {{\clos {z} {\!e_\mathit{yz}}}}}] &&\stackrel{(\ref{tr:5})}{\to}\\[-0.6ex]
20:&& {\tapp {{\tapp {c} {\left({{{\mathbbm{d}}}\!:=\!{{\tlam {{z_0}} {{\cut {{z_0}}_\ctrian}}}}}\right)}}} {x^{\mathbbm{x}}}} &| \,\sigma_1 \!\ast\! [{\mathbbm{d}}\!\mapsto\!{\tobedone {\bot}}] &&\stackrel{(\ref{tr:11})}{\to}\\[-0.6ex]
21:&& {\tapp {{\tapp {c} {\left({{{\mathbbm{d}}}\!:=\!{{\cut {{\tlam {{z_0}} {{z_0}}}}_\ctrian}}}\right)}}} {x^{\mathbbm{x}}}} &| \,\sigma_1 \!\ast\! [{\mathbbm{d}}\!\mapsto\!{\tobedone {\bot}}] &&\stackrel{(\ref{tr:5})}{\to}\\[-0.6ex]
22:&& {\tapp {{\tapp {c} {{\cut {{\tlam {{z_0}} {{z_0}}}}_\ctrian}}}} {x^{\mathbbm{x}}}} &| \,\sigma_1 \!\ast\! [{\mathbbm{d}}\!\mapsto\!{\done {{\tlam {{z_0}} {{z_0}}}}}] &&\stackrel{(\ref{tr:10})}{\to}\\[-0.6ex]
23:&& {\tapp {{\cut {{\tapp {c} {{\tlam {{z_0}} {{z_0}}}}}}_\ctrian}} {x^{\mathbbm{x}}}} &| \,\sigma_1 \!\ast\! [{\mathbbm{d}}\!\mapsto\!{\done {{\tlam {{z_0}} {{z_0}}}}}] &&\stackrel{(\ref{tr:9})}{\to}\\[-0.6ex]
24:&& {\tapp {{\tapp {c} {({\tlam {{z_0}} {{z_0}}})}}} {{\cut {x^{\mathbbm{x}}}_\etrian}}} &| \,\sigma_1 \!\ast\! [{\mathbbm{d}}\!\mapsto\!{\done {{\tlam {{z_0}} {{z_0}}}}}] &&\stackrel{(\ref{tr:4})}{\to}\\[-0.6ex]
25:&& {\tapp {{\tapp {c} {({\tlam {{z_0}} {{z_0}}})}}} {{\cut {{{{\mathbbm{d}}}\!:=\!{{\clos {{\tlam {z} {{\tapp {I} {z}}}}} {\![y\!\mapsto\!{\mathbbm{y}}]}}}}}_\ctrian}}} &| [{\mathbbm{d}}\!\mapsto\!{\done {{\tlam {{z_0}} {{z_0}}}}}, {\mathbbm{y}}\!\mapsto\!{\tobedone {{\clos {\Omega} {\![]}}}}] &&\stackrel{(\ref{tr:8})}{\to}\\[-0.6ex]
26:&& {\tapp {{\tapp {c} {({\tlam {{z_0}} {{z_0}}})}}} {{\cut {{\tlam {{z_0}} {{z_0}}}}_\ctrian}}} &| [] &&\stackrel{(\ref{tr:10})}{\to}\\[-0.6ex]
27:&& {\cut {{\tapp {{\tapp {c} {({\tlam {{z_0}} {{z_0}}})}}} {{\tlam {{z_0}} {{z_0}}}}}}_\ctrian} &| []\phantom{\stackrel{(1)}{\to}} && \nrightarrow
\end{align*}
}%
\end{table}

We apply a convention that locations introduced by
transitions~(\ref{tr:2}), concerning memoization of normal forms of abstractions,
are taken from the sequence
$\mathbbm a, \mathbbm b, \mathbbm d, \mathbbm e$ and those introduced
by transitions~(\ref{tr:6}) and~(\ref{tr:7}), concerning memoization of values
of arguments, from  $\mathbbm x, \mathbbm y, \mathbbm z, \mathbbm w$.
To shorten the text repeating in Table~\ref{fig:execution_ex_ref}, we
define an auxiliary term
$A~:=~{\tlam {y} {{\tlam {z} {{\tapp {I} {z}}}}}}$, an
environment
$e_\mathit{yz} := {\![y\!\mapsto\!{\mathbbm{y}},
  z\!\mapsto\!{\mathbbm{z}}]}$, a
closure
$x^{\mathbbm{x}} := {\clos {x} {\![x\!\mapsto\!{\mathbbm{x}}]}}$, and
a store\break
$\sigma_1 := [{\mathbbm{x}}\!\mapsto\!{\done
  {{{{\mathbbm{d}}}\!:=\!{{\clos {{\tlam {z} {{\tapp {I} {z}}}}}
          {\![y\!\mapsto\!{\mathbbm{y}}]}}}}}},
{\mathbbm{y}}\!\mapsto\!{\tobedone {{\clos {\Omega} {\![]}}}}]$.

We also employ here \emph{refocusing notation}.  It is a general
technique to shorten the presentation of an abstract machine and to
make it more readable.  It means that instead of writing the
configuration explicitly (the following is the configuration after step
14):
$$\langle {{\clos {{\tapp {I} {z}}} {\!e_\mathit{yz}}}}, {{\cons {{\tlam {{z_0}} {\hole}}} {{\cons {{{{\mathbbm{d}}}\!:=\!{\hole}}} {{\cons {{\tapp {c} {\hole}}} {{\cons {{\tapp {\hole} {x^{\mathbbm{x}}}}} {[]}}}}}}}}}, {\,\sigma_1 \!\ast\! [{\mathbbm{d}}\!\mapsto\!{\tobedone {\bot}}, {\mathbbm{z}}\!\mapsto\!{\done {{z_0}}}]}\rangle_\etrian$$
we recompose the term and the stack into a processed term and use the angle brackets to mark the place of decomposition:
${\tapp {{\tapp {c} {\left({{{\mathbbm{d}}}\!:=\!{{\tlam {{z_0}} {{\cut {{\clos {{\tapp {I} {z}}} {\!e_\mathit{yz}}}}_\etrian}}}}}\right)}}} {x^{\mathbbm{x}}}} | \,\sigma_1 \!\ast\! [{\mathbbm{d}}\!\mapsto\!{\tobedone {\bot}}, {\mathbbm{z}}\!\mapsto\!{\done {{z_0}}}]$.
The store remains on the right-hand side after a vertical bar.

We consistently omit locations in the store that are not reachable
from the recomposed term.  They can be thought of as garbage collected
by reference counting.  This is why the location~$\mathbbm a$
disappears just after its allocation and the store is empty in the
end.  Otherwise, the location~$\mathbbm a$ would contain the
$\tobedone \bot$ till the end of the execution.

In a higher-level perspective, the machine performs three
$\beta$-reductions: in step 3 the term
${\tapp {({\tlam {x} {{\tapp {{\tapp {c} {x}}} {x}}}})} (\tapp A
  \Omega)}$ is reduced to
${{\tapp {{\tapp {c} {(\tapp A \Omega)}}} {(\tapp A \Omega)}}}$; in
step 11 the latter term, which is
${{\tapp {{\tapp {c} {(\tapp {(\tlam {y} {{\tlam {z} {{\tapp {I}
                    {z}}}}})} \Omega)}}} {(\tapp A \Omega)}}}$, is
reduced to
${{\tapp {{\tapp {c} {({\tlam {z} {{\tapp {I} {z}}}})}}}
    {(\tapp A \Omega)}}}$; in step 17 the last term is reduced to
${{\tapp {{\tapp {c} {({{\tlam {z} { {z}}}})}}} {(\tapp A
      \Omega)}}}$. The argument $\tapp A \Omega$ is normalized in steps
8--21, memoized in step 22 and reused in step 26.

In a lower-level perspective, the machine first moves its focus to
$({\tlam {x} {{\tapp {{\tapp {c} {x}}} {x}}}})$ (step~1), recognizes
it as an abstraction and gives it a location~$\mathbbm a$ (step~2).
It immediately applies the abstraction to closure
${{\clos {{\tapp {A} {\Omega}}} {\![]}}}$, so the closure moves to the
store with a fresh location~$\mathbbm x$ and location~$\mathbbm a$ disappears
because it is no longer reachable (step~3).  Then the environment
$[x\!\mapsto\!{\mathbbm{x}}]$ percolates through the applications to
the variables (steps~4--5).  The variable $c$ turns out to be free, so
it is considered as a value (step 6) and the focus moves to the first
remaining $x$ (step~7).  It turns out to be substituted by an
unevaluated closure whose value will be memoized at
location~$\mathbbm{x}$ (step~8).  Then the focus moves to $A$ (step~9),
which is recognized as an abstraction (step~10) and immediately
applied to the closure ${{\clos {\Omega} {\![]}}}$, so $A$'s body is evaluated
in the extended environment (step~11).  The body is also an
abstraction (step~12).  The abstraction is memoized at
location~$\mathbbm{x}$ (step~13).  It has no arguments, so it is
normalized: the parameter $z$ is renamed to a fresh name $z_0$, and
the normal form of the abstraction will be memoized at location~$\mathbbm{d}$ (step~14).
Note that $\mathbbm{d}\!:=\!{\hole}$ is outside the focus, so it is a frame on the stack now. 
The
focus moves to $I$ and the environment $e_\mathit{yz}$ percolates
(step~15).  The identity is an abstraction (step~16), and it is
applied to closure ${{\clos z {e_\mathit{yz}}}}$, so the closure moves to the
store with a fresh location~$\mathbbm w$ (step~17).  The
variable $x$ constituting the identity's body looks up the
location~$\mathbbm{w}$ containing the unevaluated argument (step~18).
By subsequent lookup, it is evaluated to $z_0$ (step~19) and memoized
at $\mathbbm{w}$ (but it is unreachable and disappears) (step~20).
The abstraction $\tlam {z_0} {z_0}$ is reconstructed (step~21) and
memoized at $\mathbbm{d}$ as a normal form of
${{\tlam {z} {{\tapp {I} {z}}}}}$ (step~22).  The machine reconstructs
${{\tapp {c} {{\tlam {{z_0}} {{z_0}}}}}}$ (step~23) and moves focus to
the second $x$ (step~24).  The variable $x$ looks up the
location~$\mathbbm{x}$ with the value of
${{\clos {{\tapp {A} {\Omega}}} {\![]}}}$ (step~25) that looks up the
location~$\mathbbm{d}$ for its normal form by
transition~(\ref{tr:8})\,(step~26).  Finally, the result
${\tapp {{\tapp {c} {({\tlam {{z_0}} {{z_0}}})}}} {{\tlam {{z_0}}
      {{z_0}}}}}$ is reconstructed (step~27).

To summarize, the application $\tapp A \Omega$ is evaluated only once, its
value is normalized only once, and the nontermination of $\Omega$ is
contained.  Occurrences of $x$ (in $\tapp {\tapp c x} x$, $I$, and
$\Omega$) do not collide, the bound variable $z$ is renamed, and the
free variable $c$ stays untouched.

\subsection{Empirical execution lengths}
\label{sec:empirical}

Abstract cost models, including abstract machines, give us an
opportunity to measure the program complexity independently of the
efficiency of a physical machine.
In Table~\ref{tab:numbers}, for six term families, we present the
number of steps required to reach the normal form for:
\begin{enumerate}
\item the normal-order strategy,
\item KN, implemented as in Table 4 of \cite{Cregut:HOSC07},
\item RKNL, and
\item \rknle, which is an exemplary modification of RKNL that applies the
transition~(\ref{tr:7}) unconditionally, leaving the transition~(\ref{tr:8})
unused.
\end{enumerate}

The closed forms are verified for natural numbers from 1 to 9 because
some of the sequences have a different value for $n = 0$.
The numbers of machine steps were measured by running the machines KN, RKNL,
and \rknle and counting their steps.
The number of the normal-order strategy steps was measured by
counting $\beta$-steps in KN executions.

Table \ref{tab:numbers} presents formulae for the sequences with explicit
constants. The formulae are not approximate, but they give the exact numbers.
Thanks to that, these exact results should be easily reproducible.
However, comparison of constants between, for example, KN and RKNL is not
crucially informative because they depend on the granularity of the transition
rules. The formulae are given to show the asymptotic behaviour of the
sequences.

The first two families are the families from Table 9 of
\cite{Cregut:HOSC07} up to renaming.  We name a~few more terms to
define these families ($\KI$ is the normal form of term $\tapp K I$,
and $c_n$ is the $n^\text{th}$ Church numeral).
\begin{align*}
&K := \tlam x {\tlam y x}, \;\;
\KI := \tlam x {\tlam y y}, \;\;
\pair := \tlam x {\tlam y {\tlam f {\tapp {\tapp {\;f} x} y}}}, \;\;
\dub := \tlam x {\tlam f {\tapp {\tapp {\;f} x} x}}\\
&\pred := \tlam n {\tlam f {\tlam x {
\tapp
    {\;\left(\tapp
        {\tapp {n}
               {\left(\tlam e {\tapp {\tapp {\;\pair} {(\tapp e \KI)}} {(\tapp f {(\tapp e \KI)})}} \right)}}
        {(\tapp {\tapp \pair x} x)}\right)}
K}}}\\
&c_n := \tlam f {\tlam x {\underbrace{f(f(...f(x)...))}_{f~\text{applied}~n~\text{ times}}}}
\hspace{1cm}
d_n := \begin{cases}I &: n = 0 \\ \tlam v {\tapp {\tapp {(\tlam x {\tlam k {\tapp k {\tlam f {\tapp {\tapp {\;f} x} x}}}})} v} {d_{n-1}}} &: n > 0\end{cases}
\end{align*}

\begin{table}[h]
\caption{Empirical execution lengths for $1 \leq n \leq 9$}
\label{tab:numbers}
\def\arraystretch{1.5}
\setlength\tabcolsep{4mm}
\begin{center}
{\small{
\begin{tabular}{ |c|c c c c| }
 \hline
 term family & $\stackrel{\no}{\to}$ & KN & RKNL & \rknle \\
 \hline
 $\tapp {\tapp {c_n} {c_2}} I$ &$3\cdot 2^n -1$& $15 \cdot 2^n - 6$ &  \hspace{-1em}$10 \cdot 2^n + 5n + 5$ & $10 \cdot 2^n + 5n + 5$ \\
 $\tapp {\pred} {c_n}$ &$6n+8$& $26n + 25$ & $30n + 41$ & $30n + 41$\\
 $\tlam x {\tapp {\tapp {c_n} \omega} x}$ &$2^n + 1$& $12 \cdot 2^n - 3$ & $9n + 15$& $9n + 15$ \\
 $\tapp {\tapp {c_n} {\dub}} I$ &$2^n + 1$& $23 \cdot 2^n - 14$ & $18n + 15$ & $16 \cdot 2^n + 5n - 1$ \\
 \hspace{-1em}$\tapp {\tapp {c_n} {\dub}} {(\tlam x {\tapp I x})}$\hspace{-1em} &$2\cdot 2^n + 1$& $26 \cdot 2^n - 14$ & $18n + 20$ & $21 \cdot 2^n + 5n - 1$\\
 $\tapp {d_n} I$ & $3n + 1$ & $22 \cdot 2^n + 7n - 15$ \hspace{-1em} & $28n + 10$ & $16 \cdot 2^n + 15n - 6$ \\
 \hline
\end{tabular}}}
\end{center}
\end{table}

The family $\tapp {\tapp {c_n} {c_2}} I$ employs exponentiation of
Church numerals, so it composes $2^n$ identity functions. It takes
an exponential number of steps in normal order. All three machines
reproduce this exponential behaviour, but RKNLs seem to reduce the
coefficient of the exponential element.

The family $\tapp {\pred} {c_n}$ computes the predecessor of a Church
numeral. The overhead of memoization of RKNL with respect to the number
of steps of KN is visible. Nevertheless, RKNL is asymptotically not worse than
KN.  It is intuitive because of RKNL construction, and possibly it could be
formally proven, but here we limit ourselves to empirical evidence supporting
this hypothesis.

The family $\tlam x {\tapp {\tapp {c_n} \omega} x}$ doubles with $\omega$
the number of $x$ variables, $n$ times. While KN constructs the output term
step by step, RKNL achieves normal form in a number of steps logarithmic
w.r.t. the normal-order strategy, so it is {\em implosive} in terms of
\citet{DBLP:conf/lics/AccattoliCC21}.

Similarly, the family $\tapp {\tapp {c_n} {\dub}} I$ constructs pairs with the
same element, $n$ times. Here, thanks to the memoization, and in consequence
sharing, of strong normal forms of functions, RKNL remains implosive.
In \rknle, the transition~(\ref{tr:8}) is incapacitated, so the machine is not
able to reuse the normal forms of the functions, and recomputes them at
exponential cost, w.r.t. RKNL.

The family $\tapp {\tapp {c_n} {\dub}} {(\tlam x {\tapp I x})}$ differs from
the previous one by $\eta$-expanded identity copied by normal order deep down
in the intermediate term. It adds $2^n$ redexes to reduce, and ends up with
doubled coefficient of the exponential element in the number of
$\stackrel{\no}{\to}$ steps. However, it costs RKNL only 5 steps more.

Finally, the family $\tapp {d_n} I$ is a simplified call-by-value
continuation-passing style (CPS) translation of $\tapp {\tapp {c_n} {\dub}} I$.
They compute the same normal forms. The enforced call-by-value order avoids
call-by-name duplication of arguments and results in linear number of steps in
the normal-order reduction. RKNL, being call-by-need, should be asymptotically
not worse than call-by-value reduction and maintains the linear number of
steps, which is reasonable. \rknle\, cannot reuse the normal forms of the
functions, and turns out to be unreasonable w.r.t. the normal-order strategy.

\section{Properties of the machine}
\label{sec:proofs}
In this section we discuss the correctness of the derived machine and
its complexity.

\subsection{Decoding}
\label{sec:decoding}

We say that a configuration $k$ is \emph{reachable} if there exists a
sequence of machine transitions starting in an initial configuration
and ending in $k$.  We define a decoding of reachable configurations
(denoted $\deck{\,\cdot\,}$) to terms in order to refer micro-step
operational semantics of the machine to small-step operational
semantics of normal-order reduction.  Formally, our goal is to show
that whenever $k \to k'$ for a reachable configuration $k$, then
$\deck{k} \stackrel{\no}{\twoheadrightarrow}=_\alpha \deck{k'}$. We
prove it in Lemma~\ref{lem:step_soundness}.

\begin{table}[ht]%
\caption{Decoding of RKNL}
\label{fig:decoding}
\begin{align*}
\decc{\clos {\tapp {t_1} {t_2}} e, \Sigma\!} &:= \tapp {\decc{\clos {t_1} e, \Sigma\!}} {\decc{\clos {t_2} e, \Sigma\!}}\\
\decc{\clos {\tlam x t} e, \Sigma\!} &:= \tlam {\overline x} {\decc{\clos t {\update e x {\overline x}}, \Sigma\!}}\\
\decc{\clos x e, \Sigma\!} &:= \begin{cases}
\overline x &: e(x) = \overline x \mbox{ defined by the rule above}\\
\decc{\clos {t_2} {e_2}, \Sigma\!} &: \text{$ e(x) $ initialized as $\tobedone{\clos {t_2} {e_2}}$ with (\ref{tr:6}) in } \Sigma\\
\fresh x &: \text{$ e(x) $ initialized as $\hspace{8.5mm} \done{\fresh x}$ with  (\ref{tr:7}) in }\Sigma \\
x &: e(x) \mbox{ not defined and not initialized } 
\end{cases}
\end{align*}
\begin{equation*}
\begin{aligned}[c]
\decv {t, \Sigma\!} &:= t\\
\decv {\abs {\tlam x t} e \ell, \Sigma\!} &:= \decc {{\clos {\tlam x t} e}, \Sigma\!}\\
~\\
\deck{{\econf \sigma t e s}, \Sigma} &:= \plug {\decs{s, \Sigma\!}} {\decc{\clos t e, \Sigma\!}} \\
\deck{{\cconf \sigma   v s}, \Sigma} &:= \plug {\decs{s, \Sigma\!}} {\decv{v, \Sigma\!}}\\
\deck{k} &:=\deck{k,\Sigma_k}
\end{aligned}
\hspace{10mm}
\begin{aligned}[c]
\decs{\nil, \Sigma\!} &:= \hole\\
\decs{\cons {\frapp {\clos t e}} s, \Sigma\!} &:= \plug {\decs{s, \Sigma\!}} {\frapp {\decc{\clos t e, \Sigma\!}}}\\
\decs{\cons {\flapp t} s, \Sigma\!} &:= \plug {\decs{s, \Sigma\!}} {\flapp t}\\
\decs{\cons {\flam x} s, \Sigma\!} &:= \plug {\decs{s, \Sigma\!}} {\flam x}\\
\decs{\cons {\fcache \ell} s, \Sigma\!} &:= \decs{s, \Sigma\!}
\end{aligned}
\end{equation*}
\end{table}

Our decoding is presented in Table \ref{fig:decoding}. It uses
decodings $\decc{\cdot}$ of closures, $\decv{\cdot}$ of values, and
$\decs{\cdot}$ of stacks. For the most part, the decoding is simple
and standard (the currently processed term or closure is decoded and
plugged in the context obtained from the decoding of the stack),
however the case of variables requires an explanation.  As mentioned
above, we are interested in the correspondence between our machine and
the normal-order strategy. Since the latter recomputes function
arguments that occur more than once in the function body, there are
moments in time when some of the occurrences are already computed and
the others are not. To keep track of this, in the decoding of a
variable we are not interested in the current state of the store, but
instead we prefer the state of the store from the moment of the
initialization of the variable.  Therefore each of the decoding
functions carries an additional argument $\Sigma_k$, namely the full
sequence of configurations from the initial one to $k$.  This
additional argument is also implicitly present in decodings $\deck{k}$
of configurations; we omit it only to shorten the notation.

Another subtle point is the treatment of free and bound variables. We
treat them differently in order to be able to handle open terms. Free
variables are never added to any environment and thus they remain
untouched in the decoding. On the other hand, bound variables are
overlined. We assume here that the set of overlined variables forms a
disjoint copy of the original set of variables. This way the closure
$\big((\lambda x. \lambda y. x y) y ,[]\big)$ decodes to
$(\lambda \overline{x}. \lambda \overline{y}.\overline{x}
\overline{y}) y$ and its reduct
$\big((\lambda y. x y),[x \mapsto y]\big)$ decodes to
$\lambda \overline{y}. y \overline{y}$ thus avoiding the capture of variable~$y$.
  
In summary, the four cases in the decoding of
$\decc{\clos x e, \Sigma}$ correspond to the following four situations: (i)
$x$ is bound and not initialized in $\Sigma$; (ii) $x$ is bound and
initialized with (\ref{tr:6}); (iii) $x$ is bound and initialized with
(\ref{tr:7}); (iv) $x$ is free.

The decodings of configurations from the elaborate example execution
from Table~\ref{fig:execution_ex_ref} are shown in
Table~\ref{fig:decodings_execution_ex_ref} (decoding 1).
\subsubsection{Alternative decodings}
\label{alternative_decodings}
This subsection is extraneous to the main content of the paper.
We provide it to the interested reader who wants to explore
design choices of the decoding and bridges to other research developments.

We consider two alternative decodings that are independent of the execution history. We enumerate all of the decodings we consider:
\begin{enumerate}
\item The decoding described above, dependent on the execution history.
\item Same as above, but decode variables to values that are currently in the store.
\item Same as above, but decode variables to terms
reconstructed below corresponding $\fcache \ell$ frames.  Then the
transition~(\ref{tr:6}) would be decoded to parallel $\beta$-reduction
in multicontexts as in \cite{DBLP:conf/lics/AccattoliCC21} and the
transition~(\ref{tr:5}) would not change the decoded term.
\end{enumerate}

\begin{table}[htb]
\caption{Decodings of the elaborate example execution}
\label{fig:decodings_execution_ex_ref}
{
\begin{align*}
\text{step}\phantom{:} && \text{decoding 1} && \text{decoding 2} && \text{decoding 3} \\ 
0\text{--}2:&& {\tapp {({\tlam {\overline {x}} {{\tapp {{\tapp {c} {\overline {x}}}} {\overline {x}}}}})} {{(\tapp {\overline{A}} {\overline{\Omega}})}}} && {\tapp {({\tlam {\overline {x}} {{\tapp {{\tapp {c} {\overline {x}}}} {\overline {x}}}}})} {{(\tapp {\overline{A}} {\overline{\Omega}})}}} && {\tapp {({\tlam {\overline {x}} {{\tapp {{\tapp {c} {\overline {x}}}} {\overline {x}}}}})} {{(\tapp {\overline{A}} {\overline{\Omega}})}}} \\
3\text{--}10:&& {\tapp {{\tapp {c} {{(\tapp {\overline{A}} {\overline{\Omega}})}}}} {{(\tapp {\overline{A}} {\overline{\Omega}})}}} && {\tapp {{\tapp {c} {{(\tapp {\overline{A}} {\overline{\Omega}})}}}} {{(\tapp {\overline{A}} {\overline{\Omega}})}}} && {\tapp {{\tapp {c} {{(\tapp {\overline{A}} {\overline{\Omega}})}}}} {{(\tapp {\overline{A}} {\overline{\Omega}})}}} \\
11\text{--}12:&& {\tapp {{\tapp {c} {({\tlam {\overline {z}} {{\tapp {\overline{I}} {\overline {z}}}}})}}} {{(\tapp {\overline{A}} {\overline{\Omega}})}}} && {\tapp {{\tapp {c} {({\tlam {\overline {z}} {{\tapp {\overline{I}} {\overline {z}}}}})}}} {{(\tapp {\overline{A}} {\overline{\Omega}})}}} && {\tapp {{\tapp {c} {({\tlam {\overline {z}} {{\tapp {\overline{I}} {\overline {z}}}}})}}} {{\tlam {\overline {z}} {{\tapp {\overline{I}} {\overline {z}}}}}}} \\
13:&& {\tapp {{\tapp {c} {({\tlam {\overline {z}} {{\tapp {\overline{I}} {\overline {z}}}}})}}} {{(\tapp {\overline{A}} {\overline{\Omega}})}}} && {\tapp {{\tapp {c} {({\tlam {\overline {z}} {{\tapp {\overline{I}} {\overline {z}}}}})}}} {{\tlam {\overline {z}} {{\tapp {\overline{I}} {\overline {z}}}}}}} && {\tapp {{\tapp {c} {({\tlam {\overline {z}} {{\tapp {\overline{I}} {\overline {z}}}}})}}} {{\tlam {\overline {z}} {{\tapp {\overline{I}} {\overline {z}}}}}}} \\
14\text{--}16:&& {\tapp {{\tapp {c} {{(\tlam {{z}_0} {{\tapp {\overline{I}} {{z}_0}}}}}})} {{(\tapp {\overline{A}} {\overline{\Omega}})}}} && {\tapp {{\tapp {c} {{(\tlam {{z}_0} {{\tapp {\overline{I}} {{z}_0}}}}}})} {{\tlam {\overline {z}} {{\tapp {\overline{I}} {\overline {z}}}}}}} && {\tapp {{\tapp {c} {{(\tlam {{z}_0} {{\tapp {\overline{I}} {{z}_0}}}}}})} {{\tlam {{z}_0} {{\tapp {\overline{I}} {{z}_0}}}}}} \\
17\text{--}24:&& {\tapp {{\tapp c {({\tlam {{z}_0} {{z}_0}})}}} {{(\tapp {\overline{A}} {\overline{\Omega}})}}} && {\tapp {{\tapp c {({\tlam {{z}_0} {{z}_0}})}}} {{\tlam {\overline {z}} {{\tapp {\overline{I}} {\overline {z}}}}}}} && {\tapp {{\tapp c {({\tlam {{z}_0} {{z}_0}})}}} {{\tlam {{z}_0} {{{{z}_0}}}}}} \\
25:&& {\tapp {{\tapp c {({\tlam {{z}_0} {{z}_0}})}}} {{\tlam {\overline {z}} {{\tapp {\overline{I}} {\overline {z}}}}}}} && {\tapp {{\tapp c {({\tlam {{z}_0} {{z}_0}})}}} {{\tlam {\overline {z}} {{\tapp {\overline{I}} {\overline {z}}}}}}} && {\tapp {{\tapp c {({\tlam {{z}_0} {{z}_0}})}}} {{\tlam {{z}_0} {{{{z}_0}}}}}} \\
26\text{--}27:&& {\tapp {{\tapp c {({\tlam {{z}_0} {{z}_0}})}}} {{\tlam {{z}_0} {{z}_0}}}} && {\tapp {{\tapp c {({\tlam {{z}_0} {{z}_0}})}}} {{\tlam {{z}_0} {{z}_0}}}} && {\tapp {{\tapp c {({\tlam {{z}_0} {{z}_0}})}}} {{\tlam {{z}_0} {{z}_0}}}} \\
\end{align*}
}%
\end{table}

Table \ref{fig:decodings_execution_ex_ref} presents the three decodings for
the configurations from Table \ref{fig:execution_ex_ref}.
For brevity, we naturally define the following overlined terms as the same terms
with overlined variables:
$\overline{I} := \tlam {\overline{x}} {\overline{x}}$,
$\overline A := {\tlam {\overline{y}} {{\tlam {\overline{z}} {{\tapp {\overline{I}} {\overline{z}}}}}}}$, and
$\overline\Omega := \overline\omega\overline\omega$ with
$\overline\omega := \tlam {\overline{x}} {\tapp {\overline{x}} {\overline{x}}}$.

The three decodings agree on the initial configuration and first ten steps.
In particular, they decode step 3, performing the transition (\ref{tr:6}),
to the same normal-order $\beta$-reduction:
${\tapp {({\tlam {\overline {x}} {{\tapp {{\tapp {c} {\overline {x}}}} {\overline {x}}}}})} {{(\tapp {\overline{A}} {\overline{\Omega}})}}}
\stackrel{\no}{\to}
{\tapp {{\tapp {c} {{(\tapp {\overline{A}} {\overline{\Omega}})}}}} {{(\tapp {\overline{A}} {\overline{\Omega}})}}}$.

The first difference between decodings 1 and 2 appears after step 13,
performing transition~(\ref{tr:5}), which under the location
${\mathbbm{x}}$ puts the value
${\done {{{{\mathbbm{d}}}\!:=\!{{\clos {{\tlam {z} {{\tapp {I} {z}}}}}
          {\![y\!\mapsto\!{\mathbbm{y}}]}}}}}}$.  Here, in the
history-based decoding 1, closure $x^{\mathbbm{x}}$ is decoded to
$\tapp {\overline{A}} {\overline{\Omega}}$ (which is in line with
initialization in step 3), while in the store-based decoding 2 it
decodes to
$\tlam {\overline {z}} {{\tapp {\overline{I}} {\overline {z}}}}$.  On
one hand, decoding~2 has a less involved definition than the decoding
1 because it is history-free, i.e., it does not depend on the full
sequence of configurations of a reachable configuration (denoted
$\Sigma$).  On the other hand, the behaviour of the decoded term
degrades.  In particular, step 13 decodes to a $\beta$-reduction which
is not a normal-order reduction:
${\tapp {{\tapp {c} {({\tlam {\overline {z}} {{\tapp {\overline{I}}
              {\overline {z}}}}})}}} {{(\tapp {\overline{A}}
      {\overline{\Omega}})}}}
\not{\phantom{.}}\hspace{-3mm}\stackrel{\no}{\to} {\tapp {{\tapp {c}
      {({\tlam {\overline {z}} {{\tapp {\overline{I}} {\overline
                {z}}}}})}}} {{\tlam {\overline {z}} {{\tapp
          {\overline{I}} {\overline {z}}}}}}}$.  Moreover,
transition~(\ref{tr:5}), which is silent (i.e., it changes nothing) in
decoding 1, is decoded into a simultaneous replacement of terms in
potentially many places in decoding~2.  Again, the simplicity of the
definition of decoding is traded for the simple interpretation of
machine transitions to the degree that decoding 2 can be considered
ugly.

An interesting compromise is decoding 3, which does not depend on the
reduction history as decoding 2, but its formal definition is more involved because
it requires chasing ${\fcache \ell}$ frames on the stack.  The
information on how to decode the term is more dispersed.  However,
thanks to that, transition (\ref{tr:6}) can be read as a parallel
$\beta$-reduction.  In step 11, it happens in the multicontext
${\tapp {\tapp c \hole} \hole}$, and, in step 17, in the multicontext
${\tapp {{\tapp {c} {{(\tlam {{z}_0} \hole}}})} {{\tlam {{z}_0}
      \hole}}}$.  Similarly, transition~(\ref{tr:7}) can be read
as a parallel $\alpha$-conversion, as in step 14.  Transitions
(\ref{tr:4}) and (\ref{tr:8}), which are bypass transitions in
decoding~1 (Lemma~\ref{lem:bypass}), are silent in decoding~3.  In
fact, (\ref{tr:6}) and (\ref{tr:7}) are the only transitions that
change the decoded term.  Moreover, the parallel reduction can reduce
an arbitrarily big number of redexes at once.

A downside of decoding 3 is that the
machine execution is not seen as a simulation of a normal-order
reduction, and the relation of the machine to the normal-order
strategy is less obvious.  This moves some difficulty of the proof of
correctness of the machine from the decoding to the relation between
the parallel reduction and normal order, which is more theoretical
ground. Such an approach was taken in the study of call by value
\cite{DBLP:conf/lics/AccattoliCC21}.
Moreover, the departure from the normal-order strategy does not help
to relate the machine to the call-by-need calculi that perform single
$\beta$-reduction steps because of the parallel nature of the reduction
in decoding 3.

\paragraph*{Alternative approach to soundness.}
One might consider targeting a strong call-by-need strategy---rather
than normal order---as the reference semantics that could potentially
simplify the soundness argument by separating sharing concerns from
reduction concerns. We believe this would shift the proof burden
rather than reduce it. Strong CbNd does not admit a clean reduction
semantics in the pure $\lambda$-calculus (cf. discussion in
Section~\ref{sec:cbnd_strategy}): any formalization requires syntactic
extensions such as explicit substitutions or let-constructs,
introducing their own technical overhead. Normal order, by contrast,
has a standard and well-understood reduction semantics that serves as
a simple reference point. In particular, the discussion around
decoding 3 above hints at some of the complexity such an approach
would require.

\subsection{Ghost abstract machine for RKNL}
One of the goals in the design of the normalizer of
Listing~\ref{fig:cbnd-normalizer} was to keep the derived abstract
machine simple. Indeed, the machine has only 11 transitions, that is 5
transitions more than the machine for weak call by need (recalled in
Section~\ref{sec:conservative}) and as many as 13 transitions fewer
than the known machine for strong call by need
in~\cite{BiernackaC19}. One of the design choices to achieve this
goal was not to include the grammar of normal forms in the syntactic
category of values.  However, to reason about correctness, we need to
know that only terms in normal form may appear as values.

Therefore,
for theoretical purposes,
we introduce a ghost abstract machine RKNLi---a more refined
version of RKNL with explicitly expressed invariants---presented in Table~\ref{fig:shape_invariant}.
It can be seen as a proof technique tailored to
reason about invariants of abstract machines (in this case, RKNL, in particular in Lemmas~\ref{lem:stack}~and~\ref{lem:increase}).

The ghost machine bisimulates RKNL so that every reachable
configuration of RKNL is a projection of a reachable configuration of
the ghost machine. The name \textit{ghost} is analogous to ghost variables
and ghost code from systems such as Why3. Such variables can be added to
the executable code just to state and prove its invariants. However, their
values are not computed during real execution. Thus, the executable code can be
seen as a projection of the code with explicitly stated invariants.

\begin{table}[htbp]
\caption{The RKNLi ghost abstract machine, RKNL with explicit shape invariant}
\label{fig:shape_invariant}
\begin{alignat*}{3}
\mathit{Identifiers}   \ni &\;& x &\\
\mathit{Terms}         \ni && t   &::= x \alt \tapp{t_1}{t_2} \alt \tlam x t\\
\mathit{Normal\;Terms} \ni && n   &::= \tlam x n \alt \lceil a \rceil\\
\mathit{Neutral\;Terms} \ni && a   &::= x \alt \tapp{a}{n}\\
\mathit{Locations} \ni && \ell\\
\mathit{Envs}   \ni && e   & \;\subtype \fun {\mathit{Identifiers}} {\mathit{Locations}} \\
\mathit{Closures} \ni && c   &::= \clos t e\\
\mathit{Locations'} \ni && \ell'\\
\mathit{Values} \ni && v   &::= a \alt \abs {\tlam x t} e {\ell'}\\
\mathit{Storable\;Values} \ni && \_ & ::= \tobedone c \alt \done v\\
\mathit{Stores} \ni && \;\sigma & \;\subtype \fun {\mathit{Locations}} {\mathit{Storable\;Values}}\\
\mathit{Optional\;Normal\;Terms} \ni && \_ & ::= \tobedone \bot \alt \done n\\
\mathit{Stores'} \ni && \;{\sigma'} & \;\subtype \fun {\mathit{Locations'}} {\mathit{Optional\;Normal\;Terms}}\\
\mathit{Potentially\;Applicative\;Stacks}    \ni && \pi &::= \cons {\fcache \ell} \pi
	\alt \cons {\frapp c} \pi
	\alt \cons {\lceil \hole \rceil} \varrho \\
\mathit{Non\text{-}applicative\;Stacks} \ni && \varrho    &::=  \cons {\fcache {\ell'}} \varrho
	\alt \nil
	\alt \cons {\flam x} \varrho
	\alt \cons {\flapp a} \pi \\
\mathit{Confs} \ni && k   &::=
  \langle c, \pi, {\sigma, \sigma'} \rangle_\etrian
  \alt \cconf {\sigma, \sigma'} v \pi
  \alt \cpconf {\sigma, \sigma'} n \varrho\\
\mathit{Transitions:}\\[-3em]
\end{alignat*}%

\begin{align}
   t &\mapsto \econf {\nil, \nil} t \nil {\cons {\lceil \hole \rceil} \nil}\nonumber\\
\econf {\sigma, \sigma'} {\tapp{t_1}{t_2}} e {\hspace{16mm}\pi} &\to \econf         {\sigma, \sigma'}     {t_1} e {\cons{\frclos{t_2} e} {\pi}} \tag{$1$}\label{tri:1}\\
\econf {\sigma, \sigma'}       {\tlam x t} e {\hspace{16mm}\pi} &\to \cconf {\sigma, \alloc {\sigma'} {\ell'} {\tobedone \bot}} {\abs {\tlam x t} e {\ell'}} \pi \tag{$2$}\label{tri:2}\\
  \econf {\sigma, \sigma'}              x    e {\hspace{16mm}\pi} &\to \econf {\sigma, \sigma'} t {e_2} {\cons {\fcache \ell} \pi}\nonumber\\
  & \hspace{19mm}\mbox{where } \ell={\lookup e x}, \;{\lookup \sigma {\ell}} = \tobedone {\clos t {e_2}}  \tag{$3$}\label{tri:3}\\
  \econf {\sigma, \sigma'}              x    e {\hspace{16mm}\pi} &\to \cconf {\sigma, \sigma'} v \pi \nonumber\\
     & \hspace{19mm}\mbox{where } {\lookup \sigma {\lookup e x}} = \done v  \vee (v = x \notin e) \tag{$4$}\label{tri:4}\\
\cconf {\sigma, \sigma'} v {\hspace{6mm}\cons {\fcache \ell} \pi} &\to \cconf {\update \sigma \ell {\done v}, \sigma'} v \pi \tag{$5$}\label{tri:5}\\
\cpconf {\sigma, \sigma'} n {\hspace{4mm} \cons {\fcache {\ell'}} \varrho} &\to \cpconf {\sigma, \update {\sigma'} {\ell'} {\done n}} n \varrho \tag{$5'$} \label{tri:5'}\\
\cconf {\sigma, \sigma'} {\abs {\tlam x t} e {\ell'}} {\cons {\frclos{t_2} {e_2}} \pi} &\to \econf {\alloc \sigma {\ell_2} {\tobedone {\clos{t_2} {e_2}}}, \sigma'} t {\update e x {\ell_2}} \pi \tag{$6$}\label{tri:6}\\
\cconf {\sigma, \sigma'} {\abs {\tlam x t} e {\ell'}} {\hspace{7.7mm}{\cons {\lceil \hole \rceil} \varrho}} &\to \langle {\clos t {\update e x {\ell_2}}}, {\cons {\lceil \hole \rceil} {\cons {\flam {\fresh x}} {\cons {\fcache {\ell'}} \varrho}}}, \nonumber\\
& \hspace{10.6mm} {\alloc \sigma {\ell_2} {\done {\fresh x}}, \sigma'} \rangle_\etrian
\hspace{2.5mm}\mbox{where } \lookup {\sigma'} {\ell'} = \tobedone \bot \tag{$7$}\label{tri:7} \\
\cconf {\sigma, \sigma'} {\abs {\tlam x t} e {\ell'}} {\hspace{7.7mm}{\cons {\lceil \hole \rceil} \varrho}} &\to \cpconf {\sigma, \sigma'} n \varrho \hspace{14.5mm}\mbox{where } \lookup {\sigma'} {\ell'} = \done n \tag{$8$}\label{tri:8}\\ 
\cconf {\sigma, \sigma'} a                         {\cons {\frclos{t_2} {e_2}} \pi} &\to \econf {\sigma, \sigma'} {t_2} {e_2} {\cons {\lceil \hole \rceil} {\cons {\flapp a} \pi}}\tag{$9$}\label{tri:9}\\
\cconf {\sigma, \sigma'} {a} { \hspace{7.7mm} \cons {\lceil \hole \rceil} \varrho} &\to \cpconf {\sigma, \sigma'} {\lceil a \rceil} \varrho \tag{$9a$} \label{tri:9a}\\
\cpconf {\sigma, \sigma'} {n} {\hspace{7.3mm}\cons {\flapp {a}} \pi} &\to \cconf {\sigma, \sigma'} {\tapp {a} {n}} \pi \tag{$10$}\label{tri:10}\\ 
\cpconf {\sigma, \sigma'} n {\hspace{5.3mm}\cons {\flam x} \varrho} &\to \cpconf {\sigma, \sigma'} {\tlam x n} \varrho \tag{$11$}\label{tri:11}\\
\cpconf {\sigma, \sigma'} n {\hspace{14mm}\nil} &\mapsto n \nonumber
\end{align}
\end{table}

The main difference between RKNL and RKNLi is the grammar of normal
terms in RKNLi. Note that each neutral term is in normal form. For
technical reasons we want to distinguish between neutral terms and
normal forms, so we add an explicit coercion: a neutral term $a$
tagged with $\lceil\cdot\rceil$ becomes a normal term
$\lceil a\rceil$. Introduction of normal terms triggers further
refinements in stacks and configurations. Now we have two variants of
stacks: potentially applicative stacks may have an argument on top, while
non-applicative stacks certainly do not have arguments on top. This in
turn introduces two variants of $\ctrian$-configurations: one with
potentially applicative stacks and one with non-applicative stacks. Yet another
difference is the introduction of a separate store: RKNLi has two
stores, one for evaluated arguments and one for
normalized lambda abstractions.

Changes in the grammars lead to changes in transitions. RKNLi has two
versions of transition~(\ref{tr:5}): one for normal terms and one for
values (which includes neutral terms). It also has an additional
transition~(\ref{tri:9a}) that implements the coercion from neutral to
normal terms: after finishing the evaluation of an argument of a
neutral term initialized by transition~(\ref{tri:9}) (as well as after
evaluating the body of a lambda abstraction initialized by
transition~(\ref{tri:7}) and after completing the whole evaluation),
the result being a neutral term must be coerced to a normal term.
From the point of view of the RKNL machine, this step is silent.
Lemma~\ref{lem:shape} states formally the correspondence between the
two machines.

\begin{remark}\label{ghost_derivation}
  Good news about ghost machines as a proof technique is that ghost machines
  neither need to be designed by hand, but can be derived from a higher-order
  normalizer. However, such a derivation is easier to track if presented in a
  statically typed language where data structures are defined by algebraic
  data types.

  In the normalizer, being the starting point of the derivation, such as one in
  Listing~\ref{fig:no-normalizer}, the output term can be represented by
  the type expressing the grammar of normal terms. This type contains an
  explicit coercion from neutral terms to normal terms. In
  Table~\ref{fig:shape_invariant}, the coercion is $\lceil a \rceil$ from the
  grammar of $n$. The continuations obtained by CPS translation appropriately
  handle normal and neutral terms, which is reflected in their types. After
  defunctionalization, they form the algebraic data types of the stacks,
  expressing the shape invariant. The coercion is imprinted on the grammar of
  stacks as the $\lceil \hole \rceil$ frame. Similarly to previous steps,
  functions that appropriately handle different types of stacks give rise to
  more types of machine configurations.

  The ghost machine was not obtained from the {\tt semt} tool because it works
  on an untyped language and thus the invariants would not be reflected in the
  types. However, an analogous tool for a statically typed language would be
  useful to derive ghost machines faster.

  An additional example of a ghost machine, analogous to RKNLi, was prepared
  by~\citet{Drab23} for the KN machine \cite{Cregut:HOSC07}, while the code
  accompanying~\cite{BiernackaBCD20} contains a~derivation of a ghost machine
  (called invariant derivation) for the KNV machine introduced there.
\end{remark}

\subsection{Soundness}
In this section we prove the soundness of the RKNL machine with
respect to the normal-order strategy. Note that we do not rely here on
the correctness of the transformations used in the {\tt semt} tool as
the tool is not formally verified. Instead, we directly show that the
machine faithfully realizes the strategy.

The result is proved by a~sequence of lemmas showing that each
transition of the machine is sound, where soundness of a transition
$k\to k'$ is understood as reducibility (up to $\alpha$-renaming) of
the decoding of $k$ to the decoding of $k'$.  Note that since
$\beta$-reduction commutes with $\alpha$-conversion
(cf. Example~\ref{ex:alpha}), a sequence of sound transitions is sound.

\begin{lemma}[load correctness]\label{lem:load}
$\deck{\econf \nil t \nil \nil} =_\alpha t$. 
\end{lemma}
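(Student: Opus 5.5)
We unfold the definition of $\deck{\cdot}$ on the initial configuration. Here the history $\Sigma$ consists only of the initial configuration itself, so no location has been initialized by (\ref{tr:6}) or (\ref{tr:7}). Since the stack is empty, $\decs{\nil,\Sigma} = \hole$. Hence $\deck{\econf \nil t \nil \nil} = \decc{\clos t \nil, \Sigma}$, and it suffices to prove a more general claim about closures whose environments contain only ``overlined'' bindings.

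The generalized claim is as follows. Let $e$ be an environment produced solely by the abstraction clause of the decoding, i.e.\ $e(y) = \overline y$ for $y$ in some finite set $B$ and undefined otherwise. Then $\decc{\clos t e, \Sigma}$ is obtained from $t$ by renaming to $\overline y$ every occurrence of a variable $y$ whose binder lies in $t$ or which belongs to $B$. We prove this by structural induction on $t$.

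The induction cases are:
\begin{itemize}
\item Application: the decoding distributes over $\tapp{t_1}{t_2}$ with the same environment, so the claim follows directly from the induction hypotheses.
\item Abstraction $\tlam x t'$: the decoding yields $\tlam{\overline x}{\decc{\clos{t'}{\update e x {\overline x}},\Sigma}}$. The extended environment is again of the required form, with $B \cup \{x\}$. The update shadows any earlier binding of $x$, which matches the scoping of the inner binder.
\item Variable $x$: if $x \in B$, the first case of the variable decoding applies and gives $\overline x$. Otherwise $e(x)$ is undefined and not initialized in $\Sigma$, since $\Sigma$ contains no (\ref{tr:6}) or (\ref{tr:7}) transitions, so the fourth case gives $x$.
\end{itemize}

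Taking $B = \emptyset$, we obtain that $\decc{\clos t \nil,\Sigma}$ is $t$ with every bound variable consistently renamed to its overlined copy, and with free variables left unchanged. Since the overlined variables form a disjoint copy of the identifiers, this renaming cannot capture any free variable of $t$. It is also injective on binders, so it is an $\alpha$-renaming, and therefore the result is $=_\alpha t$.

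The main subtlety is the variable case. We must argue that the case analysis of the decoding is unambiguous for the initial history: an overlined binding counts as ``defined by the rule above'' and not as a store location. We must also check that shadowed binders such as $\tlam x {\tlam x x}$ are decoded consistently. Both follow from the generalized invariant on $e$, since an environment update overrides the previous binding.
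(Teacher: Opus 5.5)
Your proof is correct and takes the same route as the paper. The paper also argues that recursively decoding $\clos t \nil$ only extends the environment with overlined copies of bound variables, so bound variables become their overlined versions while free variables and the term structure are untouched; your generalized invariant on environments and explicit structural induction make that sketch precise.
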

\begin{proof}
  As the closure $ \clos t {\nil}$ is
  decoded recursively, the environment is extended only by overlined
  bound variables of the source term. While bound source variables
  are decoded to overlined versions of themselves and bound by the
  appropriate abstractions, the free variables and the structure of
  the term remain untouched, resulting in an $\alpha$-renamed version of
  the loaded term.
\end{proof}

\begin{lemma}[overhead transitions]
  \label{lem:overhead}
If $k \stackrel{(\iota)}{\to} k'$ then $\deck k = \deck {k'}$ for $\iota \notin \{\ref{tr:4}, \ref{tr:6}, \ref{tr:7}, \ref{tr:8}\}$.
\end{lemma}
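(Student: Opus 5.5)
The proof is a case analysis over the remaining transitions (\ref{tr:1}), (\ref{tr:2}), (\ref{tr:3}), (\ref{tr:5}), (\ref{tr:9}), (\ref{tr:10}) and (\ref{tr:11}). In each case we unfold the definitions of Table~\ref{fig:decoding} on both sides and check that the same term results.

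First we need one preliminary observation: the decoding of a closure is stable under extending the history. If $k \to k'$, then $\Sigma_{k'}$ extends $\Sigma_k$ by one configuration. The decoding of $\clos x e$ depends only on the transition that \emph{initialized} $e(x)$, and that transition already lies in $\Sigma_k$ for every location reachable from $k$. Hence $\decc{c,\Sigma_k} = \decc{c,\Sigma_{k'}}$ for every closure $c$ occurring in $k$. The same holds for decodings of stacks and values, by induction on their structure. This relies on the fact that transitions (\ref{tr:3}), (\ref{tr:5}) and (\ref{tr:2}) never re-initialize an existing location: they either read it, update it from \emph{to do} to \emph{done}, or allocate a fresh one. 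This is the step I expect to need the most care, since it is the only point where the history-dependent decoding matters. We state it as an auxiliary fact proved by induction on the definition of $\decc{\cdot}$. After it, each transition reduces to a local syntactic identity.

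The cases then go as follows.
\begin{itemize}
\item (\ref{tr:1}): both sides decode to $\plug{\decs{s}}{\tapp{\decc{\clos{t_1}e}}{\decc{\clos{t_2}e}}}$, by the application clause of $\decc{\cdot}$ and the $\frapp{}$ clause of $\decs{\cdot}$.
\item (\ref{tr:2}): we use $\decv{\abs{\tlam x t}e\ell} = \decc{\clos{\tlam x t}e}$. The newly allocated location $\ell$ holds $\tobedone\bot$ and does not appear in any environment, so it is irrelevant to the decoding.
\item (\ref{tr:3}): the frame $\fcache\ell$ decodes to nothing. Because $\sigma(e(x)) = \tobedone{\clos t{e_2}}$ and this location was initialized by (\ref{tr:6}), the second clause of the variable case gives $\decc{\clos x e} = \decc{\clos t{e_2}}$.
\item (\ref{tr:5}): the cache frame is erased by $\decs{\cdot}$ and $v$ is unchanged. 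The store update does not affect the decoding, because the decoding consults only initialization, not the current contents.
\item (\ref{tr:9}), (\ref{tr:10}), (\ref{tr:11}): these are pure refocusing. For (\ref{tr:9}), $\plug{\decs s}{\tapp t{\decc{\clos{t_2}{e_2}}}}$ equals $\plug{\decs{\cons{\flapp t}s}}{\decc{\clos{t_2}{e_2}}}$ by the $\flapp{}$ clause. For (\ref{tr:10}) and (\ref{tr:11}) we use $\decv{t}=t$ together with the $\flapp{}$ and $\flam{}$ clauses of $\decs{\cdot}$.
\end{itemize}

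Each case is therefore a one-line equation once the stability observation is in place. The equalities are literal, not merely up to $\alpha$, because none of these transitions introduces a fresh name.
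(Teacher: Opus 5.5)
Your proposal is correct and follows the same route as the paper: a case analysis over transitions~(\ref{tr:1}), (\ref{tr:2}), (\ref{tr:3}), (\ref{tr:5}), (\ref{tr:9}), (\ref{tr:10}) and (\ref{tr:11}), each reducing to a local identity of the decoding. The only difference is that you spell out that the history-dependent decoding is unchanged when $\Sigma_k$ is extended by one step, a fact the paper uses tacitly.
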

\begin{proof}
  In transition~(\ref{tr:1}), the structural change is that the right part of
  the application is peeled off and put on the stack, while the environment
  percolates to both parts. This directly corresponds to the decoding.
  Store modifications of transitions~(\ref{tr:2}) and (\ref{tr:5}) do not
  affect the decoding. Transition~(\ref{tr:3}) replaces a variable by
  a closure with the same decoding. Transitions~(\ref{tr:9}), (\ref{tr:10}) and
  (\ref{tr:11}) only recompose the processed term.
\end{proof}

\begin{lemma}[alpha transition]\label{lem:alpha}
  If $k$ is a reachable configuration of RKNL and
  $k \stackrel{(\ref{tr:7})}{\to} k'$, then $ \deck k =_\alpha \deck {k'}$.
\end{lemma}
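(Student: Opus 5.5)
Transition~(\ref{tr:7}) rewrites
\[
\cconf \sigma {\abs {\tlam x t} e \ell} s \;\to\; \econf {\alloc \sigma {\ell_2} {\done {\fresh x}}} t {\update e x {\ell_2}} {\cons {\flam {\fresh x}} {\cons {\fcache \ell} s}}.
\]
I will compute both decodings and show they differ only by renaming one bound variable. On the left, $\deck k = \plug{\decs{s,\Sigma}}{\decv{\abs{\tlam x t} e \ell,\Sigma}}$, and by Table~\ref{fig:decoding} the value decodes to $\tlam{\overline x}{\decc{\clos t {\update e x {\overline x}},\Sigma}}$. On the right, the frame $\fcache\ell$ is transparent and $\flam{\fresh x}$ contributes a binder, so $\deck{k'} = \plug{\decs{s,\Sigma'}}{\tlam{\fresh x}{\decc{\clos t{\update e x {\ell_2}},\Sigma'}}}$, where $\Sigma'$ extends $\Sigma$ by $k'$. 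Since $\ell_2$ was initialized by (\ref{tr:7}) with $\done{\fresh x}$, occurrences of $x$ resolving to $\ell_2$ decode to $\fresh x$.

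\textbf{Step 1: extending the history changes nothing old.} The decoding of variables depends only on how a location was \emph{initialized}, and initialization happens once per location. Hence anything decoded under $\Sigma$ decodes identically under $\Sigma'$. In particular $\decs{s,\Sigma'}=\decs{s,\Sigma}$, and the same holds for all closures reachable from $e$.

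\textbf{Step 2: the key substitution lemma.} I will prove, by induction on $t$, that
\[
\decc{\clos t {\update e x {\ell_2}},\Sigma'} = \subst{\overline x}{\fresh x}{\decc{\clos t{\update e x{\overline x}},\Sigma}}.
\]
To make the induction go through, I generalize over an arbitrary environment extension by overlined binders introduced while descending under inner lambdas. Cases:
\begin{itemize}
\item Application: immediate by the induction hypothesis.
\item Abstraction $\tlam y u$: the inner binder is $\overline y$. If $y=x$, the outer entry is shadowed on both sides, so both decodings coincide and contain no free $\overline x$. Otherwise apply the induction hypothesis. We need $\overline y\neq \fresh x$, which holds because $\fresh x$ is fresh and overlined names form a disjoint copy.
\item Variable $z$: if $z=x$, the two sides give $\fresh x$ and $\overline x$ respectively. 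Otherwise the lookup in $e$ is untouched, and the decoded closure from the store contains no free $\overline x$. I will justify the latter by an auxiliary invariant of reachable configurations: the decoding of any closure $\clos u {e'}$ reachable from a configuration has no free overlined variables except those bound by enclosing $\flam{}$ or by the decoding itself. I will prove this invariant by a short induction over transitions, which is where reachability is used.
\end{itemize}

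\textbf{Step 3: conclude.} By Step 2, $\tlam{\overline x}{M} =_\alpha \tlam{\fresh x}{\subst{\overline x}{\fresh x}M}$. This requires $\fresh x$ not free in $M$, which is guaranteed by freshness of the name generated in (\ref{tr:7}). Plugging into the equal contexts from Step 1, and using that $=_\alpha$ is a congruence, gives $\deck k=_\alpha\deck{k'}$.

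The main obstacle is Step 2's variable case together with the freshness and no-capture invariant. The obstacle is stating precisely that free occurrences of $\overline x$ and of $\fresh x$ cannot arise from store-retrieved closures or from the surrounding stack context. I would first try to state this as a lemma about reachable configurations, namely that all generated names $\fresh x$ are pairwise distinct and absent from closures created before their generation, and discharge it by induction on the execution $\Sigma$.
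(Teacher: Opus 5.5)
Your proposal is correct and takes the same route as the paper's proof. Both decode the two configurations, note that the stack context is unchanged apart from the extra $\tlam{\fresh x}{\hole}$ binder, observe that the body decodes to the same term with $\overline x$ replaced by $\fresh x$, and conclude by $\alpha$-equivalence; your induction on $t$, together with the freshness and no-free-overlined-variable facts, spells out what the paper asserts in a single sentence.
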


\begin{proof}
  The configuration $k$ is of the form $\cconf \sigma {\abs {\tlam x t} e {\ell}} s$
  and $k'$ is of the form $\econf {\alloc \sigma {\ell_2} {\done {\fresh x}}} t {\update e x {\ell_2}} {\cons {\flam {\fresh x}}{\cons {\fcache \ell} s}}$, 
  where $\lookup \sigma \ell = \tobedone \bot$.
  Let $C$ be the context obtained from the decoding of the stack $s$ in
  $k$. Then $\plug C {\flam {\fresh x}}$ is the decoding of the stack
  in $k'$. The decoding of the value $\abs {\tlam x t} e \ell$ in $k$
  is
  $\tlam {\overline x} {\decc{\clos t {\update e x {\overline x}},
      \Sigma_k\!}}$, where all occurrences of $x$ are decoded to
  $\overline x$.  On the other hand, the decoding of the closure in
  $k'$ is $\decc{\clos t {\update e x {\ell_2}}, \Sigma_{k'}}$. Since
  $\Sigma_{k'}$ differs from $\Sigma_{k}$ only by the last
  configuration initializing $\ell_2$ with $\fresh x$, this gives the
  same term with $\overline x$ replaced by $\fresh x$.  Hence
  $\plug C {\decv{\abs {\tlam x t} e \ell, \Sigma_k}}$ is
  $\alpha$-equivalent to
  $\plug C {\plug {\lambda {\fresh x.}} {\decc{\clos t {\update e x {\ell_2}},
        \Sigma_{k'}}}}$, which means that $ \deck k =_\alpha \deck {k'}$.
\end{proof}

In the following, by a projection of a stack ($\pi$ or $\varrho$) of RKNLi we
mean the stack obtained by removing all occurrences of the frame
$\lceil \hole \rceil$, and erasing all tags $\lceil \cdot \rceil$ from
normal terms. The pair of stores in RKNLi corresponds to the single
store in RKNL, defined roughly as their union. This notion of
projection lifts naturally to terms, configurations and all the
grammars occurring in the definition of RKNLi in
Table~\ref{fig:shape_invariant}.

The following lemma is proved by a simple induction and case analysis
of all transitions. In particular it implies that there are two types
of stacks $s$ occurring in reachable configurations of RKNL:
projections of stacks generated by the grammars of $\pi$ and $\varrho$ from the
definition of RKNLi.

\begin{lemma}[shape invariant]
\label{lem:shape}
Every reachable configuration of RKNL is a projection of a
reachable configuration of RKNLi.
\end{lemma}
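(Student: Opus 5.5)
We prove a slightly stronger, step-indexed statement by induction on the length of the RKNL execution. If $k_0 \to k_1 \to \dots \to k_m$ is an RKNL execution from an initial configuration, then there is an RKNLi execution $K_0 \to^{*} K_1 \to^{*} \dots \to^{*} K_m$ from the corresponding initial configuration such that each $K_i$ projects to $k_i$. Moreover, each RKNL step $k_i \to k_{i+1}$ is matched by the correspondingly numbered RKNLi transition, possibly preceded by finitely many silent steps~(\ref{tri:9a}). Recall that projection removes $\lceil\hole\rceil$ frames, erases the tags $\lceil\cdot\rceil$, and merges $\sigma,\sigma'$ into one store; $\lceil\cdot\rceil$ is also dropped inside terms.

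The base case is immediate. The initial RKNLi configuration $\econf {\nil, \nil} t \nil {\cons {\lceil \hole \rceil} \nil}$ projects to $\econf \nil t \nil \nil$. For the inductive step, suppose $K$ projects to $k$ and $k \stackrel{(\iota)}{\to} k'$. We split by $\iota$ and by the mode and shape of $K$.

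For $\iota \in \{1,2,3,4,6\}$ the RKNL rule inspects only the closure and the top stack frame when that frame is $\frapp c$ or $\fcache\ell$. These frames are present verbatim in $\pi$, so the same-numbered RKNLi rule applies. Allocation of location $\ell$ in~(\ref{tr:2}) goes to $\sigma'$, and in~(\ref{tr:6}) to $\sigma$. We also need a side invariant that the two location sets are disjoint and that each location lives in the appropriate store. This holds because (\ref{tri:2}) alone allocates in $\sigma'$, and (\ref{tri:6}) and~(\ref{tri:7}) alone allocate in $\sigma$.

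For~(\ref{tr:4}) we must check that the value read is a valid RKNLi value, i.e. either $a$ or an $\abs{\cdot}{\cdot}{\ell'}$. This needs a store invariant: $\sigma$ maps to $\tobedone c$ or $\done v$ with $v$ an RKNLi value, and $\sigma'$ maps to $\tobedone\bot$ or $\done n$. A free variable $x$ and a fresh $\fresh x$ are neutral.

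The $\ctrian$-mode cases are where the ghost machine's extra structure is used. If $K$ is $\cconf{\sigma,\sigma'}{v}{\pi}$ we look at the head of $\pi$, and if $K$ is $\cpconf{\sigma,\sigma'}{n}{\varrho}$ we look at the head of $\varrho$.

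In RKNL, rule~(\ref{tr:7}) or~(\ref{tr:8}) fires on an abstraction when the top frame is not an argument. In $\pi$ that means the top is $\fcache\ell$ or $\lceil\hole\rceil$. The $\fcache\ell$ case is excluded because the first-match rule~(\ref{tr:5}) would fire instead. So the top is $\lceil\hole\rceil$, and (\ref{tri:7}) or~(\ref{tri:8}) applies. We need the store invariant to know that $\ell$ lies in $\sigma'$.

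Rule~(\ref{tr:9}) on a term $t$ with an argument on top becomes~(\ref{tri:9}). For this we need that a non-abstraction value in $\pi$-mode is neutral; the grammar of values in RKNLi guarantees it.

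Rules~(\ref{tr:5}), (\ref{tr:10}) and~(\ref{tr:11}) on a value $t$ in RKNL may correspond in RKNLi either to $\cconf{}{a}{\pi}$ or to $\cpconf{}{n}{\varrho}$. In the first case, if the top of $\pi$ is $\lceil\hole\rceil$, we first perform~(\ref{tri:9a}), which is silent under projection. We then land in $\cpconf{}{\lceil a\rceil}{\varrho}$ and apply~(\ref{tri:5'}), (\ref{tri:10}) or~(\ref{tri:11}) according to the next frame of $\varrho$. Otherwise the top of $\pi$ is $\fcache\ell$ and~(\ref{tri:5}) applies. Here we use that $\fcache\ell$ frames in $\pi$ carry $\sigma$-locations and those in $\varrho$ carry $\sigma'$-locations; this is again part of the store invariant, established by~(\ref{tri:3}) and~(\ref{tri:7}). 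For the final configuration, $\cconf{}{a}{\cons{\lceil\hole\rceil}\nil}$ takes one step~(\ref{tri:9a}) to reach the final $\cpconf{}{\lceil a\rceil}{\nil}$.

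The main obstacle is getting the auxiliary invariants exactly right and showing they are preserved. These are the store-typing invariant (which locations live in $\sigma$ versus $\sigma'$, and what they contain) and the frame-location typing. The key point is completeness of the case split: whenever RKNL can step, the matching RKNLi configuration is never stuck after at most one~(\ref{tri:9a}). This requires checking that the stack grammars $\pi$ and $\varrho$ exclude all unmatched combinations, for instance a $\flam x$ or $\flapp a$ frame directly under a $\pi$-mode value without an intervening $\lceil\hole\rceil$. We would state the invariants as a single predicate on RKNLi configurations and discharge each transition by inspecting the grammar productions.
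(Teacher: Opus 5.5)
Your overall plan (a simulation by induction on the RKNL run, with RKNLi inserting at most one silent (\ref{tri:9a}) step, plus store- and location-typing invariants) is the right one. It is essentially what the paper's one-line ``induction and case analysis of all transitions'' amounts to. However, your case split misses one case, and your proposed way of closing cases fails on it.

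RKNL rule (\ref{tr:5}) fires on \emph{any} value with a $\fcache{\ell}$ frame on top. In particular it fires on an abstraction closure $\abs{\tlam x t}{e}{\ell}$; the step from configuration 12 to 13 in Table~\ref{fig:execution_ex_ref} is of this kind. You treat (\ref{tr:5}) only for term values, where a leading $\lceil\hole\rceil$ can be discharged by (\ref{tri:9a}). For an abstraction closure there is no such silent step. Suppose the matching RKNLi configuration were $\cconf{\sigma,\sigma'}{\abs{\tlam x t}{e}{\ell'}}{\cons{\lceil\hole\rceil}{\cons{\fcache{\ell'_0}}{\varrho}}}$. Its stack projects to one with $\fcache{\ell'_0}$ on top, so RKNL performs (\ref{tr:5}) while RKNLi performs its rule (\ref{tri:7}) or (\ref{tri:8}), and the projections diverge. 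This stack is legal according to the grammars, because $\varrho ::= \cons{\fcache{\ell'}}{\varrho} \alt \ldots$, and it is compatible with all your typing invariants. So your closing claim that the $\pi$/$\varrho$ grammars exclude every unmatched combination fails exactly here.

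The missing ingredient is a reachability invariant that the grammars do not encode. In every reachable RKNLi configuration, the frame directly below each $\lceil\hole\rceil$ is $\flam{x}$, $\flapp{a}$, or the bottom of the stack, never $\fcache{\ell'}$. It is preserved for two reasons. First, $\lceil\hole\rceil$ is pushed only by the load, (\ref{tri:7}) and (\ref{tri:9}), each time directly on top of such a frame. Second, every transition only pushes or pops at the top, so the frame beneath a $\lceil\hole\rceil$ cannot change while that $\lceil\hole\rceil$ is on the stack.

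With this invariant, an abstraction value whose projected stack has $\fcache{\ell}$ on top must sit on $\pi = \cons{\fcache{\ell}}{\pi'}$. RKNLi rule (\ref{tri:5}) then applies and the projections agree. Add this invariant to the single predicate you intend to maintain, and the rest of your case analysis goes through as you describe.
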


\begin{lemma}[stack shape invariant]
  \label{lem:stack}
  For any reachable configuration $k$ with stack $s$, the stack
  decodes to a normal-order context: $\decs{s, \Sigma_k\!} \in \NO$.
\end{lemma}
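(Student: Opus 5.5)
By Lemma~\ref{lem:shape}, every reachable configuration $k$ of RKNL is the projection of a reachable configuration of RKNLi. So it suffices to show that the decoding of every stack produced by the grammars of $\pi$ and $\varrho$ in Table~\ref{fig:shape_invariant} lies in $\NO$. For RKNLi stacks the decoding is defined through the projection. The $\lceil\hole\rceil$ frames and the $\fcache\ell$ frames decode to nothing, and the tags $\lceil\cdot\rceil$ are erased. I will therefore not follow executions at all. Instead I will do a structural induction on stacks, matching the two stack grammars with the two nonterminals of the inside-out grammar of normal-order contexts from Section~\ref{sec:reduction}. Potentially applicative stacks $\pi$ correspond to $N$, and non-applicative stacks $\varrho$ correspond to $\underline N$.

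The claim I would prove by simultaneous induction on $\pi$ and $\varrho$ is the following: for every reachable $k$ whose stack has suffixes $\pi$ and $\varrho$, we have $\decs{\pi,\Sigma_k} \in N$ and $\decs{\varrho,\Sigma_k} \in \underline N$. The cases are as follows.
\begin{itemize}
\item $\varrho=\nil$ decodes to $\hole\in\underline N$.
\item $\varrho=\cons{\fcache{\ell'}}{\varrho_0}$ and $\pi=\cons{\fcache\ell}{\pi_0}$ decode exactly as their tails, so the induction hypothesis applies.
\item $\varrho=\cons{\flam x}{\varrho_0}$ decodes to $\decs{\varrho_0}[\tlam x\hole]$, which is the rule $\underline N[\tlam x\hole]$.
\item $\varrho=\cons{\flapp a}{\pi_0}$ decodes to $\decs{\pi_0}[\tapp a\hole]$ with $\decs{\pi_0}\in N$. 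This is the rule $N[\tapp\neu\hole]$, because the projection of $a$ is a neutral term of the grammar $\neu$.
\item $\pi=\cons{\frapp c}{\pi_0}$ decodes to $\decs{\pi_0}[\tapp\hole{\decc{c}}]$, which is the rule $N[\tapp\hole t]$.
\item $\pi=\cons{\lceil\hole\rceil}{\varrho}$ decodes as $\varrho$, which lies in $\underline N\subseteq N$ by the $\varepsilon$-rule.
\end{itemize}
The lemma then follows because every configuration's stack is a $\pi$ or a $\varrho$, and both land in $N=\NO$ (with $\underline N\subseteq N$).

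I expect the main obstacle to be the $\flapp a$ case. There I must check that the decoding of the stored neutral term really is a neutral term of the \emph{unannotated} grammar, since the definition of $\NO$ speaks about $\neu$. This is where the ghost machine is needed. In RKNLi the frame $\flapp a$ carries a term from the syntactic class of neutral terms with normal arguments $\lceil\cdot\rceil$. Projection erases the tags, and a straightforward induction on the grammars of $n$ and $a$ shows that projections of RKNLi normal/neutral terms are exactly the $\nf$/$\neu$ terms. Decoding leaves these terms untouched, since $\decs{\cons{\flapp t}{s}}$ plugs $t$ verbatim.

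A second subtle point is that the decoding of closures inside $\frapp c$ frames depends on $\Sigma_k$. This is harmless, because the $N[\tapp\hole t]$ rule accepts an arbitrary term $t$. Finally, I need the fact that the tail stacks occurring in RKNLi configurations are generated by the grammars. This is by construction of the grammar of $\pi$ and $\varrho$, and is guaranteed by Lemma~\ref{lem:shape}.
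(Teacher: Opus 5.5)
Your proposal is correct and follows the paper's own argument: you reduce to RKNLi stacks via Lemma~\ref{lem:shape}, erase the $\fcache\ell$ and $\lceil\hole\rceil$ frames, and match the $\pi$/$\varrho$ grammars against the inside-out grammar with $\pi$ corresponding to $N$ and $\varrho$ to $\underline N$. You spell out the case-by-case induction, including the check that projected $a$-terms are neutral, which the paper's one-paragraph proof leaves implicit.
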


\begin{proof}
  By Lemma~\ref{lem:shape} the stack $s$ is a projection of an
  $\pi$-stack or a $\varrho$-stack of RKNLi. Frames of the form $\fcache \ell$ are ignored
  in the decoding of Table~\ref{fig:decoding}. The grammars of $\pi$
  and $\varrho$ stacks, with frames of this form removed, correspond directly to the
  inside-out grammar of normal-order contexts from
  Section~\ref{sec:reduction}, with $\pi$ corresponding to $N$ and $\varrho$ corresponding to $\underline{N}$.
\end{proof}

\begin{samepage}

\begin{lemma}[beta transition]\label{lem:beta}
  If $k$ is a reachable configuration of RKNL and
  $k \stackrel{(\ref{tr:6})}{\to} k'$, then
  $ \deck k \stackrel{\no}{\to} \deck {k'}$.
\end{lemma}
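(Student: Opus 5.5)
The plan is to unfold both decodings and exhibit the contraction explicitly. Write $k = \cconf \sigma {\abs {\tlam x t} e \ell} {\cons {\frclos{t_2}{e_2}} s}$ and $k' = \econf {\alloc \sigma {\ell_2} {\tobedone {\clos{t_2}{e_2}}}} t {\update e x {\ell_2}} s$, and let $C := \decs{s,\Sigma_k}$. By the definition of stack decoding, $\decs{\cons{\frclos{t_2}{e_2}} s, \Sigma_k} = \plug C {\frapp{u_2}}$ with $u_2 := \decc{\clos{t_2}{e_2},\Sigma_k}$. Likewise, $\decv{\abs{\tlam x t} e \ell,\Sigma_k} = \tlam{\overline x}{u}$ with $u := \decc{\clos t{\update e x {\overline x}},\Sigma_k}$. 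Hence $\deck k = \plug C {\tapp{(\tlam{\overline x} u)}{u_2}}$.

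Next I would check that $C$ is a normal-order context. By Lemma~\ref{lem:stack}, $\decs{\cons{\frclos{t_2}{e_2}}s,\Sigma_k}\in\NO$. Since a frame $\tapp\hole t$ can only be read in state $N$ of the inside-out automaton, and the grammar $N ::= \underline N \alt N[\tapp\hole t]$ is unambiguous in this respect, the prefix $C$ is itself derivable from $N$. Consequently $\plug C{\tapp{(\tlam{\overline x}u)}{u_2}} \stackrel{\no}{\to} \plug C{\subst{\overline x}{u_2}{u}}$.

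It remains to show $\deck{k'} = \plug C{\subst{\overline x}{u_2}u}$. Three facts are needed.
\begin{itemize}
\item $\decs{s,\Sigma_{k'}} = C$, because decoding only consults initializations in the history. Every location mentioned in $s$ was initialized before $k$, and $\Sigma_{k'}$ extends $\Sigma_k$ by a single configuration that initializes only the fresh $\ell_2$.
\item $\decc{\clos{t_2}{e_2},\Sigma_{k'}} = u_2$, for the same reason.
\item The key substitution lemma: for every subterm $t'$ of $t$,
\[
\decc{\clos{t'}{\update e x{\ell_2}},\Sigma_{k'}} = \subst{\overline x}{u_2}{\decc{\clos{t'}{\update e x{\overline x}},\Sigma_k}}.
\]
I would prove this by induction on $t'$, generalizing over environments that extend $\update e x\cdot$ with overlined binders introduced by the decoding of inner abstractions. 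In the variable case, if $t'=x$ is not shadowed, the left side is $u_2$ by clause (ii) of the variable decoding, since $\ell_2$ is initialized with (\ref{tr:6}) as $\tobedone{\clos{t_2}{e_2}}$, while the right side is $\subst{\overline x}{u_2}{\overline x}=u_2$. Other variables decode identically on both sides. In the abstraction case, rebinding of $x$ blocks the substitution on both sides, and otherwise we recurse.
\end{itemize}

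I expect the main obstacle to be capture-avoidance in the abstraction case: the bound $\overline y$ introduced inside $u$ must not capture free variables of $u_2$. A first approach is to argue that free variables of a decoded closure are either source-free variables, fresh names $\fresh x$, or overlined variables bound further out in the context, so a clash would require $\overline y$ to occur free in $u_2$ as well as be bound in $u$. If that argument does not go through, the fallback is to treat overlined binders as fresh per decoding occurrence, or to state the lemma only up to $=_\alpha$. The latter would weaken the conclusion to $\stackrel{\no}{\to}=_\alpha$, which still suffices for soundness.
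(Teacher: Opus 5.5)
Your proposal is correct and is essentially the paper's proof. You decode $k$ as $\plug C{\tapp{(\tlam{\overline x}u)}{u_2}}$ with $C=\decs{s,\Sigma_k}$. You identify $\deck{k'}$ with the contractum in the same context, since $\Sigma_{k'}$ only adds the initialization of the fresh $\ell_2$ with $\tobedone{\clos{t_2}{e_2}}$. You then use Lemma~\ref{lem:stack} to get $C\in\NO$; you reach this via the stack of $k$ and the automaton, whereas the paper cites the lemma directly for $C$. Your induction just spells out the substitution equation that the paper asserts without proof.

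Your first capture argument already succeeds, so the $=_\alpha$ fallback is unnecessary (and would not give the lemma as stated). Decodings of machine closures have no free overlined variables, so the free variables of $u_2$ are source-free variables or fresh names $\fresh x$, and the overlined binders in $u$ cannot capture them.
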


\begin{proof}
   Configuration $k$ is of the form
  $\cconf \sigma {\abs {\tlam x t} e {\ell}} {\cons {\frclos{t'} {e'}} s}$ and 
  the configuration $k'$ is of the form $ \econf {\alloc \sigma {\ell_2} {\tobedone {\clos{t'} {e'}}}} t {\update e x {\ell_2}} s $.
  Let $C$ be the context obtained by decoding  stack $s$ in
  $k$. Then $k$ decodes into the
  application 
  $\tapp {\big( \tlam {\overline x} {\decc{\clos t {\update e x
        {\overline x}}, \Sigma_{k}}}\big)} {\decc{\clos{t'} {e'},
    \Sigma_k\!}}$ in context $C$, and $k'$ decodes into
  $\decc{\clos {t} {\update e x {\ell_2}}, \Sigma_{k'}}$ in the same context.

  Since $\ell_2$ is initialized to $\tobedone{\clos{t'} {e'}}$ in
  $\Sigma_{k'}$, and
  $\decc{\clos {t'}{e'},\Sigma_{k'}}=\decc{\clos {t'}{e'},\Sigma_k}$,
  we have
  $$\decc{\clos {t} {\update e x {\ell_2}}, \Sigma_{k'}}=\decc{\clos
    {\subst x {\decc{\clos {t'}{e'},\Sigma_k}} {t}}
    {e_{- x}}, \Sigma_k}.$$
  Here
  $e_{-x}$ denotes the environment $e$ without the definition for $x$.    
   
  \noindent This implies that 
  $\deck {k'}= \plug C {\decc{\clos {\subst x {\decc{\clos
            {t'}{e'},\Sigma_k}} {t}} {e_{- x}}, \Sigma_k}}$. We have observed that $\deck{k}=\plug C {\tapp {\big( \tlam {\overline x} {\decc{\clos t {\update e x {\overline x}}, \Sigma_{k}}}\big)}
    {\decc{\clos{t'} {e'}, \Sigma_k\!}} } $, hence
  $ \deck k {\to} \deck {k'}$ is an instance of
  $\beta$-reduction. By
  Lemma~\ref{lem:stack} we know that $C$ is a normal-order context, hence we have
  $ \deck k \stackrel{\no}{\to} \deck {k'}$.
\end{proof}
\end{samepage}

By a \emph{bypass transition} we mean a~transition that reads
a~memoized value from the store, that is transition~(\ref{tr:4}) or
(\ref{tr:8}). We say that a bypass transition ($\iota$) \emph{accesses}
a~location $\ell$ if
\begin{itemize}
\item $\iota = \ref{tr:8}$ and $\ell$ is the location associated with the
  lambda closure, or
\item $\iota = \ref{tr:4}$ and $\ell = e(x)$ is the location assigned to the
  variable $x$ in the environment (not mentioned explicitly in the
  transition).
\end{itemize}

The next lemma is proved by induction on locations, using the
following order. We define $\ell\sqsubseteq\ell'$ if $\ell'$ was
populated with a value after $\ell$. More formally, for a location
$\ell$, let $\Sigma_\ell$ be the sequence of configurations starting
from the initial configuration and ending with a configuration of the
form $\cconf {\update \sigma \ell {\done v}} v s$ reached by
transition~(\ref{tr:5}) that updates $\ell$. Then
$\ell\sqsubseteq\ell'$ if $\Sigma_\ell$ is a prefix of
$\Sigma_{\ell'}$. In the case when $\Sigma_\ell$ is not defined (which
is possible, e.g., the value is not needed, as for $\mathbbm{y}$ in
Table~\ref{fig:execution_ex_ref}), $\ell$ remains incomparable with
other locations.  For eventually populated locations, $\sqsubseteq$ is a
well-founded linear order.

\begin{lemma}[bypass transitions]
  \label{lem:bypass}
  For all locations $\ell$, if $k$ is a reachable configuration of RKNL and
  $k \stackrel{(\iota)}{\to} k'$ by a bypass transition accessing $\ell$, then
  $ \deck k \stackrel{\no}{\twoheadrightarrow}=_\alpha \deck {k'}$.
\end{lemma}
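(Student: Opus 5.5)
We argue by well-founded induction on the accessed location $\ell$ with respect to $\sqsubseteq$. Note first that a bypass transition accessing $\ell$ can only fire once $\ell$ holds a $\done{\cdot}$ value. For transition (\ref{tr:4}) with $e(x)=\ell$ initialized by (\ref{tr:7}), or with $x\notin e$, the closure $\clos x e$ already decodes to $\fresh x$ (resp.\ $x$), which is exactly the decoded value. The decoding is unchanged, so the claim is trivial. The interesting cases are (\ref{tr:4}) with $\ell$ initialized by (\ref{tr:6}) as $\tobedone{\clos{t_2}{e_2}}$, and (\ref{tr:8}), where $\ell$ was allocated by (\ref{tr:2}) and later filled with some $\done n$.

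In both cases $\ell$ was populated by a transition (\ref{tr:5}) at the end of the prefix $\Sigma_\ell$. That update was preceded by the transition that pushed $\fcache\ell$: (\ref{tr:3}) for argument locations, (\ref{tr:7}) for abstraction locations. The key step is a \emph{subcomputation lemma}. Consider the segment of $\Sigma_\ell$ from the configuration $k_0$ just after pushing $\fcache\ell$ to the configuration just before popping it. Every configuration in this segment has a stack of the form $s'\cdot\fcache\ell\cdot s_0$, with $s_0$ fixed. Removing the suffix $\fcache\ell\cdot s_0$ yields a valid RKNL run, and the decodings of the full configurations are the decodings of this truncated run plugged into the fixed context $C_0=\decs{s_0,\cdot}$.

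Applying Lemmas~\ref{lem:overhead}, \ref{lem:alpha} and \ref{lem:beta} to each step of the segment, and the induction hypothesis to the bypass steps, we obtain that the decoding of the starting closure reduces to the decoding of the memoized value. For (\ref{tr:4}) this is $\decc{\clos{t_2}{e_2}}\stackrel{\no}{\twoheadrightarrow}=_\alpha\decv{v}$. For (\ref{tr:8}) it is $\decc{\clos{\tlam x t}{e}}\stackrel{\no}{\twoheadrightarrow}=_\alpha n$, where the first (\ref{tr:7}) step contributes the $\alpha$-renaming. The induction hypothesis applies because any bypass step inside the segment accesses a location $\ell''$ that is already populated, hence populated before $\ell$, so $\ell''\sqsubset\ell$. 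The reductions are normal-order inside the hole because, by Lemma~\ref{lem:stack}, the inner stacks decode to $\NO$ contexts. Since $\NO$ contexts are closed under the relevant composition ($\lambda$-frames and $a\,\hole$ frames), a normal-order step inside the hole remains a normal-order step after plugging into any $\NO$ context.

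Now transport this to the current configuration $k$. By history-based decoding, the closure at $k$ decodes to the same term as the closure that was forced inside $\Sigma_\ell$. For (\ref{tr:4}) this is because $\clos x e$ decodes via the initialization of $\ell$, which is exactly the term forced at $k_0$. For (\ref{tr:8}), the closure $\abs{\tlam x t}{e}{\ell}$ is the same closure that was normalized, and decoding depends on $\Sigma$ only through initializations, which are never changed. One subtlety is that the decoding of a closure relative to $\Sigma_k$ agrees with its decoding relative to the earlier prefix. This holds because all locations reachable from $e_2$ were initialized before that prefix. Then $\deck k=\plug{C}{u}$ and $\deck{k'}=\plug{C}{u'}$ with $u\stackrel{\no}{\twoheadrightarrow}=_\alpha u'$, and $C=\decs{s}\in\NO$ by Lemma~\ref{lem:stack}. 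Plugging into a normal-order context preserves $\stackrel{\no}{\to}$, except that a $\lambda$-hole in the reduct combined with an applicative context $C$ could break this. I expect this case to be excluded by the shape invariant (Lemma~\ref{lem:shape}): (\ref{tr:8}) fires only with a non-applicative stack, and for (\ref{tr:4}) the normal-order reduction of $u$ stops at a weak value, since the segment ends at the corresponding (\ref{tr:5}).

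The main obstacle is the subcomputation lemma together with the context-plugging argument. It must handle carefully the case where $u$ is itself an abstraction placed under an applicative $C$, so that its internal normal-order reduction, which goes under $\lambda$, is not a normal-order reduction of the whole term. I would first check that inside a (\ref{tr:3})-segment with argument location $\ell$, transition (\ref{tr:7}) never fires at the outermost level of the segment. Any abstraction value reaching $\fcache\ell$ arrives via (\ref{tr:2}) or a nested bypass and is popped by (\ref{tr:5}) immediately, so only weak-head steps occur at top level. Deeper steps happen only under $\flapp a$ frames, which stay normal-order in any context.
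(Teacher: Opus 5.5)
Your proposal is correct and follows essentially the same route as the paper. Both arguments use well-founded induction on $\sqsubseteq$ and treat (\ref{tr:7})-initialized locations as the trivial case. Otherwise they take the segment between pushing $\fcache{\ell}$ and the (\ref{tr:5}) that populates $\ell$, analyse it with Lemmas~\ref{lem:overhead}--\ref{lem:beta} and the induction hypothesis, and transport the result to $k$ via the history-based decoding and $\decs{s,\Sigma_k}\in\NO$ (Lemma~\ref{lem:stack}). Your truncated-run formulation and your explicit handling of applicative contexts make precise two steps the paper passes over with ``it follows that'' and ``hence'': the context is non-applicative at (\ref{tr:8}), and a (\ref{tr:3})-segment performs only weak-head steps at top level.
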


\begin{proof}
%
  The location $\ell$ must have been created fresh, by one of the transitions:~(\ref{tr:2}),~(\ref{tr:6}) or~(\ref{tr:7}). Since none of
  transitions in Table~\ref{fig:machine} overwrites a storable value
  of the form $\done v$, such a value is uniquely determined by the
  location.
  We now examine each case of how the location $\ell$ was created.
\begin{itemize}  
\item{Case 1:} $\ell$ was created by transition~(\ref{tr:7}) (where it is
  called $\ell_2$).  Then $\ell$ was filled immediately with a value (an
  $\alpha$-renamed variable). Therefore, in this case $(\iota)$ is (\ref{tr:4}),
  and any later access to $\ell$ simply retrieves this memoized
  value. Hence, the decoding does not change, and we have
  $\deck{k} = \deck{k'}$.
\item{Case 2:} $\ell$ was created by transition~(\ref{tr:2}) or (\ref{tr:6}). In this case, $\ell$ did not initially contain a value, and was later updated via transition~(\ref{tr:5}). Let 
$k_{\ell}$ be the configuration where the frame $\ell := \square$ was pushed (i.e., in transition~(\ref{tr:3}) or (\ref{tr:7})), and $k_{\ell}^v$ be the configuration where the value $v$ was stored into $\ell$.

The sequence $k_{\ell}\rred{}{} {k_{\ell}^v}$ contains no transitions
(\ref{tr:4}) nor (\ref{tr:8}) accessing $\ell$, and thus by
Lemmas~\ref{lem:overhead}--\ref{lem:beta} and by induction hypothesis
we obtain
$ \deck {k_\ell} \stackrel{\no}{\twoheadrightarrow}=_\alpha \deck {k_{\ell}^v}$.

We now have two cases to consider.
\begin{itemize}
\item{Case $\iota = 4$:} In this case $k=\econf \sigma x e s \quad\text{and}\quad
  k'=\cconf \sigma v s$.
  The frame $\fcache \ell$ was pushed with
  transition~(\ref{tr:3}), so for
  $\ell=e(x)$, and for some stack $s'$ and
  stores $\sigma', \sigma''$ we have
  \[k_\ell= \econf {\sigma'} x {e} {{s'}} \mbox{ and }
    k_{\ell}^v= \cconf {\sigma''} v {\cons {\fcache \ell} {s'}}.\]

Then
$$\deck{k_\ell} = \plug {\decs{s',\Sigma_{k_\ell}}}{\decc{\clos {x}{e},\Sigma_{k_\ell}}}\;\text{and}\;
  \deck{k_{\ell}^v}=\plug {\decs{s',\Sigma_{k_{\ell}^v}}}{\decv{v,\Sigma_{k_{\ell}^v}}}.$$ Moreover,
  $\decs{s',\Sigma_{k_{\ell}^v}}=\decs{s',\Sigma_{k_\ell}}$, which forms a
  normal-order context by Lemma~\ref{lem:stack}. Since we have
  $ \deck {k_\ell} \stackrel{\no}{\twoheadrightarrow}=_\alpha \deck {k_{\ell}^v}$,
  it follows that
  $\decc{\clos {x}{e},\Sigma_{k_\ell}} \stackrel{\no}{\twoheadrightarrow}=_\alpha \decv{v,\Sigma_{k_{\ell}^v}}$.
  
  Finally, we have $\decs{s,\Sigma_{k}}= \decs{s,\Sigma_{k'}}$ and
  $\decv{v,\Sigma_{k_{\ell}^v}}=\decv{v,\Sigma_{k'}}$, and hence 
  \[
  \deck k = \plug {\decs{s,\Sigma_{k}}}{\decc{\clos {x}{e},\Sigma_{k}}}
  \stackrel{\no}{\twoheadrightarrow}=_\alpha
  \plug {\decs{s,\Sigma_{k'}}}{\decv{v,\Sigma_{k'}}} = \deck {k'}.
  \]

\item{Case $\iota = 8$:}  In this case
    $k= \cconf \sigma {\abs {\tlam x t} e {\ell}} s$ and
    $k'=\cconf \sigma v s$. The frame  $\fcache \ell$ was pushed with
    transition~(\ref{tr:7}), so for some stack $s'$ and stores
    $\sigma',\sigma''$ we have
     \[k_\ell= \cconf {\sigma'}  {\abs {\tlam x t} {e}{\ell}} {s'}
       \mbox{ and }
       k_{\ell}^v= \cconf {\sigma''} v {\cons {\fcache \ell} {s'}}.\] 
       
   Then
    $\deck{k_\ell} = \plug {\decs{s',\Sigma_{k_\ell}}}{\decc{\clos {\tlam x t}{e},\Sigma_{k_\ell}}}$ and
    $\deck{k_{\ell}^v}=\plug {\decs{s',\Sigma_{k_{\ell}^v}}}{\decv{v,\Sigma_{k_{\ell}^v}}}$. Moreover,
    $\decs{s',\Sigma_{k_{\ell}^v}}=\decs{s',\Sigma_{k_\ell}}$, which forms a 
    normal-order context by Lemma~\ref{lem:stack}. Since we have
    $ \deck {k_\ell} \stackrel{\no}{\twoheadrightarrow}=_\alpha \deck {k_{\ell}^v}$,
    it follows that
    $\decc{\clos {\tlam x t}{e},\Sigma_{k_\ell}} \stackrel{\no}{\twoheadrightarrow}=_\alpha \decv{v,\Sigma_{k_{\ell}^v}}$.

    Finally, we have $\decs{s,\Sigma_{k}}= \decs{s,\Sigma_{k'}}$ and
    $\decv{v,\Sigma_{k_{\ell}^v}}=\decv{v,\Sigma_{k'}}$. The context $\decs{s,\Sigma_{k}}$ is a
    normal-order context by Lemma~\ref{lem:stack}, hence we obtain that
    \[ \deck {k}= \plug {\decs{s,\Sigma_{k}}}{\decc{\clos {\tlam x t}{e},\Sigma_{k}}}
      \stackrel{\no}{\twoheadrightarrow}=_\alpha\plug {\decs{s,\Sigma_{k'}}}{\decv{v,\Sigma_{k'}}}=
      \deck {k'}.
    \]
 \end{itemize}   
\end{itemize}
\end{proof}

Observe that it is possible that a bypass transition
$k \stackrel{(\ref{tr:4})}{\to} k'$ does not access any location: this
is the case when $x$ is a free variable that does not occur in the
environment $e$ (the second option in the transition). In this case
$\decc{\clos {x}{e}}=\decv x$, so $\deck{k} = \deck{k'}$.

The following lemma is an immediate consequence of Lemmas~\ref{lem:overhead}--\ref{lem:bypass} and the observation above.

\begin{lemma}[step soundness]\label{lem:step_soundness}
  If $k$ is a reachable configuration of RKNL and $k \to k'$ then
  $\deck{k} \stackrel{\no}{\twoheadrightarrow}=_\alpha \deck{k'}$.
\end{lemma}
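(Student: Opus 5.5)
The plan is a case analysis on which of the eleven transitions is used in $k \to k'$, reading off the conclusion from the lemmas proved so far. The target relation $\stackrel{\no}{\twoheadrightarrow}=_\alpha$ is reflexive. It contains $=$, $=_\alpha$ and $\stackrel{\no}{\to}$, so each case only needs to land in one of these.

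\emph{Overhead transitions.} For transitions~(\ref{tr:1}), (\ref{tr:2}), (\ref{tr:3}), (\ref{tr:5}), (\ref{tr:9}), (\ref{tr:10}) and (\ref{tr:11}), Lemma~\ref{lem:overhead} gives $\deck{k}=\deck{k'}$. Hence $\deck{k} \stackrel{\no}{\twoheadrightarrow} \deck{k'}$ in zero steps, followed by the trivial $\alpha$-equivalence. Reachability of $k$ is not even needed here.

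\emph{Alpha and beta transitions.} For transition~(\ref{tr:7}), Lemma~\ref{lem:alpha} applies because $k$ is reachable, giving $\deck k =_\alpha \deck{k'}$; again we take zero normal-order steps. For transition~(\ref{tr:6}), Lemma~\ref{lem:beta} gives the single step $\deck k \stackrel{\no}{\to} \deck{k'}$, which is included in the target relation.

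\emph{Bypass transitions.} Transition~(\ref{tr:8}) always accesses the location $\ell$ annotating the lambda closure. Lemma~\ref{lem:bypass} then gives the claim directly; since that lemma is stated for all locations, no outer induction is required here. Transition~(\ref{tr:4}) splits on its disjunctive side condition. If $x \in e$ and $\sigma(e(x)) = \done v$, the transition accesses $e(x)$ and Lemma~\ref{lem:bypass} applies. If instead $v = x \notin e$, the observation following Lemma~\ref{lem:bypass} applies: the fourth clause of the closure decoding gives $\decc{\clos x e}=x=\decv{x}$, and the stack and its decoding are unchanged. So $\deck k = \deck{k'}$.

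\emph{Main obstacle.} The only step I expect to need care is this second subcase of~(\ref{tr:4}). One must check that ``$x$ not in $e$'' coincides with the fourth decoding clause, i.e.\ that $e(x)$ is neither defined as an overlined variable nor initialized by (\ref{tr:6}) or (\ref{tr:7}). This holds because every location in an environment of a reachable configuration is created by one of those transitions. The decodings use the histories $\Sigma_k$ and $\Sigma_{k'}$, but the argument is unaffected, since $\Sigma_{k'}$ only extends $\Sigma_k$ and no new initialization of a location relevant to $x$ occurs. Since the case split covers every transition, the lemma follows.
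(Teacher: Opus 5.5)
Your proposal is correct and follows the paper's argument: the paper obtains the lemma as an immediate consequence of Lemmas~\ref{lem:overhead}--\ref{lem:bypass}, together with the observation that transition~(\ref{tr:4}) on a free variable $x \notin e$ leaves the decoding unchanged, and this is exactly your case split. One quibble: your aside that reachability is not needed in the overhead cases is moot, because $\deck{k}$ is defined only for reachable $k$ (it depends on the history $\Sigma_k$), so the hypothesis is used implicitly there too; this does not affect the argument.
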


\begin{lemma}[unload correctness]\label{lem:unload}
For any terminal reachable configuration $\cconf \sigma t \nil \!\nrightarrow$,
the configuration decodes to a term $t$ in normal form: $\deck{\cconf \sigma t \nil} = t \!\nrightarrow_\beta$.
\end{lemma}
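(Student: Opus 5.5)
By the decoding in Table~\ref{fig:decoding}, a configuration $\cconf \sigma t \nil$ decodes to $\plug{\decs{\nil,\Sigma}}{\decv{t,\Sigma}} = \plug{\hole}{t} = t$, because the empty stack decodes to the empty context and a value that is a plain term decodes to itself. So the first equation holds directly. What remains is to show that $t$ is a $\beta$-normal form, that is, that it is generated by the nonterminal $\nf$ of Section~\ref{sec:reduction}.

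To do this I will use the ghost machine. By Lemma~\ref{lem:shape}, the reachable configuration $\cconf \sigma t \nil$ is the projection of some reachable configuration $k_i$ of RKNLi. Projection preserves the mode of a configuration and sends $\nil$ only to $\nil$. It also removes $\lceil\hole\rceil$ frames, so the RKNLi stack is a $\pi$- or $\varrho$-stack that projects to $\nil$. There are three candidate shapes for $k_i$:
\begin{itemize}
\item[(a)] $\cpconf{\sigma,\sigma'}{n}{\nil}$ with $n$ projecting to $t$;
\item[(b)] $\cconf{\sigma,\sigma'}{v}{\cons{\lceil\hole\rceil}{\nil}}$;
\item[(c)] a configuration whose stack consists only of $\lceil\hole\rceil$ frames followed by further material.
\end{itemize}
Shape (c) is ruled out by the grammar: a $\lceil\hole\rceil$ frame must be followed by a $\varrho$-stack, and a $\varrho$-stack that projects to $\nil$ can only be $\nil$ itself. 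Note that $\pi$-stacks never project to $\nil$.

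Case (b) is harmless. There $v$ projects to the term $t$ and is not an abstraction closure, so $v$ is a neutral term $a$. Then $t$ is neutral, hence normal, since every neutral term is normal. (Alternatively, transition~(\ref{tri:9a}) moves to case (a) silently.) In case (a), the value is a tagged normal term $n$ of RKNLi, and its projection is given by erasing the $\lceil\cdot\rceil$ tags.

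The final step is a simple structural induction on the RKNLi grammars of $n$ and $a$. It shows that erasing tags maps $n$ into the nonterminal $\nf$ and $a$ into the nonterminal $\neu$. By the remark in Section~\ref{sec:reduction}, that grammar coincides with the set of terms having no $\beta$-redex, so $t \nrightarrow_\beta$.

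The main obstacle is making the case analysis on the RKNLi stack precise. One must check carefully which $\pi$- and $\varrho$-stacks project to $\nil$. The case where the value is a closure $\abs{\tlam x t}e\ell$ does not arise, because the statement assumes the value is a plain term $t$. Even so, one must confirm that the projection of an RKNLi value that is a term is neutral; this follows from the grammar $v ::= a \alt \abs{\tlam x t}e{\ell'}$. Terminality of the configuration is not actually needed beyond fixing its shape.
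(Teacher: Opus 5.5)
Your proof is correct and takes the same route as the paper. The shape invariant (Lemma~\ref{lem:shape}) presents $\cconf \sigma t \nil$ as a projection of an RKNLi configuration whose value is a neutral or tagged normal term, so $t$ is $\beta$-normal, and the decoding is trivially $t$. Your explicit handling of the preimage $\cconf{\sigma,\sigma'}{a}{\cons{\lceil\hole\rceil}{\nil}}$, before the silent step~(\ref{tri:9a}), is in fact more careful than the paper, which directly asserts a terminal preimage $\cpconf{\sigma,\sigma'}{n}{\nil}$.

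The one thing to fix is the aside that $\pi$-stacks never project to $\nil$. The $\pi$-stack $\cons{\lceil\hole\rceil}{\nil}$ does project to $\nil$; it is exactly your case~(b). The accurate claim is that it is the only $\pi$-stack that does.
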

\begin{proof}
  By Lemma~\ref{lem:shape} (shape invariant), a~terminal configuration
  $\cconf \sigma t \nil$ of RKNL is a projection of a terminal
  configuration $\cpconf {\sigma, \sigma'} n \nil$ of
  RKNLi. Therefore term $t$ is a projection of the normal term $n$,
  and thus it is in normal form. Finally,
  $\deck{\cconf \sigma t \nil} =\plug { \decs{\nil}} {\decv{t}} = t$.
\end{proof}

\noindent
Lemmas~\ref{lem:load}--\ref{lem:unload} lead to the main result of
this section.

\begin{theorem}[soundness]\label{thm:sound}
If the machine starts from a (possibly open) term $t_0$ and outputs a term~$t$
(i.e., $\econf \nil {t_0} \nil \nil \twoheadrightarrow \cconf \sigma t \nil \!\nrightarrow$),
then $t_0$ reduces to a normal form~$t$ in the normal order reduction in zero or more steps
(i.e., $t_0 \stackrel{\no}{\twoheadrightarrow}=_\alpha t \!\nrightarrow_\beta$).
\end{theorem}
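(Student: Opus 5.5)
The plan is to chain the per-step results along the execution. Fix the run $k_0 \to k_1 \to \dots \to k_m$ with $k_0 = \econf \nil {t_0} \nil \nil$ and $k_m = \cconf \sigma t \nil$ terminal. Every $k_i$ is reachable by construction, since it is obtained from an initial configuration by transitions. So we may apply Lemma~\ref{lem:step_soundness} to each step, using the history $\Sigma_{k_i}$ given by the prefix $k_0,\dots,k_i$. This yields
\[
\deck{k_i} \stackrel{\no}{\twoheadrightarrow}=_\alpha \deck{k_{i+1}}
\quad\text{for every } i < m.
\]

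Next, compose these $m$ relations into $\deck{k_0} \stackrel{\no}{\twoheadrightarrow}=_\alpha \deck{k_m}$, by induction on $m$. The only point needing care is pushing an $\alpha$-equivalence past a subsequent normal-order reduction. If $u =_\alpha u'$ and $u' \stackrel{\no}{\to} w'$, then $u \stackrel{\no}{\to} w$ for some $w =_\alpha w'$. This holds because the normal-order context grammar $\NO$ and the grammars of normal and neutral terms are invariant under renaming of bound variables, and capture-avoiding substitution commutes with $\alpha$-conversion (cf.\ Example~\ref{ex:alpha}). The paper already notes this remark before Lemma~\ref{lem:load}. I expect this commutation to be the only nontrivial ingredient, and I would state it briefly rather than reprove it.

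Finally, glue the endpoints. By Lemma~\ref{lem:load}, $t_0 =_\alpha \deck{k_0}$; using the same commutation, this gives $t_0 \stackrel{\no}{\twoheadrightarrow}=_\alpha \deck{k_m}$. By Lemma~\ref{lem:unload}, $\deck{k_m} = t$ and $t$ is $\beta$-normal, so $t_0 \stackrel{\no}{\twoheadrightarrow}=_\alpha t \nrightarrow_\beta$. Zero steps are allowed, since $m$ may be such that no $\beta$-step occurs.
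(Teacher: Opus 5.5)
Your proposal is correct and follows the paper's own route. The paper obtains the theorem directly from Lemmas~\ref{lem:load}, \ref{lem:step_soundness} and~\ref{lem:unload}. It composes the per-step results using its remark that $\beta$-reduction commutes with $\alpha$-conversion, which is exactly the commutation step you single out.
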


\subsection{Complexity}

We now analyse the complexity of the RKNL machine using the method of
potentials to perform amortized cost analysis.  We define the
potential function $\Phik$ on configurations to estimate the number of
steps the machine in a given configuration can make without performing
$\beta$-transition~(\ref{tr:6}). This function is defined together
with auxiliary potential functions for terms $\Phit$, values $\Phiv$,
stacks $\Phis$ and stores $\Phih$ in Table~\ref{fig:potential}. An
empirical observation of the behaviour of the potential function is
presented in the dedicated subsection \ref{sec:empirical_potential}.

The potential of a configuration is defined as the sum of potentials
of its components. The potential of a term is defined as the sum of
potentials of its subterms, with an additional constant for each term
constructor. The potential of a value is a constant, while the
potential of a stack is the sum of potentials of its frames, with an
additional constant for each frame constructor. The potential of a
store is the sum of potentials associated with two sets of terms: the
first one is the set of closures passed as function arguments and not
yet evaluated, and the second one is the set of abstraction bodies
that have not yet been normalized. When these terms will be evaluated,
their potential will be accounted for in the potential of the term or
stack.

Each constant occurring in the potential of term and stack
constructors accounts for its later use and can be traced
backwards. For example, the application constructor
$\tapp {t_1} {t_2}$ is replaced by a right frame $\frapp {\clos t e}$
in the transition $(\ref{tr:1})$, the right frame turns into a left
frame $\flapp t$ in the transition $(\ref{tr:9})$, and finally the
left frame is discharged by the transition $(\ref{tr:10})$. Thus, in
the potential function for the stack, we give 1 credit for the left
frame to pay for $(\ref{tr:10})$, which results in the definition
$\Phis(\cons {\flapp t} s) := 1 + \Phis(s)$. Further, we give 2
credits for the right frame: one to pay for $(\ref{tr:9})$, and one to
be forwarded to the left frame. Analogously, we assign 3 credits to
the application constructor in the potential function for terms.


\begin{table}[h]%
\caption{The potential function for RKNL}
\label{fig:potential}
\begin{align*}%
\Phik(\econf \sigma t e s) &:= \Phit(t) + \Phis(s) + \Phih(k) \\
\Phik(\cconf \sigma   v s) &:= \Phiv(v) + \Phis(s) + \Phih(k) 
\end{align*}%
\begin{equation*}
\begin{aligned}[c]
\Phit(\tapp {t_1} {t_2}) &:= 3 + \Phit(t_1) + \Phit(t_2)\\
\Phit(\tlam x t) &:= 4 + \Phit(t)\\
\Phit(x) &:= 2\\
\Phiv(t) &:= 0\\
\Phiv({\abs {\tlam x t} e \ell}) &:= 1
\end{aligned}
\hspace{10mm}
\begin{aligned}[c]
\Phis(\nil) &:= 0\\
\Phis(\cons {\frapp {\clos t e}} s) &:= 2 + \Phis(s) + \Phit(t)\\
\Phis(\cons {\flapp t} s) &:= 1 + \Phis(s)\\
\Phis(\cons {\flam x} s) &:= 1 + \Phis(s)\\
\Phis(\cons {\fcache \ell} s) &:= 1 + \Phis(s)
\end{aligned}
\end{equation*}
\begin{align*}%
  \Phih(k) &:= \sum_{\ell \in k \wedge \sigma(\ell) = \tobedone {\clos t e} \wedge \ell := \hole \notin s} \!\!\Phit(t)
             + \sum_{\ell := {\clos {\tlam x t} e} \in k \wedge \sigma(\ell) = \tobedone \bot \wedge \ell := \hole \notin s} \!\!(2 + \Phit(t))
\end{align*}%
\end{table}

\noindent
The number of consecutive
non-$\beta$-transitions from a given configuration is bounded by the two
following lemmas.

\begin{lemma}[decrease]\label{lem:decrease}
If $k \stackrel{(\neq\ref{tr:6})}{\to} k' $ then $ \Phik(k) > \Phik(k')$.
\end{lemma}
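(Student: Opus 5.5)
The plan is a direct case analysis on the ten transitions other than~(\ref{tr:6}). For each one I compute $\Phik(k)-\Phik(k')$ from the definitions in Table~\ref{fig:potential} and check that it is at least~$1$. Every summand of $\Phit$, $\Phiv$, $\Phis$ and $\Phih$ is nonnegative, and the store potential only ranges over locations mentioned in $k$. So I first note a monotonicity fact: if a transition merely drops references to locations (they become unreachable), the corresponding summands of $\Phih$ disappear, and this can only decrease the potential. It therefore suffices to track the locations that a transition explicitly creates, updates, or pushes onto or pops from the stack.

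The local bookkeeping then goes as follows; the store is unchanged unless stated.
\begin{itemize}
\item[(\ref{tr:1})] $3+\Phit(t_1)+\Phit(t_2)$ becomes $\Phit(t_1)+2+\Phit(t_2)$, a drop of~$1$.
\item[(\ref{tr:2})] $4+\Phit(t)$ becomes $\Phiv=1$ plus a new $\Phih$ summand $2+\Phit(t)$ for the fresh $\tobedone\bot$ location $\ell$, which is not on the stack; drop $1$.
\item[(\ref{tr:3})] $\Phit(x)=2$ is replaced by $\Phit(t)$ plus $1$ for the pushed frame $\fcache\ell$. At the same time $\ell$ leaves the first sum of $\Phih$, losing $\Phit(t)$, because $\ell:=\hole$ is now on the stack; drop $1$.
\item[(\ref{tr:4})] $2$ becomes at most $1$.
\item[(\ref{tr:5})] The frame's credit $1$ disappears, and $\ell$ was already excluded from $\Phih$ and is now $\done v$, so it stays excluded.
\item[(\ref{tr:7})] $1+(2+\Phit(t))$ becomes $\Phit(t)+1+1$, because $\ell$ moves onto the stack and so leaves $\Phih$, while the fresh $\ell_2$ holds a $\done$ value and contributes nothing; drop $1$.
\item[(\ref{tr:8})] $1$ becomes $0$, since $\ell$ is $\done$ and contributes nothing.
\item[(\ref{tr:9})] $0+2+\Phit(t_2)$ becomes $\Phit(t_2)+1$.
\item[(\ref{tr:10}), (\ref{tr:11})] $1$ becomes $0$.
\end{itemize}

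The delicate point, which I expect to be the main obstacle, is justifying the $\Phih$ accounting in transitions~(\ref{tr:3}), (\ref{tr:5}) and~(\ref{tr:7}). These cases rely on two invariants of reachable configurations.
\begin{enumerate}
\item When a location $\ell$ with $\sigma(\ell)=\tobedone{\dots}$ is forced by~(\ref{tr:3}) or~(\ref{tr:7}), no frame $\fcache\ell$ is already on the stack. Otherwise $\ell$ would not have been counted in $\Phih(k)$, and the accounting would give an increase.
\item Each $\fcache\ell$ occurs at most once on the stack, and while it is there $\sigma(\ell)$ stays $\tobedone{\dots}$.
\end{enumerate}
I would prove both by induction over reachable configurations, using Lemma~\ref{lem:shape} where the stack structure helps. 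The key observation for the first invariant is that the closure stored at $\ell$ and everything computed while $\fcache\ell$ is on the stack live in environments created no earlier than $\ell$ itself. In the pure $\lambda$-calculus, environments are built only by~(\ref{tr:6}) and~(\ref{tr:7}), which always allocate fresh locations, so there are no recursive bindings. The stored closure therefore cannot refer back to $\ell$, and a cycle in which $\ell$ is forced again before~(\ref{tr:5}) fills it is impossible. For abstraction locations, the value $\abs{\tlam x t}e\ell$ is no longer the current value after~(\ref{tr:7}), and the body's environment cannot contain it, since it was allocated by~(\ref{tr:2}) and never stored under a variable bound inside $t$.

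If a clean formulation of this acyclicity argument proves awkward, my fallback is to fold the invariant into the ghost machine RKNLi. I would show directly, by the same case analysis used for Lemma~\ref{lem:shape}, that every location referenced from the current closure, the current value, or a frame above $\fcache\ell$ is strictly newer than $\ell$. With these invariants in place, the per-transition computations above yield $\Phik(k)>\Phik(k')$.
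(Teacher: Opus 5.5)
Your proof is the paper's proof: the same case analysis over the ten non-$\beta$ transitions, with exactly the same arithmetic. The one difference is that the paper simply asserts the $\Phih$ bookkeeping in (\ref{tr:3}), (\ref{tr:5}) and (\ref{tr:7}), notably that $\fcache\ell$ is not already on the stack when $\ell$ is forced, and you rightly single this out as the point that needs an invariant.

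When you write that invariant down, note that the closure stored at $\ell$ by (\ref{tr:6}), and likewise the environment of the abstraction in (\ref{tr:7}), predates $\ell$. So both ``environments created no earlier than $\ell$'' and your fallback ``every referenced location is strictly newer than $\ell$'' fail in general already right after (\ref{tr:3}). The formulation that is inductive has two parts. First, $\ell$ is not reachable, through environments, annotations and stored values, from the focus or from any frame above $\fcache\ell$. Second, the store is acyclic, and (\ref{tr:5}) preserves this precisely because of the first part.
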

\begin{proof} 
  The proof proceeds by case analysis on transition rules
  (\ref{tr:1}--\ref{tr:11}) with the omission of transition
  (\ref{tr:6}):
  \begin{description}
  \item[(\ref{tr:1})]
    $\begin{array}[t]{rcl}\Phik(\econf \sigma {\tapp{t_1}{t_2}} e s)&=&
      (3 + \Phit(t_1) + \Phit(t_2)) + \Phis(s) + \Phih(k) \\&>&
           \Phit(t_1) + (2 + \Phis(s) + \Phit(t_2)) + \Phih(k)\\&=&
\Phik(\econf         \sigma     {t_1} e {\cons{\frclos{t_2} e} {s}})
     \end{array}$ 
   \item[(\ref{tr:2})]
     $\begin{array}[t]{rcl}\Phik(\econf \sigma       {\tlam x t} e s)
        &=&
            (4 + \Phit(t)) + \Phis(s) + \Phih(k)\\
        &>&
            1 + \Phis(s) + \left(\Phih(k) + \left(2 + \Phit(t)\right)\right)\\
        &=&
            \Phik( \cconf {\alloc \sigma \ell {\tobedone \bot}} {\abs {\tlam x t} e \ell} s )                                  \end{array}$
   \item[(\ref{tr:3})]
     $\begin{array}[t]{rcl}\Phik( \econf \sigma              x    e s)
        &=&
          2 + \Phis(s) + \Phih(k)  \\  
        &=&
          2 + \Phis(s) + (\Phih(k') + \Phit(t)) \\  
        &>&
           \Phit(t) + (1 + \Phis(s)) + \Phih(k')  \\
        &=&
         \Phik( \econf \sigma t {e_2} {\cons { \fcache {\ell}} s} )   
      \end{array}$\\
      (here location $\ell$ contributes $\Phit(t)$ to the store
      potential of configuration $k$, but it does not contribute to
      the store potential of $k'$ as the frame $\fcache \ell$ appears in
      the new stack; the evaluation of $t$ starts, so its potential is
      removed from the potential of the store and instead it appears as the
      potential of the evaluated term)
   \item[(\ref{tr:4})]
     $\begin{array}[t]{rcl}\Phik(\econf \sigma              x    e s )
        &=&
          2 + \Phis(s) + \Phih(k)\\   
        &>&
           \Phiv(v) + \Phis(s) + \Phih(k)\\ 
        &=&
         \Phik( \cconf \sigma v s )   
  \end{array}$
   \item[(\ref{tr:5})]
     $\begin{array}[t]{rcl}\Phik(\cconf \sigma v {\cons {\fcache \ell} s} )
        &=&
          \Phiv(v) + (1 + \Phis(s)) + \Phih(k)\\  
        &>&
          \Phiv(v) + \Phis(s) + \Phih(k)\\  
        &=&
         \Phik( \cconf {\update \sigma \ell {\done v}} v s )   
      \end{array}$\\
      (here the frame $\fcache \ell$ is removed from the stack so it does not contribute to the store potential anymore)
   \item[(\ref{tr:7})]
     $\begin{array}[t]{rcl}\Phik(\cconf \sigma {\abs {\tlam x t} e {\ell}} s)\!\!\!
        &=&
         1 + \Phis(s) + \Phih(k) \\   
         &=&
         1 + \Phis(s) + (\Phih(k') + 2 + \Phit(t))\\   
        &>&
          \Phit(t) + 1 + 1 + \Phis(s) + \Phih(k')\\  
        &=&
         \Phik(\econf {\alloc \sigma {\ell_2} {\done {\fresh x}}} t {\update e x {\ell_2}} {\cons {\flam {\fresh x}} {\cons {\fcache \ell} s}} )   
      \end{array}$\\
      (here location $\ell$ contributes $2 + \Phit(t)$ to the store
      potential of configuration $k$, but it does not contribute to
      the store potential of $k'$ as the frame $\fcache \ell$ appears in
      the new stack; the evaluation of $t$ starts, so its potential is
      removed from the potential of the store and instead it appears as the
      potential of the evaluated term)
   \item[(\ref{tr:8})]
     $\begin{array}[t]{rcl}\Phik(\cconf \sigma {\abs {\tlam x t} e \ell} s)
        &=&
           1 + \Phis(s) + \Phih(k)\\ 
        &>&
           0 + \Phis(s) + \Phih(k)\\ 
        &=&
         \Phik(  \cconf \sigma v s)   
  \end{array}$
   \item[(\ref{tr:9})]
     $\begin{array}[t]{rcl}\Phik(\cconf \sigma t                         {\cons {\frclos{t_2} {e_2}} s} )
        &=&
            0 + (2 + \Phis(s) + \Phit(t_2)) + \Phih(k)\\
        &>&
           \Phit(t_2) + (1 + \Phis(s)) + \Phih(k)\\ 
        &=&
         \Phik(\econf \sigma {t_2} {e_2} {\cons {\flapp t} s} )   
  \end{array}$
   \item[(\ref{tr:10})]
     $\begin{array}[t]{rcl}\Phik( \cconf \sigma {t_2} {\cons {\flapp {t_1}} s})
        &=&
          0 + (1 + \Phis(s)) + \Phih(k)\\  
        &>&
          0 + \Phis(s) + \Phih(k)\\  
        &=&
         \Phik(\cconf \sigma {\tapp {t_1} {t_2}} s )   
  \end{array}$
   \item[(\ref{tr:11})]
     $\begin{array}[t]{rcl}\Phik(\cconf \sigma t {\cons {\flam x} s} )
        &=&
          0 + (1 + \Phis(s)) + \Phih(k)\\  
        &>&
           0 + \Phis(s) + \Phih(k)\\ 
        &=&
         \Phik(\cconf \sigma {\tlam x t} s )   
  \end{array}$
\end{description}
\end{proof}

\begin{lemma}[increase]\label{lem:increase}
  If $k$ is a~configuration reachable from an input term~$t_0$, then
  $k \stackrel{(\ref{tr:6})}{\to} k'$ implies $\Phik(k) + \Phit(t_0) >
  \Phik(k')$.
\end{lemma}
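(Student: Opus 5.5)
The plan is a direct computation of both potentials, followed by an invariant bounding the one term that can grow. Let $k = \cconf \sigma {\abs {\tlam x t} e \ell} {\cons {\frclos{t_2}{e_2}} s}$, so that $k' = \econf {\alloc \sigma {\ell_2} {\tobedone{\clos{t_2}{e_2}}}} t {\update e x {\ell_2}} s$. Unfolding Table~\ref{fig:potential}:
\[
\Phik(k) = 1 + \bigl(2 + \Phit(t_2) + \Phis(s)\bigr) + \Phih(k)
\quad\text{and}\quad
\Phik(k') = \Phit(t) + \Phis(s) + \Phih(k').
\]
It therefore suffices to prove two facts: (a) $\Phih(k') \le \Phih(k) + \Phit(t_2)$, and (b) $\Phit(t) \le \Phit(t_0)$. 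Given both, $\Phik(k') \le \Phik(k) - 3 + \Phit(t) < \Phik(k) + \Phit(t_0)$.

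For (a), I compare the two sums in $\Phih$ location by location. The store of $k'$ differs from that of $k$ only by the fresh location $\ell_2$. This location holds $\tobedone{\clos{t_2}{e_2}}$ and has no $\fcache{\ell_2}$ frame on $s$, so it contributes at most $\Phit(t_2)$ to the first sum. For every other location, the storable value is unchanged, and the set of $\fcache{\cdot}$ frames on the stack is also unchanged, because only an argument frame was popped. It remains to check that no old location becomes newly reachable in $k'$. The only new reference is the environment entry $x\mapsto\ell_2$; the environment $e$ and the closure $\clos{t_2}{e_2}$ were already part of $k$. Hence, apart from $\ell_2$, the locations of $k'$ form a subset of those of $k$. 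In particular, if $\ell$ itself is dropped from the configuration, its contribution $2+\Phit(t)$ simply disappears, which only helps. All summands are nonnegative, so (a) follows.

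For (b), which I expect to be the main obstacle, I prove a separate invariant by induction on the reachability of $k$. The invariant states that every term occurring as the code component of a closure — in the focus of an $\etrian$-configuration, in an argument frame $\frapp{\clos{t}{e}}$, in a store entry $\tobedone{\clos t e}$, or in the lambda closure of a value $\abs{\tlam x t} e \ell$ — is a subterm of $t_0$. The initial configuration satisfies it trivially. Transitions (\ref{tr:1}), (\ref{tr:6}) and (\ref{tr:7}) only take immediate subterms of existing code. Transitions (\ref{tr:3}) and (\ref{tr:9}) move existing code between the store, the stack and the focus. Transitions (\ref{tr:2}) and (\ref{tr:8}) repackage or read existing closures. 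The transitions that build new syntax, namely (\ref{tr:10}), (\ref{tr:11}) and the renamed variables of (\ref{tr:7}), build only normal-form values of the form $t$. These never re-enter a closure, since the grammar of stacks places them only in $\flapp t$ frames and $\done v$ entries.

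The delicate point is the $\done v$ entries read back by (\ref{tr:4}) and (\ref{tr:8}), which might seem to turn a value into code. They only ever yield values, never closures, so the invariant is preserved; Lemma~\ref{lem:shape} can be used to confirm the stack shapes involved. Finally, since $\tlam x t$ is a subterm of $t_0$, the body $t$ is a proper subterm, and $\Phit$ is monotone under the subterm relation (every constructor adds a positive constant). This gives $\Phit(t) < \Phit(t_0)$, which proves (b).
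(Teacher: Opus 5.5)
Your proposal is correct and follows the paper's proof. Both unfold $\Phik(k)$ and $\Phik(k')$ in the same way, bound the store potential by $\Phih(k') \le \Phih(k) + \Phit(t_2)$, and use the fact that the applied abstraction is a subterm of $t_0$. The only difference is that you prove this subterm fact by a direct induction over RKNL transitions (essentially the subterm property, Lemma~\ref{lem:subterm}, which the paper states later), whereas the paper appeals to the ghost machine RKNLi; your version is self-contained and shows the shape invariant is not needed here, because (\ref{tr:6}) only fires on annotated closures and constructed normal forms never become closure code.
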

\begin{proof} Table~\ref{fig:shape_invariant} shows that abstractions
  are only constructed when rebuilding normal forms, and they cannot
  be applied.  Therefore, any applied abstraction has to be a~subterm
  of the source term, hence 
  $\Phit(t_0) > \Phit(t)$, where $t$ is the body of an applied
  abstraction in transition~(\ref{tr:6}). Looking closely at the
  respective stores in both configurations, we can see that
  $\Phih(k') \le \Phih(k) + \Phit(t_2)$, as $t_2$ is passed as
  a~function argument in $k$ and not yet evaluated in $k'$. Thus, we
  have
  \[\begin{array}[t]{rcl}
  \Phik(k) + \Phit(t_0) &=& \Phik(\cconf \sigma {\abs {\tlam x t} e {\ell}} {\cons {\frclos{t_2} {e_2}} s}) + \Phit(t_0)\\
                        &=& 1 + (2 + \Phis(s) + \Phit(t_2)) + \Phih(k) + \Phit(t_0) \\
                        &>& \Phit(t) + \Phis(s) + (\Phih(k) + \Phit(t_2))\\
&\ge&\Phik(\econf {\alloc \sigma {\ell_2} {\tobedone {\clos{t_2} {e_2}}}} t {\update e x {\ell_2}} s)\\
 &=& \Phik(k')
\end{array}
\]\end{proof}

We can summarize the two lemmas above by the following theorem stating that the machine
simulates the normal-order strategy in a bilinear number of steps,
\ie, linear in the number of $\beta$-reductions and linear in the size $|t_0|$
of the input term $t_0$---because the potential of the input term is bounded
by its size times a constant factor.

\begin{theorem}\label{thm:bilinearity}
Let $\rho$ be a sequence of consecutive machine transitions starting
from term $t_0$ to configuration $k'$. Let
$|\rho|$ be the number of steps in $\rho$
and $|\rho|_\beta$ be the number of normal-order $\beta$-reductions from $t_0$ to $\deck{k'}$.
Then ${|\rho| \leq (|\rho|_\beta + 1) \cdot \Phit (t_0)}$.
\end{theorem}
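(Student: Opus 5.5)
The proof is a potential-function argument built on Lemma~\ref{lem:decrease} and Lemma~\ref{lem:increase}, with the number of $\beta$-transitions linked to $|\rho|_\beta$ through the soundness lemmas.

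\textbf{Counting $\beta$-transitions.} First we relate the transitions of $\rho$ to normal-order steps. By Lemma~\ref{lem:beta}, each transition~(\ref{tr:6}) decodes to exactly one normal-order $\beta$-step. By Lemma~\ref{lem:overhead} and Lemma~\ref{lem:alpha}, transitions other than (\ref{tr:4}), (\ref{tr:6}), (\ref{tr:8}) leave the decoding unchanged up to $=_\alpha$. By Lemma~\ref{lem:bypass}, the bypass transitions (\ref{tr:4}) and (\ref{tr:8}) replay a normal-order reduction sequence at once.

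Since normal order is deterministic and commutes with $\alpha$-conversion, composing these facts along $\rho$ shows $\deck{k_0} \stackrel{\no}{\twoheadrightarrow} =_\alpha \deck{k'}$ in some number $m$ of steps. We will then argue $m \ge$ the number $b$ of (\ref{tr:6})-transitions in $\rho$, because bypass transitions contribute a nonnegative count. Hence $b \le |\rho|_\beta$. Determinism of normal order makes the step count from $t_0$ to $\deck{k'}$ well defined up to $\alpha$, and it equals $m$ by Lemma~\ref{lem:load}.

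\textbf{Splitting $\rho$.} Cut $\rho$ at its (\ref{tr:6})-transitions into $b+1$ maximal segments of non-$\beta$ transitions $\rho_0,\dots,\rho_b$, with the $i$-th $\beta$-transition between $\rho_{i-1}$ and $\rho_i$. Potentials are nonnegative naturals, so by Lemma~\ref{lem:decrease} a segment starting in configuration $k$ has length at most $\Phik(k)$.

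\textbf{Amortization over the whole run.} Segment-by-segment bounds do not add up to the claim, so instead we telescope over $\rho$. Write $\Delta_j = \Phik(k_j)-\Phik(k_{j+1})$. For non-$\beta$ steps $\Delta_j \ge 1$ by Lemma~\ref{lem:decrease}. For $\beta$ steps $\Delta_j \ge 1-\Phit(t_0)$ by Lemma~\ref{lem:increase}, using that the inequality there is strict. Summing gives
\[\Phik(k_0) - \Phik(k') \;\ge\; (|\rho| - b) + b\,(1-\Phit(t_0)) \;=\; |\rho| - b\,\Phit(t_0).\]
The initial configuration has $\Phik(k_0)=\Phit(t_0)$, since the stack is empty and the store potential is $0$, and $\Phik(k')\ge 0$. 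Therefore $|\rho| \le \Phit(t_0) + b\,\Phit(t_0) = (b+1)\Phit(t_0) \le (|\rho|_\beta+1)\Phit(t_0)$.

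\textbf{Main obstacle.} The hard step is justifying $b \le |\rho|_\beta$ rigorously. We need that each (\ref{tr:6}) yields a genuine normal-order step in the chain of decodings, and that bypass steps never undo steps. Since every lemma gives forward reductions, concatenation only adds steps. The remaining subtlety is interleaving $=_\alpha$ with $\stackrel{\no}{\to}$ so that the counts compose; Example~\ref{ex:alpha} and commutation of $\beta$ with $\alpha$ handle this.
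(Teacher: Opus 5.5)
Your proof is correct and follows the paper's approach: telescoping Lemma~\ref{lem:decrease} and Lemma~\ref{lem:increase}, which the paper leaves implicit when it says the theorem ``summarizes'' them. You also make explicit a step the paper glosses over, namely that the number $b$ of (\ref{tr:6})-transitions is at most $|\rho|_\beta$ because bypass transitions only add normal-order steps. One correction: determinism does not make the step count from $t_0$ to $\deck{k'}$ well defined when normal order cycles up to $\alpha$. For example, $\Omega \stackrel{\no}{\to} \Omega$, so the minimal count stays $0$ while the machine runs forever. Hence $|\rho|_\beta$ must be read as the length $m$ of the reduction sequence obtained by composing the decodings along $\rho$, which is all your argument actually uses.
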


\subsubsection{Implementation in the RAM model}
\label{sec:ram}

Theorem~\ref{thm:bilinearity} shows that our machine simulates the
normal-order strategy in a bilinear number of steps. In this section
we show that each step in this simulation can be implemented on a
random-access machine in time logarithmic in the size of the input
term, so the overall simulation cost is quasibilinear. More
precisely, in the RAM model the running time of our simulation is
linear in the number of $\beta$-steps and quasilinear (i.e., of order
$n \log n$) in the size of the initial term. This improves the result of 
Accattoli et al.~\cite{Accattoli16,DBLP:conf/lics/AccattoliCC21}, who 
proved that normal-order strategy can be simulated with Useful Milner
Abstract Machine with quadratic overhead.

The key observation is the following lemma, which is a simple
consequence of the distinction between the syntactic category of
\textit{Terms} (ranged over by $t$) and the categories of
\textit{Normal Terms} and \textit{Neutral Terms} (ranged over by $n$
and $a$, respectively) in the ghost machine. The only possibility to
introduce a new term to a configuration of the machine is to load
it. The loaded term may be then decomposed into subterms by
transitions~(\ref{tri:1}), (\ref{tri:6}) and~(\ref{tri:7}); these
subterms may be copied into the stack (transition~(\ref{tri:1})) or to
the store (transition~(\ref{tri:6})) and then read from there by
transitions (\ref{tri:3}), (\ref{tri:6}) and~(\ref{tri:9}), but terms
in this category are never constructed. New terms always belong to the
categories of \textit{Normal Terms} or \textit{Neutral Terms}, and are
constructed only by transitions (\ref{tri:7}) (new variables),
(\ref{tri:9a}), (\ref{tri:10}) and~(\ref{tri:11}). The second part of
the lemma follows from the first one and Lemma~\ref{lem:shape}.

\begin{lemma}[subterm property]\label{lem:subterm} Every term in the
  syntactic category of \textit{Terms} that occurs in any reachable
  configuration of RKNLi is a subterm of the source term. Every
  term~$t$ that occurs in a closure $\clos t e$ that occurs in any
  reachable configuration of RKNL is a subterm of the source term.
\end{lemma}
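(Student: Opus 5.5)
We prove the first part by induction on the length of the execution of RKNLi from the initial configuration. The invariant $I(k)$ is: every term of category \textit{Terms} occurring anywhere in $k$ is a subterm of the source term $t_0$. This covers terms in the focus of $\etrian$-configurations, in closures $\clos t e$ (including those inside frames $\frapp c$ on the stack and inside $\tobedone c$ entries of the store $\sigma$), and in abstraction values $\abs{\tlam x t} e {\ell'}$ (in the focus and in $\done v$ entries of $\sigma$).

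Terms that are only \textit{Normal Terms}, \textit{Neutral Terms}, or bound names in $\flam x$ frames are excluded by the grammar of Table~\ref{fig:shape_invariant}. So $\sigma'$ and the frames $\flapp a$, $\lceil\hole\rceil$ and $\flam x$ need no checking. Note that environments map identifiers to locations and contain no terms.

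\textbf{Base case.} The initial configuration $\econf {\nil,\nil} {t_0} \nil {\cons {\lceil \hole \rceil} \nil}$ contains only $t_0$.

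\textbf{Inductive step.} We go through the transitions of Table~\ref{fig:shape_invariant} one by one. In each case, every \textit{Terms}-occurrence in the target configuration is either carried over unchanged from the source, or is an immediate subterm of a \textit{Terms}-occurrence in the source. Subterms of subterms of $t_0$ are subterms of $t_0$.
\begin{itemize}
\item (\ref{tri:1}) splits $\tapp{t_1}{t_2}$ into $t_1$ and the closure $\clos{t_2}e$.
\item (\ref{tri:2}) moves $\tlam x t$ into an abstraction value.
\item (\ref{tri:3}) reads $\clos t {e_2}$ from $\sigma$.
\item (\ref{tri:4}) reads a value $v$ from $\sigma$. If $v$ is an abstraction value, its term was already in the store. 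If $v=x$ is a free variable, it is a \textit{Neutral Term}, not in \textit{Terms}.
\item (\ref{tri:5}) copies $v$ into $\sigma$.
\item (\ref{tri:6}) takes the body $t$ of the abstraction value and moves $\clos{t_2}{e_2}$ from the stack to the store.
\item (\ref{tri:7}) takes the body $t$ and stores $\done{\fresh x}$. Here $\fresh x$ is a neutral variable, so no new \textit{Terms} appear.
\item (\ref{tri:9}) moves $t_2$ from a frame to the focus.
\item (\ref{tri:5'}), (\ref{tri:8}), (\ref{tri:9a}), (\ref{tri:10}) and (\ref{tri:11}) create or move only normal or neutral terms.
\end{itemize}

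The second part follows from Lemma~\ref{lem:shape}. A reachable RKNL configuration is the projection of a reachable RKNLi configuration. Projection erases only $\lceil\hole\rceil$ frames and $\lceil\cdot\rceil$ tags, and merges the two stores. Hence each closure $\clos t e$ in the RKNL configuration is the projection of a closure of RKNLi whose first component has category \textit{Terms}, because the grammar $c ::= \clos t e$ is the same in both machines. By the first part, $t$ is a subterm of $t_0$.

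\textbf{Main obstacle.} The only delicate point is being careful about categories. Values $v$ in RKNLi include neutral terms $a$, and RKNL values also contain arbitrary terms $t$. We must make sure that no neutral or normal term is ever put back into a closure, because the projection identifies the categories. Checking the grammar shows closures are formed only in (\ref{tri:1}), from a subterm of the focus, and are otherwise only moved by (\ref{tri:3}), (\ref{tri:6}) and (\ref{tri:9}). This settles it.
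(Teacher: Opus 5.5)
Your proof is correct and follows the paper's argument. The paper also observes that in RKNLi, terms of category \textit{Terms} are only loaded, decomposed by (1), (6), (7), and copied or read by (1), (3), (6), (9), and are never constructed; new terms are always normal or neutral and are built only by (7), (9a), (10), (11). It then transfers the closure statement to RKNL via Lemma~\ref{lem:shape}, exactly as you do.

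One small imprecision is in your closing remark: closures are not formed only in (1). Transitions (6) and (7) also form the body closure with the extended environment, and (2) packs a closure into an abstraction value. Your transition-by-transition case analysis already handles all of these correctly.
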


The only non-local operations performed by the RKNL machine are on
environments and on the store.  The store is never duplicated, so it
can be represented in the standard way by heap-allocated locations; we
can therefore assume that store operations run in constant time.

By Lemma~\ref{lem:subterm} environments are finite dictionaries
associated only with subterms of the source term. Consequently, their
size is bounded by the number of variables occurring in the source
term. A balanced tree implementation yields logarithmic time
complexity for operations on environments, with respect to the number
of entries.  Alternatively, a trie-based implementation yields time
proportional to the length of the longest identifier in the source
term. This gives the following corollary.

\begin{corollary}
  RKNL is a reasonable abstract machine. It can be simulated on a
  random-access machine running in time
  $O((|\rho|_\beta + 1) \cdot |t_0| \log |t_0|)$, where $t_0$ is the
  initial term and $|\rho|_\beta$ is the number of $\beta$-reductions in the
  normal-order normalization of $t_0$.
\end{corollary}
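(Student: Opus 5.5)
The corollary combines Theorem~\ref{thm:bilinearity} with a per-step cost bound in the RAM model. The plan is to bound the number of machine steps and the cost of each step separately, then multiply.

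\emph{Step count.} Theorem~\ref{thm:bilinearity} gives $|\rho| \le (|\rho|_\beta+1)\cdot\Phit(t_0)$. By the definition of $\Phit$ (constants $2$, $3$, $4$ per constructor) a straightforward induction on $t_0$ yields $\Phit(t_0)\le 4|t_0|$. Hence $|\rho| = O((|\rho|_\beta+1)\cdot|t_0|)$. Here $|\rho|_\beta$ counts normal-order steps between $t_0$ and $\deck{k'}$. By Theorem~\ref{thm:sound} and Lemmas~\ref{lem:beta} and~\ref{lem:step_soundness}, every transition~(\ref{tr:6}) decodes to exactly one normal-order step and the remaining transitions decode to at most bypass sequences. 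So it suffices to state the bound in terms of the number of $\beta$-steps of the normal-order normalization of $t_0$. I will note that the bypass transitions only shortcut reductions already counted, so the count does not exceed the normal-order normalization length.

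\emph{Per-step cost.} I will go through transitions (\ref{tr:1})--(\ref{tr:11}) and check that each performs $O(1)$ pointer operations, plus at most one environment lookup or extension. Under the shared, persistent representation of Remark~\ref{rem:persistence}:
\begin{itemize}
\item terms, closures, stacks and environments are passed by reference;
\item frames are pushed and popped in constant time;
\item the store is a single heap written at most once per location (\emph{to do} $\to$ \emph{done}), so allocation, lookup and update are $O(1)$;
\item fresh names in (\ref{tr:7}) come from a counter.
\end{itemize}
The only non-constant operations are $\lookup e x$ in (\ref{tr:3}) and (\ref{tr:4}), and $\update e x {\ell_2}$ in (\ref{tr:6}) and (\ref{tr:7}). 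By Lemma~\ref{lem:subterm}, every closure's term is a subterm of $t_0$, so every environment binds only variables of $t_0$ and has at most $|t_0|$ entries. A persistent balanced search tree then gives $O(\log|t_0|)$ per operation. Comparing identifiers must also be made $O(1)$; I will do this by interning source identifiers as integers at load time, which costs $O(|t_0|\log|t_0|)$ once.

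\emph{Conclusion and main obstacle.} Multiplying gives total time $O((|\rho|_\beta+1)\cdot|t_0|\log|t_0|)$, polynomial in both parameters, so the machine is reasonable. Load and unload add only linear overhead: the output is a shared DAG built by (\ref{tr:10}) and (\ref{tr:11}) in constant time per step. The main obstacle is justifying constant-time term construction despite possibly exponential normal forms. This requires insisting, as in Remark~\ref{rem:persistence}, that outputs are represented with sharing, and that the RAM word size suffices for pointers and location counters. Since the number of locations is at most $|\rho|$, logarithmic word size suffices.
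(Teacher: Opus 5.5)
Your proposal is correct and is essentially the paper's argument. Theorem~\ref{thm:bilinearity}, with $\Phit(t_0)=O(|t_0|)$, bounds the number of steps. Each step then costs $O(\log|t_0|)$: the only non-constant operations are environment lookups and extensions, which run on balanced trees whose size is bounded via Lemma~\ref{lem:subterm} by the number of source variables, and the never-duplicated store gives constant-time access. Your extra remarks on interning identifiers, logarithmic word size and shared output just spell out RAM-model assumptions the paper takes for granted via Remark~\ref{rem:persistence}.
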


In practice, the number of distinct source variables is small; if it
can be treated as a constant, then the $\log |t_0|$ factor collapses
to a constant.  Further improving practical performance would require
empirical profiling of the RKNL abstract machine or the corresponding
normalizer in Listing~\ref{fig:cbnd-normalizer}.

\subsubsection{Examples of the potential function development}\label{sec:empirical_potential}

We can inspect the behaviour of the potential function by plotting it over an execution.
Figure~\ref{fig:plot_of_example} displays the configuration and store potentials
for the running example of Table~\ref{fig:execution_ex_ref}.
For reference, the potentials of some relevant subterms are:
$\Phit(I) = 6,\;
 \Phit(\omega) = 11,\;
 \Phit(\Omega) = 25,\;
 \Phit(\tapp A \Omega) = 47,\;
 \Phit({\tlam {x} {{\tapp {{\tapp {c} {x}}} {x}}}}) = 16$.
The store potential counts
only those locations that actually occur in the configurations listed in
Table~\ref{fig:execution_ex_ref}. 
Observe that in this example the overall potential never increases:
each abstraction is either applied or normalized, but never both,
so its own potential budget pays for the cost of its corresponding
$\beta$-reduction.

\begin{figure}[h]
\begin{center}
\includegraphics[scale=.4]{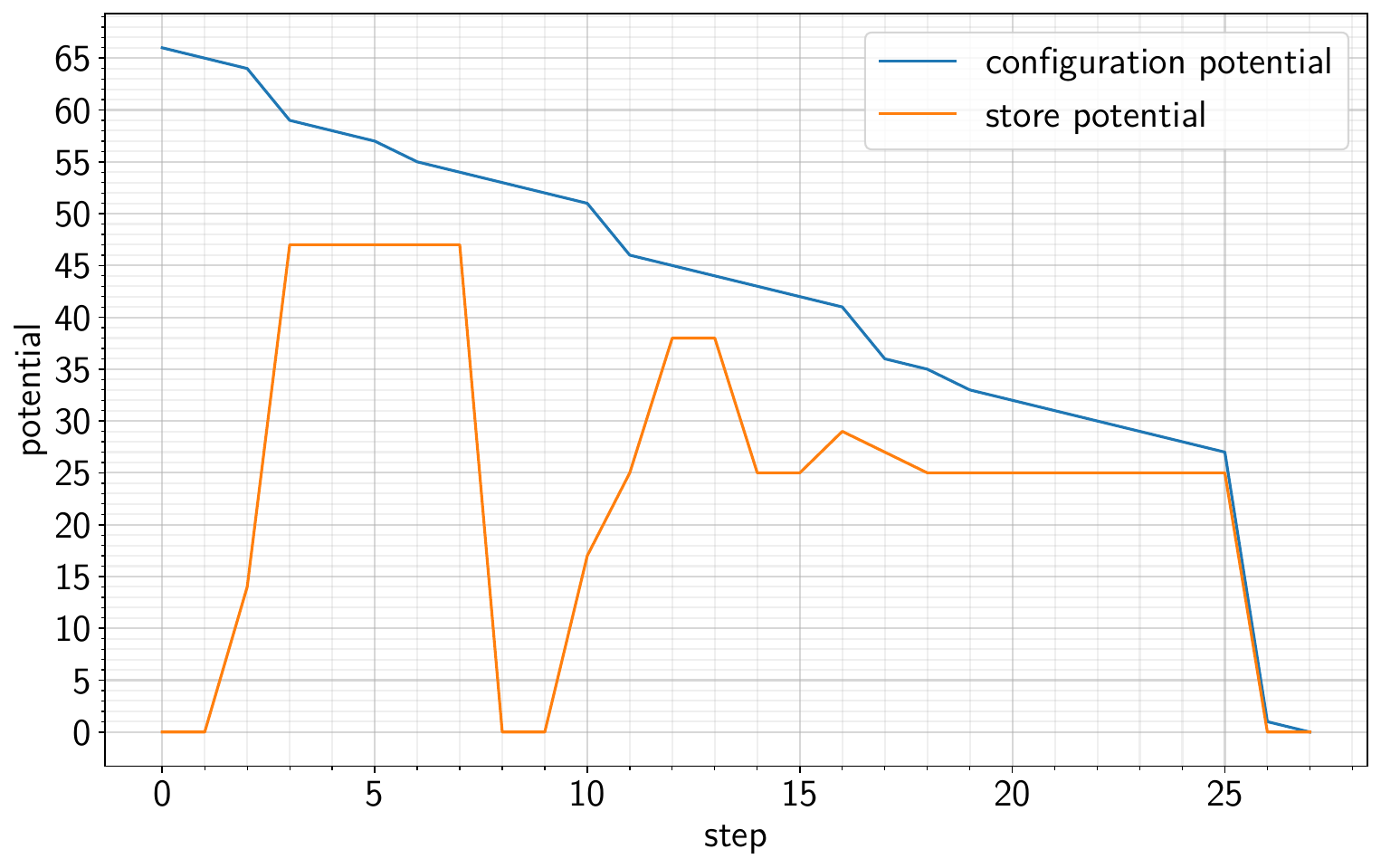}
\caption{Plot of potentials of configurations and stores
  for the execution of term
  ${\tapp {({\tlam {x} {{\tapp {{\tapp {c} {x}}} {x}}}})} {\big({\tapp {({\tlam {y} {{\tlam {z} {{\tapp {I} {z}}}}}})} {\Omega}}\big)}}$ as presented in
  Table~\ref{fig:execution_ex_ref}. }
\label{fig:plot_of_example}
\end{center}
\end{figure}

\begin{figure}[h]
\begin{center}
\includegraphics[scale=.4]{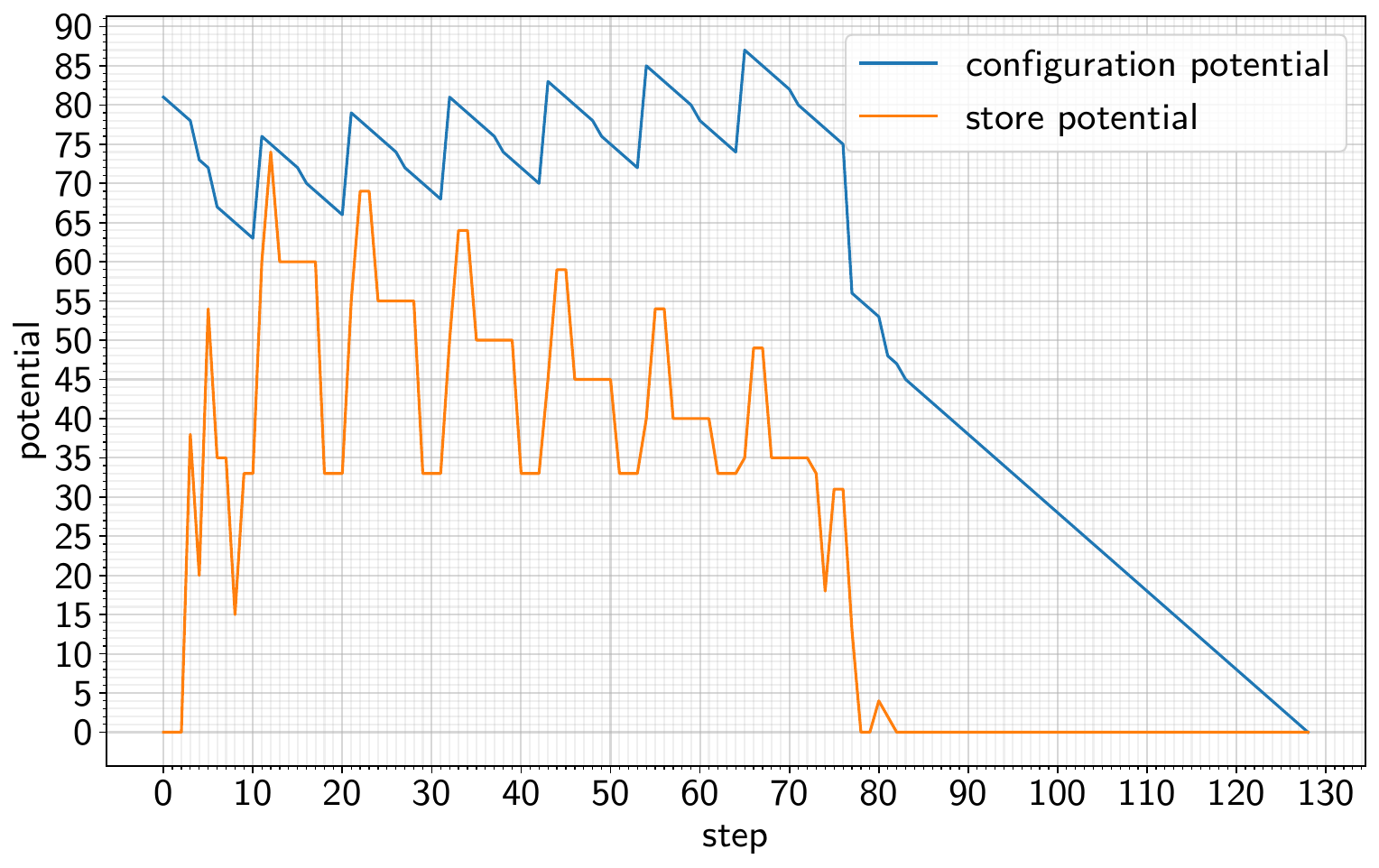}
\caption{Plot of potentials for $\tapp {\tapp {c_6} {\dub}} {(\tlam x {\tapp I x})}$}
\label{fig:plot_of_implosion}
\end{center}
\end{figure}

To see how potential can grow, consider the term
$\tapp {\tapp {c_6} {\dub}} {(\tlam x {\tapp I x})}$,
which belongs to the term family presented in Table~\ref{tab:numbers}.
Its execution trace is plotted in Figure~\ref{fig:plot_of_implosion}.
When compared with the corresponding plot in~\cite{Biernacka-al:PPDP21},
it exhibits similar phases of the computation: initial decrease,
followed by saw-tooth-like increase, and finally, a steady discharge.
In accordance with Table~\ref{tab:numbers}, the RKNL machine performs
$18\cdot 6 + 20 = 128$ transitions on this term.
The first two {$\beta$-transitions} (steps 4 and 6) instantiate
the arguments of the Church numeral $c_6$, but they do not increase the
overall potential.
The potential is increased by {$\beta$-transitions} at steps 11, 21, 32, 43, 54,
and 65, but each increase is bounded by 
$\Phit(\dub) = 20$, since $\dub$ is the function applied in those steps.

\begin{figure}[h]
\begin{center}
\includegraphics[scale=.4]{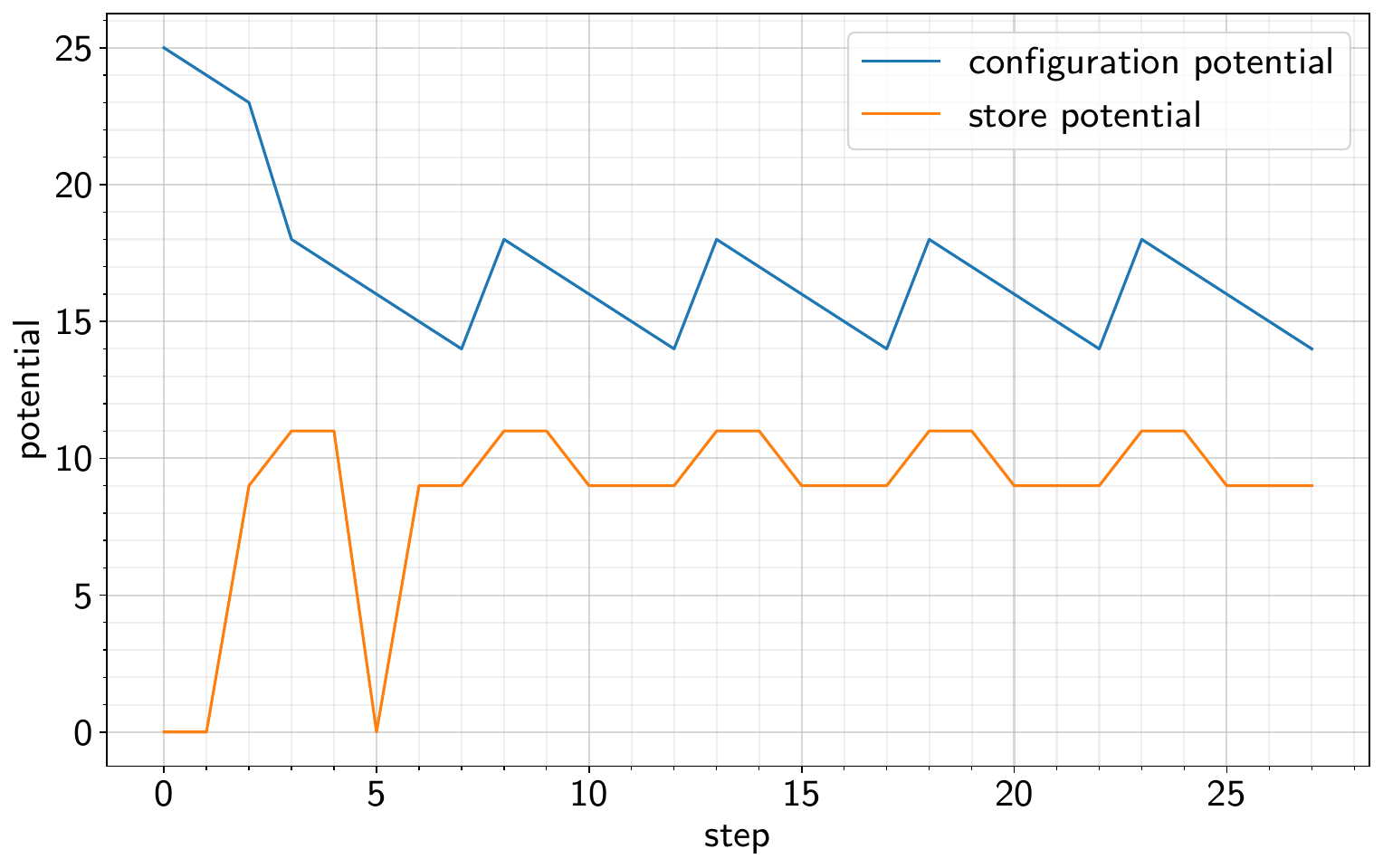}
\caption{Plot of potentials for initial configurations of the evaluation
of the term $\Omega$}
\label{fig:plot_of_omega}
\end{center}
\end{figure}

Finally, arbitrarily many periodic increases of the potential can be observed,
for example, in the normalization of the divergent term $\Omega$.
The start of this periodic behaviour is
presented in Figure~\ref{fig:plot_of_omega}.

\subsection{Completeness}

The final ingredient in the correctness argument is to rule out stuck configurations and 
silent loops. We capture this in the following lemma.

\begin{lemma}[step completeness]\label{lem:step_completeness} If $k$
  is a reachable configuration and $\deck k \stackrel{\no}\to t'$,
  then there exists $k' \neq k$ such that
  $k \twoheadrightarrow k'$
  and $t' \stackrel{\no}{\twoheadrightarrow}=_\alpha \deck {k'}$.
\end{lemma}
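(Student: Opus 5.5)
The argument has two parts. First, reachable configurations whose decoding is not in normal form are never stuck or terminal. Second, the machine cannot run forever without performing a transition~(\ref{tr:6}). Given these, we run the machine from $k$ until the first transition~(\ref{tr:6}), $k_1 \stackrel{(\ref{tr:6})}{\to} k'$, and take this $k'$. We then need $t' \stackrel{\no}{\twoheadrightarrow}=_\alpha \deck{k'}$, where $t'$ is the unique normal-order reduct of $\deck k$; recall that $\stackrel{\no}{\to}$ is deterministic because the decomposition into a normal-order context and a redex is unique.

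\emph{Progress and termination of silent runs.} By Lemma~\ref{lem:shape}, every reachable configuration is a projection of a reachable RKNLi configuration. We check by case analysis on the RKNLi grammar that every such configuration other than $\cpconf{\sigma,\sigma'} n \nil$ has an applicable transition. Transition~(\ref{tri:3}) needs $e(x)$ to be either \emph{to do} with a closure, or \emph{done}, or undefined. The \emph{to do}/$\bot$ case cannot occur for argument locations, since only $\sigma'$ holds $\tobedone\bot$. A location $e(x)$ that is currently under evaluation, i.e.\ whose frame $\fcache{\ell}$ is on the stack, would be a black hole; I would argue that this decodes to a non-normalizing cycle, so it is excluded by the hypothesis only if it never happens. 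The cleanest route is to note that the store contents are never overwritten, so the stuck case would contradict the determinism of decoding. This is the delicate point. A terminal configuration decodes to a normal form by Lemma~\ref{lem:unload}, which contradicts $\deck k \stackrel{\no}\to t'$. Hence the machine keeps stepping. By Lemma~\ref{lem:decrease}, the potential $\Phik$ strictly decreases along non-(\ref{tr:6}) transitions, so after finitely many steps a transition~(\ref{tr:6}) must fire. This gives $k \twoheadrightarrow k_1 \to k'$ with $k'\neq k$.

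\emph{Relating the decodings.} Along $k\twoheadrightarrow k_1$, every transition is either overhead (Lemma~\ref{lem:overhead}), alpha (Lemma~\ref{lem:alpha}), or bypass (Lemma~\ref{lem:bypass}). Overhead and alpha transitions preserve the decoding up to $=_\alpha$. A bypass transition may, however, perform normal-order steps. Thus $\deck k \stackrel{\no}{\twoheadrightarrow}=_\alpha \deck{k_1}$, with $\deck{k_1}\stackrel{\no}{\to}\deck{k'}$ by Lemma~\ref{lem:beta}. Altogether $\deck k \stackrel{\no}{\twoheadrightarrow}=_\alpha \deck{k'}$ via a sequence of at least one normal-order step. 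Since normal order is deterministic and commutes with $\alpha$-conversion, the first step of that sequence must be $\deck k\stackrel{\no}\to t'$. Therefore $t'\stackrel{\no}{\twoheadrightarrow}=_\alpha\deck{k'}$.

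\emph{Main obstacle.} The hardest step is establishing that this sequence has at least one step. It does, because Lemma~\ref{lem:beta} contributes exactly one strict step at $k_1 \to k'$. The remaining obstacle is therefore ruling out stuckness, in particular the black-hole case of transition~(\ref{tr:3}) and the $\tobedone\bot$ case. For these I would rely on the RKNLi separation of stores and on the invariant that a location with a pending $\fcache\ell$ frame is only looked up from within its own evaluation. Such a lookup would mean that its decoding contains itself as a proper subterm, which is impossible for finite terms.
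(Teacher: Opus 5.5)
There is a genuine gap at the step ``after finitely many steps a transition~(\ref{tr:6}) must fire''. Lemma~\ref{lem:decrease} only shows that a run without~(\ref{tr:6}) is finite. To conclude that such a run ends with a~(\ref{tr:6}), you must also rule out that it ends in a terminal configuration. You do this by applying Lemma~\ref{lem:unload} to $k$ alone. But the configurations after $k$ need not decode to an $\alpha$-variant of $\deck k$. By Case~2 of Lemma~\ref{lem:bypass}, a bypass transition~(\ref{tr:4}) or~(\ref{tr:8}) can itself account for normal-order steps: the redex visible in the history-based decoding may already have been contracted in the shared, memoized copy. The decoding can then become normal, and the machine halts without any further~(\ref{tr:6}). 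The paper's running example shows this. In step~24 of Table~\ref{fig:execution_ex_ref}, the configuration decodes to $\tapp{\tapp c {(\tlam{z_0}{z_0})}}{(\tapp{\overline A}{\overline\Omega})}$, which is not normal, yet the remaining transitions are~(\ref{tr:4}), (\ref{tr:8}) and~(\ref{tr:10}). A smaller instance is $\tapp{(\tlam x {\tapp{\tapp c x} x})}{(\tapp{(\tlam y y)}{d})}$ at the second occurrence of $x$. So your $k'$ need not exist.

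The fix is to stop at the first transition that accounts for at least one normal-order step: either a~(\ref{tr:6}), or a bypass whose source and target decodings are not $\alpha$-equivalent. Before that point every decoding is $=_\alpha \deck k$, hence not normal. By Lemma~\ref{lem:unload} and the exhaustive patterns of RKNLi, the machine is therefore neither terminal nor stuck. All steps in this phase are non-(\ref{tr:6}) steps, so Lemma~\ref{lem:decrease} makes it finite. Your determinism argument then goes through unchanged.

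This is essentially the paper's route: it stops at the first of~(\ref{tr:4}), (\ref{tr:6}), (\ref{tr:8}) rather than at the first~(\ref{tr:6}). Strictly, bypasses that leave the decoding unchanged up to $\alpha$ (such as reading a free variable or a variable renamed by~(\ref{tr:7})) should also be skipped. This is harmless, since they too decrease $\Phik$.

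Finally, the ``black hole'' case you worry about is not a stuck case at all. Transition~(\ref{tr:3}) fires whenever $\lookup{\sigma}{\lookup{e}{x}}$ is a to-do closure, without inspecting the stack. The only genuine stuck candidate, $\tobedone\bot$ at an environment location, is excluded by the separation of $\sigma$ and $\sigma'$ in RKNLi. Your appeal to ``determinism of decoding'' there is not an argument, and none is needed.
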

\begin{proof}
  The pattern matching of the machine is exhaustive, which is best
  visible in Table~\ref{fig:shape_invariant}, and thus the machine
  can never get stuck.
  Since $\deck k$ is not a normal form (of normal order), by Lemma~\ref{lem:unload},
  $k$~is not a terminal configuration, and thus $k$ has to perform a transition.
  Lemmas~\ref{lem:overhead} and~\ref{lem:alpha} imply that the machine cannot
  reach the terminal configuration without performing one
  of the transitions~(\ref{tr:4}), (\ref{tr:6}) or~(\ref{tr:8}). By
  Lemma~\ref{lem:decrease}, such a transition will be performed
  after a finite number of steps; let $k'$ be the resulting
  configuration. Since normal order is deterministic, Lemma~\ref{lem:step_soundness} guarantees
  $t' \stackrel{\no}{\twoheadrightarrow}=_\alpha \deck {k'}$, as required.
\end{proof}

\noindent
The lemma can be extended to the existence of a normal form, as follows:

\begin{theorem}[completeness]
  If a (possibly open) term $t_0$ reduces in zero or more steps to a
  normal form $t$ (i.e.,
  $t_0 \, {\twoheadrightarrow_\beta}\, t \!\nrightarrow_\beta$), then
  the machine starting from $t_0$ computes $t$ (i.e., there exist
  $t'$ and $\sigma$ such that $t =_\alpha t'$ and
  $\econf \nil {t_0} \nil \nil \twoheadrightarrow \cconf \sigma {t'}
  \nil$).
\end{theorem}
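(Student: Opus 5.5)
The proof combines the standardization property of normal order with step completeness (Lemma~\ref{lem:step_completeness}), bilinearity (Theorem~\ref{thm:bilinearity}), and soundness (Theorem~\ref{thm:sound}).

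First, by the classical normalization theorem for the $\lambda$-calculus, if $t_0 \twoheadrightarrow_\beta t \nrightarrow_\beta$ then the normal-order strategy reaches $t$. That is, there is a finite sequence $t_0 = u_0 \stackrel{\no}{\to} u_1 \stackrel{\no}{\to} \cdots \stackrel{\no}{\to} u_m =_\alpha t$, and normal order is deterministic. Let $k_0 = \econf \nil {t_0} \nil \nil$; by Lemma~\ref{lem:load}, $\deck{k_0} =_\alpha t_0$.

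Next we argue that the machine cannot diverge. Suppose for contradiction that the run from $k_0$ is infinite. By Lemma~\ref{lem:decrease}, every maximal block of non-(\ref{tr:6}) transitions is finite, since $\Phik$ is a natural number that strictly decreases along it. Hence an infinite run contains infinitely many transitions (\ref{tr:6}). By Lemma~\ref{lem:beta}, each of these is decoded to exactly one normal-order step. Every other transition decodes to zero or more normal-order steps up to $\alpha$ (Lemma~\ref{lem:step_soundness}). Because normal order is deterministic and commutes with $\alpha$-conversion, the decodings along the run trace the unique normal-order sequence from $t_0$. That sequence has only $m$ steps and ends in a normal form from which no step is possible. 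This contradicts infinitely many $\beta$-steps, so the run is finite. The main obstacle is this bookkeeping: we must justify that the normal-order steps attributed to bypass transitions (Lemma~\ref{lem:bypass}) are also genuine steps of the unique sequence, so the total count is bounded by $m$. Determinism of $\stackrel{\no}{\to}$ and compatibility of $\stackrel{\no}{\to}$ with $=_\alpha$ give this: each decoded configuration is $\alpha$-equivalent to some $u_i$, with $i$ nondecreasing along the run and strictly increasing at every transition (\ref{tr:6}).

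Finally, consider the last configuration $k$ of this finite run. It is reachable and has no successor. The exhaustiveness of the pattern matching of RKNLi (via Lemma~\ref{lem:shape}), as used in the proof of Lemma~\ref{lem:step_completeness}, shows that the only configurations without a transition have the form $\cconf \sigma {t'} \nil$. By Lemma~\ref{lem:unload}, $\deck k = t'$ is a $\beta$-normal form. By Theorem~\ref{thm:sound}, $t_0 \stackrel{\no}{\twoheadrightarrow}=_\alpha t'$. Uniqueness of normal forms (confluence, or simply determinism of normal order together with $u_m$ being normal) then gives $t' =_\alpha u_m =_\alpha t$, which completes the proof.
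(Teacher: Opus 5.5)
Your proof is correct, but it is organized differently from the paper's. The paper iterates Lemma~\ref{lem:step_completeness} along the normal-order path until the machine decodes to the normal form. It then uses the strict decrease of the potential to reach the terminal configuration, since no transition~(\ref{tr:6}) can fire from a configuration that decodes to a normal form.

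You never actually use step completeness, although your first sentence lists it. Instead you prove global termination by counting: there are at most $m$ transitions~(\ref{tr:6}) by Lemma~\ref{lem:beta}, step soundness and determinism, and the blocks between them are finite by Lemma~\ref{lem:decrease}. You then identify the output via Theorem~\ref{thm:sound} and uniqueness of normal forms. This is the standard ``soundness plus termination'' pattern for a deterministic machine. Your termination step is exactly Theorem~\ref{thm:bilinearity} with $|\rho|_\beta \le m$, so you could simply cite it: every run from $t_0$ has length at most $(m+1)\cdot\Phit(t_0)$.

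Your route buys robustness and explicitness. It needs no per-step progress claim, and such a claim has to cope with bypass transitions~(\ref{tr:4}) and~(\ref{tr:8}) that may correspond to zero normal-order steps. Examples are reading a free variable, reading a location filled by~(\ref{tr:7}), or reading the memoized normal form of an abstraction whose body was already normal. Your route also spells out why the output is $\alpha$-equivalent to $t$, which the paper leaves implicit. The paper's route, in exchange, isolates forward simulation as a standalone lemma that holds for every reachable configuration, whether or not the term normalizes.
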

\begin{proof}
  Because normal order is a complete strategy, we know from
  $t_0 \, {\twoheadrightarrow_\beta}\, t \!\nrightarrow_\beta$ 
  that
  $t_0 \, {\stackrel{\no}\twoheadrightarrow}\, t
  \!\nrightarrow_\beta$. Normal order is deterministic, so we can
  iterate Lemma~\ref{lem:step_completeness} along the normal-order
  reduction path.  Once the machine decodes to a normal form, the potential function
decreases strictly, ensuring that the terminal configuration is
reached after finitely many further steps.
\end{proof}

\subsection{Connection with weak call by need}
\label{sec:conservative}
In this section, we show that the strategy realized by our machine is a
conservative extension of the standard weak call-by-need evaluation. We
show that the machine RKNL extends the lazy variant KL of the Krivine
machine, which, according to~\citet{Danvy-Zerny:PPDP13}, is a canonical
implementation of the weak call-by-need evaluation strategy using actual
substitutions. KL is displayed in Table~\ref{fig:lazy_Krivine}.

\begin{table}[htb]
\caption{KL: lazy version of the Krivine abstract machine from~\cite{Danvy-Zerny:PPDP13}}
\label{fig:lazy_Krivine}
\begin{alignat*}{3}
\setcounter{equation}{0}
\mathit{Terms}       \ni && t   &::= x \alt \tapp{t_1}{t_2} \alt \tlam x t\\
\mathit{Values} \ni && v   &::= {\tlam x t} \\
\mathit{Contexts} \ni &\,& E   &::= \hole \alt \tapp E t  \alt {\Kfcache
  x}{E}\\
\mathit{Stores} \ni && \sigma   &::= \varepsilon \alt \sigma[x=t]\\
\mathit{Confs} \ni && k   &::=
  \termconf \sigma t E
  \alt \contconf \sigma v E\\
\mathit{Transitions:}\\[-2em]
\end{alignat*}%
\begin{align}
                            t &\mapsto \termconf \varepsilon t \Knil \nonumber\\
\termconf \sigma {\tapp{t_1}{t_2}} E &\to \termconf  \sigma  {t_1} {\Kcons{\frapp{t_2} } {E}} \tag{1}\label{kl:1}\\
\termconf \sigma {\tlam x t}     E &\to \contconf  \sigma  {\tlam x t}  E  \tag{2}\label{kl:2}\\
\termconf \sigma x               E &\to \termconf  \sigma  {t}   {\Kcons{\fcache
                                x}{E}} \;\;: t=\sigma (x), \; t\not \in \mathit{Values}
                                 \tag{3}\label{kl:3}\\
\termconf \sigma x               E &\to \contconf  \sigma {v} E
                                {\hspace{11mm}}: v = \sigma (x) 
                                   \tag{4}\label{kl:4}\\
  \contconf \sigma v {\Kcons {\fcache x} E}
                              &\to \contconf {\update \sigma x v} v E  \tag{5}\label{kl:5}\\
  \contconf \sigma {\tlam x t} {\Kcons{\frapp{t_2}}{E}}
                              &\to \termconf {\Kalloc \sigma {x'} {t_2}} {\Kupdate t x {x'}} E \;\;\;\;: x'\not\in\mathrm{dom}(\sigma)  \tag{6}\label{kl:6}\\
\contconf \sigma v \Knil &\mapsto \ansconf  \sigma v \nonumber
\end{align}
\end{table}

The relation between RKNL and KL is described by a direct
correspondence between the first six transitions of RKNL and
transitions of KL. Intuitively, for $i=1,\ldots,6$, as long as a value
is not reached, transition ($i$) of RKNL does the same as
transition ($i$) of KL, only using different term representation. The
essential difference is that RKNL is environment-based while KL
is substitution-based. 

Substitution in KL occurs only in transition~(\ref{kl:6}), where it is
used to $\alpha$-rename a variable with a fresh one representing a
location on the store. In RKNL, on the other hand, a separate
syntactic category of store locations is introduced, and the mapping
from variables to store locations is kept in the environment. Other,
inessential differences are that RKNL uses stack to represent contexts
(which are directly used in KL) and it stores some additional
information not used by KL. In the rest of this section we formalize
this correspondence.

Recall that weak call by need is a reduction strategy meant to
evaluate only closed terms.
For a given closed term $t$ by the \emph{weak evaluation} of $t$ we
mean the longest sequence of successive configurations of RKNL that
starts with loading $t$ and uses only transitions (\ref{tr:1}) to
(\ref{tr:6}). We refer to configurations occurring in this sequence as
to \emph{weak configurations}.

Table~\ref{fig:translation} shows the translation $\Kdeck{\cdot}$ of weak
configurations of RKNL to configurations of KL. It is based on
translations $\Kdecc{\cdot}$ of closures, $\Kdecs{\cdot}$ of stacks,
$\Kdecv{\cdot}$ of values and $\Kdecst{\cdot}$ of stores.

\begin{table}[ht]%
\caption{Translation of weak configurations of RKNL to configurations of
  KL}
\label{fig:translation}
\begin{equation*}
\begin{aligned}
\Kdecc{\clos x e} &:= \lookup e x\\
\Kdecc{\clos {\tapp {t_1} {t_2}} e} &:= \tapp {\Kdecc{\clos {t_1} e}} {\Kdecc{\clos {t_2} e}}\\
\Kdecc{\clos {\tlam x t} e} &:= \tlam x {\Kdecc{\clos t {\update e x x}}}\\
\end{aligned}
\qquad
\begin{aligned}[c]
\Kdecs{\nil} &:=\; \hole\\
\Kdecs{\cons {\frapp c} s}     &:=\; \plug {\Kdecs{s}} {\frapp{\Kdecc{ c}}}\\
\Kdecs{\cons {\fcache \ell} s} &:=\; \plug {\Kdecs{s}} {{\Kfcache \ell} \hole}
\end{aligned}
\end{equation*}

\begin{equation*}
  \begin{aligned}[c]
\Kdecv{\abs {\tlam x t} e \ell} := \Kdecc{\clos {\tlam x t} e}\\
\Kdecst{\sigma}(\ell):=\begin{cases}\Kdecc{c} \;:\sigma(\ell)= \tobedone c\\
                                    \Kdecv{v} \;:\sigma(\ell)= \done v\end{cases}
\end{aligned}
\qquad
\begin{aligned}[c]
 \Kdeck{\econf \sigma t e s} &:= \termconf {\Kdecst{\sigma}} {\Kdecc{\clos t e}}
 {\Kdecs s} \\
\Kdeck{\cconf \sigma   v s} &:= \contconf {\Kdecst{\sigma}} {\Kdecv{v}} {\Kdecs{s}}
\end{aligned}
\end{equation*}
\end{table}

The correctness of the translation follows from  several observations:
\begin{itemize}
\item All values occurring in a weak evaluation are annotated lambda
  abstractions of the form $\abs {\tlam x t} e
  \ell$. This is because free variables introduced by
  transition~(\ref{tr:4}) do not occur in closed terms,
  and all other types of values are introduced in
  transitions~(\ref{tr:7}),~(\ref{tr:10}) and~(\ref{tr:11}), which are
  not used in weak evaluations. Therefore, all values occurring in a
  weak evaluation can be translated with~$\Kdecv{\cdot}$.
\item Frames $\flapp t$ and $\flam
  x$ are introduced by transitions~(\ref{tr:9}) and~(\ref{tr:7}) and
  thus do not occur in weak evaluations. Therefore, all stacks
  occurring in a weak evaluation can be translated with $\Kdecs{\cdot}$.
\item The last configuration in a weak evaluation is a
  $\ctrian$-configuration. This is because in every
  $\etrian$-configuration one of
  transitions~(\ref{tr:1})--(\ref{tr:4}) is fireable.
\item The stack in the last configuration in a weak evaluation is
  empty. This is because with a nonempty stack a
  transition~(\ref{tr:5}) or~(\ref{tr:6}) is fireable. Therefore the
  translation of the last configuration in a weak evaluation is
  directly unloaded to an answer of KL.
\end{itemize}

The following lemma can be proved  by an easy induction.
\begin{lemma}
  If $k\to k'$ is a transition in a weak evaluation of a closed term in RKNL,
  then $\Kdeck{k}\to\Kdeck{k'}$ is a transition of KL.
\end{lemma}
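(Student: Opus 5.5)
The plan is a case analysis on the six transitions (\ref{tr:1})--(\ref{tr:6}) that may occur in a weak evaluation. For each case we compute the translations of the source and target configurations with Table~\ref{fig:translation} and check that they are related by the KL transition with the same number. Before the case analysis we fix a weak-evaluation invariant, proved by induction along the evaluation starting from the load of a closed term. The invariant says that for every closure $\clos t e$ occurring in a weak configuration (in focus, on the stack, or in the store), every free variable of $t$ lies in the domain of $e$, and $e$ maps it to a location allocated in $\sigma$. Three facts support this. Loading a closed term gives the empty environment. Transition~(\ref{tr:6}) extends $e$ by exactly the bound variable, mapped to a freshly allocated $\ell_2$. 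Transitions~(\ref{tr:1}) and~(\ref{tr:3}) only copy existing closures. Consequently $\Kdecc{\clos t e}$ is a closed KL term over store locations, and every location it mentions lies in $\mathrm{dom}(\Kdecst\sigma)$. The invariant also excludes the open-variable branch $x\notin e$ of (\ref{tr:4}), consistent with the observations listed before the lemma. We treat locations as KL variables, so that $\Kdecc{\clos x e}=e(x)$ is a variable in KL's sense.

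The structural cases are straightforward. For (\ref{tr:1}), $\Kdecc{\clos{\tapp{t_1}{t_2}}e}$ is an application, and pushing $\frapp{\clos{t_2}e}$ matches pushing $\frapp{\Kdecc{\clos{t_2}e}}$ by the definition of $\Kdecs\cdot$. For (\ref{tr:2}), the newly allocated location holds $\tobedone\bot$, which $\Kdecst\cdot$ must be taken to ignore; it lies outside the term-location space, and we restrict $\Kdecst\sigma$ to locations allocated by (\ref{tr:6}). The value $\abs{\tlam x t}e\ell$ then translates to the same abstraction, matching KL~(\ref{kl:2}). For (\ref{tr:3}) and (\ref{tr:4}), let $\ell=e(x)$. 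Then $\Kdecc{\clos x e}=\ell$. If $\sigma(\ell)=\tobedone{\clos t{e_2}}$, then $\Kdecst\sigma(\ell)=\Kdecc{\clos t{e_2}}$. To apply KL~(\ref{kl:3}) we need this term not to be a value, i.e.\ $t$ is not an abstraction. If $\sigma(\ell)=\done v$, KL~(\ref{kl:4}) applies directly, and the frame $\fcache\ell$ translates to $\Kfcache\ell\hole$. For (\ref{tr:5}), updating $\ell$ to $\done v$ translates to $\update{(\Kdecst\sigma)}\ell{\Kdecv v}$, which is KL~(\ref{kl:5}).

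The main case is (\ref{tr:6}). The source translates to $\contconf{\Kdecst\sigma}{\tlam x{\Kdecc{\clos t{\update e x x}}}}{\Kcons{\frapp{\Kdecc{\clos{t_2}{e_2}}}}{E}}$. KL~(\ref{kl:6}) then allocates a fresh $x'$ bound to the argument and substitutes $x'$ for $x$. Choosing $x'=\ell_2$, which is fresh in $\Kdecst\sigma$ because $\ell_2$ is fresh in $\sigma$, we need the substitution lemma
\[\Kupdate{\Kdecc{\clos t{\update e x x}}}{x}{\ell_2}=\Kdecc{\clos t{\update e x{\ell_2}}}.\]
We prove it by induction on $t$, using the translation of abstractions, which maps inner binders to themselves. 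Shadowing is handled by noting that $\update e y y$ overrides $x$ when $y=x$. Capture cannot occur, since $\ell_2$ is a location and never a source binder.

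The remaining obstacle is the side condition of KL~(\ref{kl:3}). KL forces a store entry only if it is not a value, whereas RKNL forces any $\tobedone{\clos t{e_2}}$, including the case where $t$ is an abstraction. In that case KL would use (\ref{kl:4}) while RKNL performs (\ref{tr:3}) followed by (\ref{tr:2}) and (\ref{tr:5}). We would first try to resolve this by showing that the two executions agree up to these administrative steps. If the lemma must hold exactly step for step, the alternative is to read Table~\ref{fig:lazy_Krivine} as Danvy and Zerny's variant, where (\ref{kl:3}) forces every stored term and the condition $t\notin\mathit{Values}$ serves only to distinguish the store entries that are already updated. We would check this point against~\cite{Danvy-Zerny:PPDP13} before fixing the formulation.
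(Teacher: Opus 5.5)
Your case analysis is what the paper's one-line ``easy induction'' has to mean. Your treatment of (\ref{tr:1}), (\ref{tr:2}), (\ref{tr:4}), (\ref{tr:5}) and (\ref{tr:6}) is sound:
the closedness invariant makes $\Kdecc{\cdot}$ well defined, the $\tobedone{\bot}$ entries must be dropped from $\Kdecst{\sigma}$, and the renaming lemma with $x'=\ell_2$ handles (\ref{tr:6}). State that lemma for environments that may also map identifiers to themselves, since the induction passes through $\update{e}{y}{y}$.

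The gap is exactly where you put it, but it cannot be closed: for KL as printed in Table~\ref{fig:lazy_Krivine}, the lemma is false. Take $\tapp{(\tlam x x)}{(\tlam y y)}$. After (\ref{tr:1}), (\ref{tr:2}), (\ref{tr:6}), RKNL is in $k=\econf{\sigma}{x}{[x\mapsto\ell_2]}{\nil}$ with $\sigma(\ell_2)=\tobedone{\clos{\tlam y y}{\nil}}$. Its (\ref{tr:3}) yields $k'=\econf{\sigma}{\tlam y y}{\nil}{\cons{\fcache{\ell_2}}{\nil}}$. But $\Kdeck{k}=\termconf{[\ell_2=\tlam y y]}{\ell_2}{\Knil}$, and there KL's (\ref{kl:3}) is disabled because $\tlam y y\in\mathit{Values}$. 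The only successor is $\contconf{[\ell_2=\tlam y y]}{\tlam y y}{\Knil}$ by (\ref{kl:4}), which is not $\Kdeck{k'}=\termconf{[\ell_2=\tlam y y]}{\tlam y y}{\Kcons{\fcache{\ell_2}}{\Knil}}$. In total, KL reaches its answer in 4 steps while the weak evaluation takes 6, so no lock-step statement can hold, whatever the translation.

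Your first repair is the right one, and it can be made precise. Whenever RKNL's (\ref{tr:3}) forces an entry $\tobedone{\clos{\tlam y t}{e_2}}$ at $\ell$, the machine deterministically continues with (\ref{tr:2}) and (\ref{tr:5}). It reaches a configuration $k'''$ with value $v=\abs{\tlam y t}{e_2}{\ell'}$, the original stack, and $\ell$ updated to $\done{v}$. Then $\Kdeck{k}\to\Kdeck{k'''}$ is an instance of KL's (\ref{kl:4}), because $\Kdecst{\cdot}$ sends both $\tobedone{\clos{\tlam y t}{e_2}}$ and $\done{v}$ to $\Kdecv{v}$ and ignores the new $\tobedone{\bot}$ entry at $\ell'$.

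So what can actually be proved is a stuttering simulation. Every weak step maps to one KL step, except that such a (\ref{tr:3})(\ref{tr:2})(\ref{tr:5}) block maps to a single (\ref{kl:4}) step. The block is forced and cannot be cut off by the end of the weak evaluation. Hence this still gives the bisimulation (up to stuttering) and the conservativity claimed after the lemma.

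Your second option is not a reading of the table. KL's store holds bare terms, so an unevaluated abstraction argument and an already-updated entry are indistinguishable. A (\ref{kl:3}) that forces exactly the unevaluated entries would need tagged store entries, which is a different machine from the one the lemma is about.
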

In consequence, the evaluation of a closed term by KL simulates a weak
evaluation of the same term by RKNL.  Having in mind that both
machines are deterministic and that both sequences stop at the same
moment, this simulation is a bisimulation. Since weak evaluation is
only an initial part of an evaluation by RKNL, we obtain that RKNL
conservatively extends KL. In other words, the strategy realized by
RKNL is a conservative extension of the call-by-need strategy.

A fuller picture could be drawn by relating RKNL to the Strong CbNd
strategy of~\citet{Balabonski-al:ICFP17}, which is
itself a conservative extension of weak call-by-need. A direct
comparison is non-trivial because the Balabonski et al. strategy
operates on a language with explicit substitutions, and any formal
correspondence would require a translation between the two syntactic
frameworks. A formal characterization of the strategy realized by RKNL
independently of its machine description remains an open problem we
identify as a natural direction for future work.

\section{Conclusion}
\label{sec:conclusion}
We have presented a simple and efficient abstract machine for the
Strong Call-by-Need strategy in the lambda calculus. The machine has
been derived from a no\-rma\-li\-zation-by-evaluation higher-order
functional program by means of a series of off-the-shelf program
transformation techniques, and it has been carried out in Racket,
almost automatically. We proved the expected properties of the
resulting machine: its soundness and completeness with respect to
normal order, and its reasonability. Specifically, we proved that the
number of steps in an execution is bilinear in the number of
$\beta$-steps in the corresponding reduction sequence and in the size
of the initial term.  In the proofs we use a ghost abstract machine
and potential-function techniques.

This work also confirms the versatility of the derivational approach
to semantics; its additional benefit is the relative simplicity of the
design and of reasoning about abstract machines. As future work, we
plan to adapt the present methodology to deconstruct and find
connections between other existing artefacts of strong call by need,
in particular Cr{\'e}gut's lazy KNL machine \citep{Cregut:HOSC07},
a~new, non-conservative strong call-by-need strategy proposed recently
by \citet{Balabonski-al:FSCD21}, and Accattoli and Leberle's study of
useful sharing by explicit substitutions in call by need
\citep{DBLP:conf/csl/AccattoliL22}.

\section*{Data availability statement}

The code accompanying this paper is available at the Zenodo repository \citep{AbstractMachinesWorkshop}.

\paragraph*{Acknowledgements.}
This paper is a revised and extended version of \cite{BiernackaCD22}.
The research is supported by the National Science
Centre of Poland, under grant number 2019/33/B/ST6/00289.
We thank Dariusz Biernacki and the anonymous reviewers
of the paper and of the software artefact for their helpful comments,
Maciej Buszka for making {\tt semt} available to us for
experimentation, Beniamino Accattoli for the encouragement to work on
strong call-by-need strategy, Tomasz Wierzbicki for teaching a course
based on Chris Okasaki's book \cite{Okasaki:99} at the University of
Wrocław, Kim-Ee Yeoh and other participants of ICFP for their
questions.

\bibliographystyle{ACM-Reference-Format}
\bibliography{mybib} 
\label{lastpage01}

\end{document}

%% file: macros.tex
\newcommand{\ie}{i.e.}
\newcommand{\eg}{e.g.}
\newcommand \alt{\;\;|\;\;}
\newcommand \subtype{<:}

\newcommand \cbn{\mathit{cbn}}
\newcommand \no{\mathit{no}}
\newcommand \NO{\mathit{NO}}
\newcommand \pred{\mathit{pred}}
\newcommand \pair{\mathit{pair}}
\newcommand \dub{\mathit{dub}}
\newcommand{\KI}{\rotatebox[origin=c]{180}{$K$}}
\newcommand{\rknle}{\text{RKNL$\!^{\smallsetminus(\ref{tr:8})}$}}

\newcommand{\nil}{[\,]}
\newcommand{\cons}[2]{{#1 :: #2}}

\newcommand{\tobedone}[1]{{#1}_{\mbox{\tiny \textcolor{red!80!black}\XSolidBrush}}}
\newcommand{\done}[1]{{#1}_{\mbox{\tiny\textcolor{green!50!black}\Checkmark}}}

\newcommand{\fun}[2]{{{#1} \to {#2}}}
\newcommand{\call}[2]{{{#1}({#2})}}
\newcommand{\lookup}[2]{{\call {#1} {#2}}}
\newcommand{\alloc}[3]{{{#1} \!\ast\! [{#2} \!\mapsto\! {#3}]}}
\newcommand{\update}[3]{{{#1}[{#2} \!:=\! {#3}]}}
\newcommand{\fresh}[1]{{\check{#1}}}
\newcommand{\plug}[2]{ {{#1}[{#2}]} }
\newcommand{\cut}[1]{ \text{\hl{${\big\langle{#1}\big\rangle}$}} }
\newcommand{\cache}[2]{{#1}\!:=\!{#2}}

\newcommand{\tlam}[2]{{\lambda{#1}. {#2}}}
\newcommand{\tapp}[2]{{{#1}\,{#2}}}
\newcommand{\clos}[2]{{( #1, #2 )}}
\newcommand{\abs}[3]{\cache{#3} {\clos{#1}{#2}}}

\newcommand{\hole}{\Box}
\newcommand{\flapp}[1]{{\tapp{#1}{\hole}}}
\newcommand{\frapp}[1]{{\tapp{\hole}{#1}}}
\newcommand{\frclos}[2]{\frapp{\clos{#1}{#2}}}
\newcommand{\fcache}[1]{\cache{#1}{\hole}}
\newcommand{\flam}[1]{{\tlam {#1} \hole}}

\newcommand \etrian{{\textcolor{blue}\triangledown}}
\newcommand \ctrian{{\textcolor{green!50!black}\vartriangle}}
\newcommand{\econf}[4]{{\langle {\clos {#2}  {#3}}, {#4}, {#1}\rangle_\etrian}}
\newcommand{\cconf}[3]{{\langle{#2}, {#3}, {#1}\rangle_\ctrian}}
\newcommand{\cpconf}[3]{{\langle{#2}, {#3}, {#1}\rangle_{\ctrian'}}}

\newcommand{\Kcons}[2]{{#2[#1]}}
\newcommand{\Kfcache}[1]{{#1}\!:=}
\newcommand{\Knil}{\Box}
\newcommand{\Kdec}[1]{\overline{#1}}
\newcommand{\Kdecc}[1]{\Kdec{#1}^{\mathsf{c}}}
\newcommand{\Kdeck}[1]{\Kdec{#1}^{\mathsf{k}}}
\newcommand{\Kdecv}[1]{\Kdec{#1}^{\mathsf{v}}}
\newcommand{\Kdecs}[1]{\Kdec{#1}^{\mathsf{s}}}
\newcommand{\Kdecst}[1]{\Kdec{#1}^{\mathsf{st}}}
\newcommand{\Kupdate}[3]{{{#1}[{#3}/{#2}]}}
\newcommand{\Kalloc}[3]{{{#1} [{#2} = {#3}]}}

\newcommand{\termconf}[3]{{\langle{#2}, {#3}, {#1}\rangle_{\mathit{term}}}}
\newcommand{\contconf}[3]{{\langle{#3}, {#2}, {#1}\rangle_{\mathit{cont}}}}
\newcommand{\ansconf}[2]{{\langle{#2}, {#1}\rangle_{\mathit{ans}}}}

\newcommand{\Phit}{\Phi_\text{t}}
\newcommand{\Phiv}{\Phi_\text{v}}
\newcommand{\Phis}{\Phi_\text{s}}
\newcommand{\Phih}{\Phi_\sigma}
\newcommand{\Phik}{\Phi_\text{k}}

\newcommand{\decoding}[2]{\underline{#1}_{\,#2}}
\newcommand{\decc}[1]{{\decoding {#1} {\mathsf{c}}}}
\newcommand{\decv}[1]{{\decoding {#1} {\mathsf{v}}}}
\newcommand{\decs}[1]{{\decoding {#1} {\mathsf{s}}}}
\newcommand{\deck}[1]{{\decoding {#1} {\mathsf{k}}}}

\newcommand{\nf}{n}
\newcommand{\neu}{a}

\newcommand{\rred}[2]{ {\stackrel{#1}{\twoheadrightarrow}_{#2}} }

\newcommand{\subst}[3]{\update{#3}{#1}{#2}}